\newcounter{bla}
 \journal{Computer Physics Communications}
\newcommand{\pd}{\partial}
\newcommand{\gog}{\mathfrak{g}}
\newcommand{\goh}{\mathfrak{h}}
\newcommand{\okappa}[1]{\mathcal{O}\left( \kappa^{#1} \right)}
\newtheorem{theorem}{Theorem}
\begin{document}

\begin{frontmatter}

\title{FeynGrav 4.0}

\author[a,b]{B. Latosh\corref{author}}

\cortext[author]{Corresponding author.\\\textit{E-mail address:} latosh@theor.jinr.ru}
\address[a]{Bogoliubov Laboratory of Theoretical Physics, JINR, Dubna 141980, Russia}
\address[b]{Dubna State University, Universitetskaya str. 19, Dubna 141982, Russia}

\begin{abstract}
  
  We present the new version of FeynGrav, a package that provides tools for working with Feynman rules for gravity models. The new version addresses two principal issues and improves the user experience. Firstly, we present a more sophisticated implementation of the BRST formalism for general relativity and quadratic gravity, producing a finite set of interaction rules between ghosts and gravitons. We also implement a higher-derivative gauge-fixing term for quadratic gravity. Secondly, we implement Feynman rules for Cheung-Remmen variables. These variables present the general relativity action in a polynomial form and produce a finite set of Feynman rules. Lastly, we introduce minor quality-of-life improvements to the package to enhance the user experience.


\noindent \textbf{NEW VERSION PROGRAM SUMMARY}

\begin{small}
  \noindent
      {\em Program Title:} FeynGrav\\
      {\em CPC Library link to program files:} (to be added by Technical Editor) \\
      {\em Developer's repository link:} https://github.com/BorisNLatosh/FeynGrav\\
      {\em Licensing provisions(please choose one):} GPLv3 \\
      {\em Programming language:} Wolfram Mathematica 10 and higher\\
      {\em Journal reference of previous version:} \cite{Latosh:2022ydd,Latosh:2023zsi,Latosh:2024lhl}\\
      {\em Does the new version supersede the previous version?:}*   \\
      The new version supersedes the previous version. The new version replaces a set of interaction rules between gravitons and the Faddeev-Popov ghost. The previous versions operated with an infinite set of rules, while the new version provides a finite set of interaction vertices. The new version also implements new interaction rules for special gravitational variables, resulting in a finite set of graviton vertices.
      {\em Reasons for the new version:*}\\
      The reason is the significant simplification of the calculation enabled by the newly implemented interaction rules.
      {\em Summary of revisions:}*\\
      Firstly, we revised the interaction between gravitons and Faddeev-Popov ghosts. We discovered a gauge fixing term that produces a finite set of the corresponding interaction vertices. We implemented this set of vertices in the new version. Secondly, we implemented a new higher derivative gauge fixing term that is widely used in gravity models with higher derivatives. Lastly, we implemented a set of interaction rules for the Cheung-Remmen variables. Feynman rules for such variables result in a finite set of interaction rules, which provides a huge simplification of calculations in the perturbation theory.
      {\em Nature of problem(approx. 50-250 words):}\\
      The perturbative approach to quantum gravity produces a theory with an infinite number of Feynman rules, which makes the theory extremely computationally challenging. In previous publications \cite{Latosh:2022ydd,Latosh:2023zsi,Latosh:2024lhl}, we found an algorithm which allows one to obtain such Feynman rules at any order of perturbation theory. However, it would be highly desirable to find a parametrisation of perturbative quantum gravity that would result in a theory with a finite set of Feynman rules, provided such a parametrisation exists.
      {\em Solution method(approx. 50-250 words):}\\
      We employ two new parametrisations that partially address the issues with an infinite number of Feynman rules. Firstly, we use geometric considerations to find a gauge fixing term within the BRST formalism that produces a finite set of Feynman rules for the theory. In that parameterisation, the ghost sector of the theory now has only a ghost propagator and a single ghost-graviton vertex. Therefore, we completely omitted the need to work with an infinite set of interaction rules for Faddeev-Popov ghosts. Secondly, we employed a non-linear parametrisation of general relativity. In that parameterisation, we also introduce an auxiliary field in such a way that the corresponding Lagrangian is polynomial in the field, which produces a finite set of Feynman rules. Therefore, the need to work with an infinite set of vertices is also eliminated for general relativity.

\end{small}
\end{abstract}
\end{frontmatter}

\section{Introduction}

Derivation of Feynman rules for gravity is a challenging task. Direct implementation of the perturbative approach to quantum gravity shows that the theory has an infinite number of Feynman rules \cite{DeWitt:1967yk,DeWitt:1967ub,DeWitt:1967uc,Iwasaki:1971vb,Veltman:1975vx,Donoghue:1994dn,Akhundov:1996jd,Prinz:2020nru,Latosh:2022ydd}. The previous publications \cite{Latosh:2022ydd,Latosh:2023zsi,Latosh:2024lhl} provided an algorithms that allow to calculate Feynman rules for all geometric theorises of gravity to any order. The algorithm was implemented in a package \texttt{FeynGrav} \cite{FeynGrav} based on the widely used \texttt{FeynCalc} package \cite{Shtabovenko:2023idz,Shtabovenko:2020gxv,Shtabovenko:2016sxi,Mertig:1990an} developed for \texttt{Wolfram Mathematica}.

This paper continues the development of such techniques. We focus on two particular aspects of development that are important from both the perspective of fundamental theory and computational efficiency.

First and foremost, we improve the Feynman rules for general relativity and quadratic gravity by employing the BRST formalism \cite{Becchi:1974md,Becchi:1974xu,Becchi:1975nq,Tyutin:1975qk,Faddeev:1967fc,Faddeev:1973zb}. In previous publications, we used the BRST formalism for general relativity \cite{Latosh:2023zsi}, but employed the Faddeev-Popov prescription for quadratic gravity \cite{Latosh:2024lhl}. On the practical level, both approaches are supposed to produce the same result, but the BRST formalism shows the underlying structure of the theory. We investigate the structure of the BRST complex and show that one can separate the gauge transformations associated with the true gauge redundancy of the gravitational field from transformations related to the freedom to choose a reference frame. We employed the standard perturbative expansion about a generic background spacetime. Separating a background geometry allowed us to distinguish the transformations associated with changes of coordinates in the background spacetime from the gauge transformations of the propagating metric perturbations. Such a separation naturally imposes constraints on the form of the gauge fixing term, leading to a significantly simpler structure of the interaction rules. With the specified gauge fixing term, the theory admits a single vertex describing the coupling between ghosts and gravitons. At the same time, vertices describing couplings between gravitons remain unaffected by the gauge fixing term. The structure of the Faddeev-Popov sector is in stark contrast with that obtained in previous publications, where ghosts admit an infinite number of interaction vertices.

We implemented the discussed BRST construction in general relativity and quadratic gravity. For quadratic gravity, we also address the higher derivative gauge fixing term. Such a gauge fixing term is widely used and found to be crucial for understanding the underlying structure of the theory \cite{Asorey:1996hz,Modesto:2017hzl}.

Secondly, we constructed Feynman rules for general relativity in the Cheung-Remmen variables \cite{Cheung:2017kzx}. These variables associate $\sqrt{-g} \, g^{\mu\nu}$ with the gravitational degrees of freedom. To be exact, variables $\sqrt{-g} \, g^{\mu\nu}$ were used long before the aforementioned publication. For instance, they were mentioned in the classical textbook \cite{Landau:1975pou}, the corresponding action was calculated back in \cite{Goldberg:1958zz}, and the first attempts to use such variables for quantisation trace back to \cite{Capper:1973pv} (see also \cite{buchbinder:1983a,McKeon:1994ds,Kalmykov:1994yj,Martellini:1997mu,Brandt:2015nxa,Brandt:2016eaj,McKeon:2020lqp,Brandt:2020vre,Brandt:2025lkd,Britto:2021pud}). We refer to these variables as ``Cheung-Remmen'' variables for two main reasons. Firstly, we used expressions from the corresponding paper \cite{Cheung:2017kzx}, which provides an extensive analysis of these variables. Secondly, many papers that considered such variables before, such as the aforementioned \cite{Capper:1973pv}, introduced auxiliary fields so that gravitons mix with them. The presence of such a term introduces additional complications to the perturbation theory, so we followed \cite{Cheung:2017kzx} where the mixing term is absent.

One introduces an auxiliary field, which brings the action into a form quadratic in the auxiliary field and linear in the derivatives of the Cheung-Remmen variables. After this, one performs a perturbative split of the Cheung-Remmen variables about the flat spacetime and introduces the corresponding BRST complex. The result is a set of Feynman rules of remarkable simplicity. It contains only three propagators, which are the propagator for metric perturbations, the propagator for the Faddeev-Popov ghost, and the propagator for the auxiliary field, which is pure algebraic. The interaction sector contains only four cubic vertices. They are a vertex for three metric perturbations; a vertex for two metric perturbations and one auxiliary field; a vertex for one metric perturbation and two auxiliary fields; and a vertex for one metric perturbation and two Faddeev-Popov ghosts.

Such a remarkable simplification comes with a price. Firstly, one can only perform the corresponding perturbative expansion about the flat spacetime. If one considers an arbitrary background, the theory receives an undesirable linear mixing between the auxiliary variable and the background metric. Secondly, there are no known generalisations of the Cheung-Remmen variables for higher derivative theories such as quadratic gravity or $f(R)$ gravity. Lastly, the propagating metric degrees of freedom admit a nonlinear algebraic relation with the standard metric perturbations. Consequently, the interaction between them and the matter degrees of freedom is also highly nonlinear and requires a separate study beyond the scope of this paper.

This paper is organised as follows. Section \ref{Section_BRST} discusses the BRST formalism for gravity. We introduce a background geometry to separate the true gauge transformations, associated with the gauge redundancy of the gravitational field, from the transformations associated with the freedom to choose a frame in the background geometry. We specify a gauge fixing term that produces a set of Feynman rules with a single cubic vertex describing the coupling between ghosts and gravitons. For quadratic gravity, we consider both the conventional and the higher derivative gauge fixing terms and present the corresponding set of Feynman rules for the ghost sector. Section \ref{Section_CR} discusses the Cheung-Remmen variables and the corresponding Feynman rules. We introduce these variables rigorously and prove a series of theorems relating them to the standard metric. We introduce the BRST complex and obtain the corresponding set of Feynman rules. Lastly, Section \ref{Section_FG} discusses the new changes to \texttt{FeynGrav}. Namely, we specify how the BRST formalism and the Cheung-Remmen variables are implemented. We also introduce minor quality-of-life improvements in the new version of the package. We conclude the paper in Section \ref{Section_Conclusion}, where we outline further development.

\section{BRST formalism for gravity}\label{Section_BRST}

This section recalls the construction of the (Becchi-Rouet-Stora-Tyutin) BRST formalism \cite{Becchi:1974md,Becchi:1974xu,Becchi:1975nq,Tyutin:1975qk} for gravity with a focus on general relativity and quadratic gravity. Since many publications (\cite{Nishijima:1978wq,Fradkin:1970pn,Faizal:2010dw,Ohta:2020bgz,Shapiro:2022ugk} and many more) discuss both the BRST formalism and the corresponding mathematical structure in great detail, we shall only discuss its most essential parts.

The BRST formalism is a more rigorous tool for introducing the Faddeev-Popov ghosts in the theory. It uses the BRST transformations, which are nilpotent transformations closely related to the gauge transformations. This allows one to introduce a gauge fixing term that breaks the gauge invariance of the action, but preserves its BRST invariance.

The main advantage of the formalism in the context of gravity is that it naturally suggests a gauge-fixing term, which significantly simplifies the set of Feynman rules. The previous publications \cite{Latosh:2022ydd,Latosh:2023zsi,Latosh:2024lhl} treated the gauge fixing with the Faddeev-Popov prescription and obtained a set of Feynman rules with an infinite number of ghost-graviton vertices. Below, we implement the BRST formalism for general relativity and introduce the simplest gauge fixing term naturally suggested by the formalism. In the corresponding set of Feynman rules, all the couplings between gravitons and the ghosts reduce to a single three-point vertex. Therefore, using a more sophisticated formalism results in a radically simpler set of interaction rules.

Let us proceed with the construction of the BRST formalism. Coordinate transformations play the role of gauge transformations in gravity. In turn, coordinate transformations are related to smooth vector fields. If one introduces a smooth vector field $\zeta$, then this field at each point shows the result of an infinitesimal coordinate transformation. To calculate the influence of such coordinate transformations, one shall use the Lie derivative $\mathcal{L}_\zeta$ with respect to the vector field $\zeta$. The axiomatic construction of the Lie derivative and its features are widely discussed in many textbooks on differential geometry \cite{spivak1979comprehensive,choquet1982analysis}.

The gauge transformation for the metric reads:
\begin{align}\label{the_metric_gauge_transfromation}
    \delta g_{\mu\nu} \overset{\text{def}}{=} \mathcal{L}_\zeta g_{\mu\nu} = \nabla_\mu \zeta_\nu + \nabla_\nu \zeta_\mu .
\end{align}
All the other geometric quantities are also subjected to gauge transformations spawned by this expression. The following theorem specifies such gauge transformations of some other geometric quantities.

\newpage

\begin{theorem}

    {~}

    \noindent Gauge transformations for gravitational theories are given by the following formulae.
    \begin{align}
        \begin{split}
            \delta g_{\mu\nu} &= \left( g_{\mu\lambda} \pd_\nu + g_{\nu\lambda} \pd_\mu + \pd_\lambda g_{\mu\nu} \right) \zeta^\lambda ,\\
            \delta \sqrt{-g} &= \sqrt{-g} \, \nabla^\mu \zeta_\mu ,\\
            \delta g^{\mu\nu} & = - \left( \nabla^\mu \zeta^\nu + \nabla^\nu \zeta^\mu  \right) ,\\
            \delta \Gamma^\alpha_{\mu\nu} &= \frac12\, \left( \nabla_\mu \nabla_\nu + \nabla_\nu \nabla_\mu \right) \zeta^\alpha + \frac12 \left( R_\mu{}^\alpha{}_{\nu\beta} + R_\nu{}^\alpha{}_{\mu\beta} \right) \zeta^\beta , \\
            \delta R_{\mu\nu\alpha\beta} & = \zeta^\sigma \nabla_\sigma R_{\mu\nu\alpha\beta} + R_{\sigma\nu\alpha\beta} \nabla_\mu \zeta^\sigma + R_{\mu\sigma\alpha\beta} \nabla_\nu \zeta^\sigma + R_{\mu\nu\sigma\beta} \nabla_\alpha \zeta^\sigma + R_{\mu\nu\alpha\sigma} \nabla_\beta \zeta^\sigma,\\
            \delta R_{\mu\nu} & = \zeta^\sigma \nabla_\sigma R_{\mu\nu} + R_{\mu\sigma} \nabla_\nu \zeta^\sigma + R_{\sigma\nu} \nabla_\mu \zeta^\sigma , \\
            \delta R &= \nabla_\mu R \, \zeta^\mu .
        \end{split}
    \end{align}
    
\end{theorem}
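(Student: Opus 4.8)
The plan is to derive all seven formulae from the single defining relation $\delta g_{\mu\nu} = \mathcal{L}_\zeta g_{\mu\nu}$ together with the algebraic and differential identities that tie the listed quantities to the metric. The key principle is that $\delta$ acts as a derivation (it is just the Lie derivative $\mathcal{L}_\zeta$, which obeys the Leibniz rule and commutes with contractions), so once the transformation of $g_{\mu\nu}$ is fixed, everything else is forced.

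First I would rewrite the first line: starting from $\delta g_{\mu\nu} = \nabla_\mu \zeta_\nu + \nabla_\nu \zeta_\mu$, I expand the covariant derivatives in terms of partials and Christoffel symbols, use $\nabla_\mu \zeta_\nu = \pd_\mu(g_{\nu\lambda}\zeta^\lambda) - \Gamma^\sigma_{\mu\nu}\zeta_\sigma - \cdots$, and collect terms; the Christoffel contributions reorganise into $\pd_\lambda g_{\mu\nu}$, giving the stated coordinate-form $\delta g_{\mu\nu} = (g_{\mu\lambda}\pd_\nu + g_{\nu\lambda}\pd_\mu + \pd_\lambda g_{\mu\nu})\zeta^\lambda$, which is just $\mathcal{L}_\zeta g_{\mu\nu}$ written out. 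Next, for $\delta\sqrt{-g}$ I use $\delta\sqrt{-g} = \tfrac12\sqrt{-g}\, g^{\mu\nu}\delta g_{\mu\nu} = \tfrac12\sqrt{-g}\,g^{\mu\nu}(\nabla_\mu\zeta_\nu + \nabla_\nu\zeta_\mu) = \sqrt{-g}\,\nabla^\mu\zeta_\mu$. For $\delta g^{\mu\nu}$ I differentiate $g^{\mu\alpha}g_{\alpha\nu} = \delta^\mu_\nu$ to get $\delta g^{\mu\nu} = -g^{\mu\alpha}g^{\nu\beta}\delta g_{\alpha\beta} = -(\nabla^\mu\zeta^\nu + \nabla^\nu\zeta^\mu)$.

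For the Christoffel symbol I would use the standard variation formula $\delta\Gamma^\alpha_{\mu\nu} = \tfrac12 g^{\alpha\sigma}(\nabla_\mu\delta g_{\sigma\nu} + \nabla_\nu\delta g_{\sigma\mu} - \nabla_\sigma\delta g_{\mu\nu})$ — valid because $\delta\Gamma$ is a tensor — substitute $\delta g_{\mu\nu} = \nabla_\mu\zeta_\nu + \nabla_\nu\zeta_\mu$, expand, and then commute a pair of covariant derivatives using the Ricci identity $[\nabla_\mu,\nabla_\nu]\zeta^\alpha = R^\alpha{}_{\beta\mu\nu}\zeta^\beta$; the double-gradient terms assemble into the symmetrised $\tfrac12(\nabla_\mu\nabla_\nu + \nabla_\nu\nabla_\mu)\zeta^\alpha$ and the curvature terms into $\tfrac12(R_\mu{}^\alpha{}_{\nu\beta} + R_\nu{}^\alpha{}_{\mu\beta})\zeta^\beta$. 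For the Riemann, Ricci, and scalar curvature lines the cleanest route is to invoke that $\delta = \mathcal{L}_\zeta$ and that the Lie derivative of any tensor is given by the universal formula (transport term $\zeta^\sigma\nabla_\sigma$ acting on the tensor, plus one $\nabla\zeta$ correction per index with the appropriate sign); since $R_{\mu\nu\alpha\beta}$, $R_{\mu\nu}$, $R$ are genuine tensors built covariantly from $g$, their $\delta$ is just $\mathcal{L}_\zeta$ of a $(0,4)$, $(0,2)$, $(0,0)$ tensor respectively, which reproduces the three displayed formulae immediately (for the scalar, $\mathcal{L}_\zeta R = \zeta^\mu\nabla_\mu R = \zeta^\mu\pd_\mu R$). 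Alternatively one can derive the Ricci line by contracting the Riemann line and the scalar by contracting again with $g^{\mu\nu}$ and using $\delta g^{\mu\nu}$, which is a good consistency check.

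The main obstacle is bookkeeping rather than conceptual: the $\delta\Gamma$ computation requires carefully tracking which pairs of covariant derivatives are commuted and ensuring the curvature terms land with the index placement and symmetrisation shown (in particular that the antisymmetric-in-$\mu\nu$ part cancels, as it must since $\delta\Gamma^\alpha_{\mu\nu}$ is symmetric in $\mu\nu$). A secondary subtlety is justifying the ``$\delta = \mathcal{L}_\zeta$ on tensors'' shortcut for curvature: this is legitimate because the map sending $g$ to its Levi-Civita connection and then to its curvature is diffeomorphism-covariant, so an infinitesimal diffeomorphism acts on the output by the Lie derivative — but one should state this naturality explicitly rather than treat it as obvious. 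Everything else is routine application of the Leibniz rule and the Ricci identity.
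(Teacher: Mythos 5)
Your proposal is correct and follows essentially the same route the paper intends: the paper states these formulae without a detailed proof, its only remark being that $\delta \Gamma^\alpha_{\mu\nu}$ must be computed as the linear part in $\delta g_{\mu\nu}$ rather than as a Lie derivative, and your use of the standard formula $\delta\Gamma^\alpha_{\mu\nu} = \tfrac12 g^{\alpha\sigma}\left( \nabla_\mu \delta g_{\sigma\nu} + \nabla_\nu \delta g_{\sigma\mu} - \nabla_\sigma \delta g_{\mu\nu} \right)$ with $\delta g_{\mu\nu} = \nabla_\mu\zeta_\nu + \nabla_\nu\zeta_\mu$ plus the Ricci identity handles exactly that subtlety, while the remaining lines follow from $\delta = \mathcal{L}_\zeta$ on genuine tensors as you argue. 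The only care needed is matching the paper's Riemann-tensor sign and index conventions when commuting covariant derivatives, which is the bookkeeping you already flag.
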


\noindent Let us note that since the Christoffel symbol is not a tensor, its variation is not a direct application of the  Lie derivative. Instead, the variation of the Christoffel symbol is its linear part with respect to the variation of the metric.

The Einstein-Hilbert action is invariant with respect to the change of coordinates, and its gauge transformation is a surface term:
\begin{align}
    \delta \int d^4 x \sqrt{-g} \left[ -\cfrac{2}{\kappa^2} \, R \right] = -\cfrac{2}{\kappa^2} \int d^4 x \, \pd_\mu \left[ \sqrt{-g} \,R\, \zeta^\mu \right].
\end{align}
Since the surface terms do not contribute to the field equations, the model enjoys a gauge invariance on the mass shell where the field equations are satisfied. It is also due to the gauge invariance that the quadratic part of the action is non-invertible, so one must introduce a gauge fixing term explicitly breaking the gauge invariance to construct a graviton propagator.

Let us axiomatically introduce the BRST transformation operator $\delta_B$, which shall satisfy two key requirements. Firstly, any quantity invariant with respect to the gauge transformations shall also be invariant with respect to the BRST transformations. Secondly, operator $\delta_B$ is nilpotent so that $\delta_B^2=0$.

One introduces a gauge fixing term using the BRST transformation as follows. One starts with a gauge invariant action $\mathcal{A}$, so the action is also BRST invariant:
\begin{align}
    \delta\mathcal{A} = 0 \Rightarrow \delta_B \mathcal{A}=0.
\end{align}
One introduces a gauge fixing term $\mathcal{A}_\text{gf}$ with is an exact BRST transformations of some other expression $\mathcal{A}_0$:
\begin{align}
    \mathcal{A}_\text{gf} & \overset{\text{def}}{=} \delta_B \mathcal{A}_0.
\end{align}
When the gauge fixing term is added to the action, it remains BRST invariant since the $\delta_B$ is nilpotent:
\begin{align}
    \delta_B \left( \mathcal{A} + \mathcal{A}_\text{gf} \right) = \delta_B \left( \mathcal{A} + \delta_B \mathcal{A}_0 \right) = \delta_B \mathcal{A} + \delta_B^2 \mathcal{A}_0 = 0 .
\end{align}
The relation between $\mathcal{A}_0$ and gauge invariance is more subtle. If one chooses $\mathcal{A}_0$ to be gauge invariant, it will also be BRST invariant, and the gauge fixing term vanishes. Consequently, the only non-trivial choice is to have such $\mathcal{A}_0$ that explicitly breaks the gauge invariance.

Let us follow the outlined approach. We introduce the following BRST operator $\delta_B$ to generalise the gauge transformations \eqref{the_metric_gauge_transfromation}:
\begin{align}\label{BRST_metric}
    \delta_B g_{\mu\nu} \overset{\text{def}}{=} \nabla_\mu c_\nu + \nabla_\nu c_\mu .
\end{align}
We introduce the ghost field $c_\mu$, which plays the role of a gauge parameter. We make the ghost field to have the canonical mass dimension $1$, so the BRST operator has the mass dimension $2$. This choice of dimensions is arbitrary and not uniquely fixed by any physical considerations. We demand that the ghost fields be anti-commuting with each other and with the BRST transformations. It is more useful to present \eqref{BRST_metric} in an equivalent form that does not rely on covariant derivatives:
\begin{align}
    \delta_B g_{\mu\nu} &= \left( g_{\mu\lambda} \pd_\nu + g_{\nu\lambda} \pd_\mu + \pd_\lambda g_{\mu\nu} \right) c^\lambda .
\end{align}
Since BRST transformations are nilpotent, expression \eqref{BRST_metric} uniquely fixes the BRST operator action on the ghost field:
\begin{align}\label{BRST_ghost}
    \begin{split}
        \delta_B c_\mu &= - c^\sigma \nabla_\mu c_\sigma , \\
        \delta_B c^\mu &= c^\sigma \nabla_\sigma c^\mu .
    \end{split}
\end{align}
Lastly, we introduce the antighost field $\overline{c}_\mu$ and the auxiliary field $B_\mu$, admitting the following BRST transformations:
\begin{align}\label{BRST_antighost}
    \begin{split}
        \delta_B \overline{c}_\mu &= B_\mu , \\
        \delta_B B_\mu &= 0.
    \end{split}
\end{align}
We also ensure the antighost has the canonical mass dimension $1$, so the auxiliary field $B_\mu$ has a non-canonical mass dimension $3$. This non-canonical mass dimension does not introduce any additional complication to the final result, because we will integrate the auxiliary field out, and it does not enter any final formulae.

\begin{theorem}

    {~}

    \noindent The following set of formulae provides the complete BRST transformation for gravity.
    \begin{align}\label{BRST_all}
        \begin{split}
            \delta_B g_{\mu\nu} & = \left( g_{\mu\lambda} \pd_\nu + g_{\nu\lambda} \pd_\nu + \pd_\lambda g_{\mu\nu} \right) c^\lambda , \\
            \delta_B c^\mu & = c^\sigma \nabla_\sigma c^\mu , \\
            \delta_B \overline{c}_\mu & = B_\mu  , \\
            \delta_B B_\mu &= 0.
        \end{split}
    \end{align}

\end{theorem}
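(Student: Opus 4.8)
The plan is to reduce the statement to two facts: that $\delta_B$, as the Grassmann-odd derivation fixed by \eqref{BRST_metric} and \eqref{BRST_antighost}, is nilpotent, $\delta_B^2=0$, and that the action of $\delta_B$ on the ghost $c^\mu$ is not an independent input but is forced by requiring $\delta_B^2 g_{\mu\nu}=0$. The antighost and auxiliary-field transformations in \eqref{BRST_antighost} are nilpotent by construction ($\delta_B^2\overline{c}_\mu=\delta_B B_\mu=0$ and $\delta_B^2 B_\mu=0$) and do not feed back into the metric--ghost sector, so the whole content sits in the pair $(g_{\mu\nu},c^\mu)$. Concretely I would proceed in three steps: (i) derive $\delta_B c^\mu$ from $\delta_B^2 g_{\mu\nu}=0$; (ii) check $\delta_B^2 c^\mu=0$ independently; (iii) invoke the derivation property to propagate nilpotency to all composite geometric objects and to any diffeomorphism-invariant action, which is the sense in which \eqref{BRST_all} is the \emph{complete} BRST transformation.

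For step (i), I would apply $\delta_B$ to the coordinate form $\delta_B g_{\mu\nu}=(g_{\mu\lambda}\pd_\nu+g_{\nu\lambda}\pd_\mu+\pd_\lambda g_{\mu\nu})c^\lambda$, leaving $\delta_B c^\lambda=:\xi^\lambda$ as an unknown commuting vector field, and use the graded Leibniz rule with the sign $(-1)^{|A|}$ whenever $\delta_B$ is commuted past a Grassmann-odd factor $A$. The metric inside the first variation is itself acted upon, producing terms quadratic in $c$; collecting everything, the pieces bilinear in $\pd c$ cancel in pairs by the antisymmetry $c^\sigma c^\rho=-c^\rho c^\sigma$, and the remainder organises into $\mathcal L_{\xi-c^\sigma\pd_\sigma c}\,g_{\mu\nu}$. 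A cleaner route to the same conclusion is to observe that $\delta_B$ acts on tensors exactly as the Lie derivative $\mathcal L_c$ along the odd field $c$, so on a tensor $\delta_B^2=\mathcal L_{\delta_B c}-\tfrac12\,\mathcal L_{[c,c]}$ with the graded bracket $[c,c]^\mu=2\,c^\sigma\pd_\sigma c^\mu$; demanding this vanish for all tensors forces $\delta_B c^\mu=c^\sigma\pd_\sigma c^\mu=c^\sigma\nabla_\sigma c^\mu$, the Christoffel contribution dropping by the same antisymmetry, and lowering the index via $c_\mu=g_{\mu\nu}c^\nu$ reproduces $\delta_B c_\mu=-c^\sigma\nabla_\mu c_\sigma$ of \eqref{BRST_ghost}.

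For step (ii), I would expand $\delta_B(c^\sigma\nabla_\sigma c^\mu)$ directly with the graded Leibniz rule: the term proportional to $c^\sigma c^\rho\,\pd_\sigma\pd_\rho c^\mu$ vanishes because the symmetric second derivative is contracted against the antisymmetric $c^\sigma c^\rho$, and the two surviving terms of the schematic shape $(c\,\pd c)(\pd c)$ cancel after relabelling the summed indices — this is exactly the graded Jacobi identity for vector fields, so once the bracket interpretation of step (i) is in place nothing further is required. For step (iii), since $\delta_B$ is a derivation, its nilpotency on the generating set $\{g_{\mu\nu},c^\mu,\overline{c}_\mu,B_\mu\}$ extends automatically to $\sqrt{-g}$, $g^{\mu\nu}$, $\Gamma^\alpha_{\mu\nu}$, the curvature tensors, and to any functional built polynomially (or analytically) from these and their derivatives; in particular $\delta_B^2$ annihilates a gauge-invariant action, as needed for the gauge-fixing construction set up above.

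The one genuinely delicate point is the bookkeeping in $\delta_B^2 g_{\mu\nu}$: one must not treat $\nabla$ as $\delta_B$-inert, since the connection built from $g$ also transforms, and one must carefully track the Grassmann signs generated when $\delta_B$ is moved past $c$ and $\pd c$. Routing the computation through the identity $[\mathcal L_X,\mathcal L_Y]=\mathcal L_{[X,Y]}$ specialised to $X=Y=c$ neutralises both hazards and reduces the theorem to this classical fact together with the trivial antighost--auxiliary sector.
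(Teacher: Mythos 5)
Your proposal is correct and follows essentially the same route as the paper, which introduces $\delta_B g_{\mu\nu}$ as the Lie derivative along the ghost, notes that nilpotency uniquely fixes $\delta_B c^\mu$ (eq.~\eqref{BRST_ghost}), and posits the trivial antighost--auxiliary doublet \eqref{BRST_antighost}, the theorem merely assembling these; your explicit verification via $[\mathcal{L}_X,\mathcal{L}_Y]=\mathcal{L}_{[X,Y]}$ with $X=Y=c$ is the standard way to make that nilpotency argument precise. (You also implicitly correct the obvious typo $g_{\nu\lambda}\pd_\nu \to g_{\nu\lambda}\pd_\mu$ in the stated transformation, consistent with the paper's earlier formulae.)
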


The next step is to introduce a gauge fixing term. There are (at least) two ways to introduce a term that breaks the gauge invariance. The first way is to specify a dedicated reference frame in which the gauge fixing term takes a specific form. The second one is to specify a background geometry about which small metric perturbations propagate. The first way is more suitable for non-perturbative treatment, but it is much more sophisticated. The second approach is limited to perturbative treatment, as the derivation from the background geometry must be small. However, the technical side of this approach is much easier. This paper focuses on perturbative quantisation, but we discuss both gauge fixing procedures below.

Let us start with the gauge fixing term specified in a dedicated reference frame. One can introduce the gauge fixing term in the following form:
\begin{align}\label{Gauge_fixing_term_definition}
    \mathcal{A}_\text{gf} \overset{\text{def}}{=} \delta_B \int d^4 x \sqrt{-g} \, \overline{c}_\mu \, Y^{\mu\nu} \left( \mathcal{G}_\nu - \cfrac{\varepsilon}{2} \, \kappa^2 \, B_\nu \right).
\end{align}
In this expression, $Y^{\mu\nu}$ is a dimensionless invertible differential operator. It may contain only the metric and its derivatives, but not ghosts, antighosts, and auxiliary fields. We call $Y^{\mu\nu}$ a differential operator because it may contain derivatives, but this is not a necessary condition, and one can choose $Y^{\mu\nu}$ to be free from derivatives. The gauge fixing conditions $\mathcal{G}_\mu$ can depend only on the metric and its derivatives. We also ensure that it has the canonical mass dimension $1$, so the $\mathcal{A}_\text{gf}$ is dimensionless. 

Let us see how this term transforms under a coordinate transformation. We demand $\overline{c}_\mu$, $c_\mu$, and $B_\mu$ to be vectors, and for $Y^{\mu\nu}$ to be a tensor. On the contrary, the gauge fixing condition $\mathcal{G}_\mu$ must not be a vector, so the gauge fixing term can break the gauge invariance. For instance, one can use $ \mathcal{G}^\mu = g^{\alpha\beta} \Gamma^\mu_{\alpha\beta} $, which does not transform as a vector because of the Christoffel symbol transformation law. In that way, expression \eqref{Gauge_fixing_term_definition} is defined in one particular coordinate frame where all further computations shall take place.

We do not use this approach because it develops an undesirable coupling between the auxiliary field and ghosts. The BRST transformations \eqref{BRST_all} produce the following expression for the gauge fixing term:
\begin{align}\label{Gauge_fixing_term_1}
    \begin{split}
        \mathcal{A}_\text{gf} = \int d^4 x \sqrt{-g} \Bigg[ - \cfrac{\varepsilon}{2} \, \kappa^2\, B_\mu Y^{\mu\nu} B_\nu + B_\mu Y^{\mu\nu} \mathcal{G}_\nu - \overline{c}_\mu \left\{ \delta_B\left( Y^{\mu\nu} \mathcal{G}_\nu \right) + \nabla_\lambda c^\lambda Y^{\mu\nu} \mathcal{G}_\nu \right\} \\
        +\cfrac{\varepsilon}{2} \,\kappa^2 \, \overline{c}_\mu \left\{ \delta_B Y^{\mu\nu} + \nabla_\lambda c^\lambda \,Y^{\mu\nu} \right\} B_\nu \Bigg] .
    \end{split}
\end{align}
The last term introduces a coupling between the auxiliary field and the ghosts. Since the expression admits a part quadratic in the auxiliary field, one may expect to reduce the path integral to a Gaussian integral and integrate out the auxiliary field. In turn, the model experiences undesirable coupling between ghosts due to the last term. Because of this, we do not consider this way to introduce a gauge fixing term further.

This paper uses the gauge fixing procedure that specifies the background geometry and operates with small metric perturbations. The gauge fixing term shall use the integration over the background geometry $\int d^4 x \sqrt{-\overline{g}}$ defined by the background metric $\overline{g}_{\mu\nu}$, while $\overline{c}_\mu$, $c_\mu$, $B_\mu$, $\mathcal{G}_\mu$ must be vectors, and $Y^{\mu\nu}$ must be a tensor for the background geometry:
\begin{align}\label{Gauge_fixing_term_definition_perturbative}
    \mathcal{A}_\text{gf} \overset{\text{def}}{=} \delta_B \int d^4 x \sqrt{-\overline{g}} ~ \overline{c}_\mu \, Y^{\mu\nu} \left( \mathcal{G}_\nu - \cfrac{\varepsilon}{2} \, \kappa^2 \, B_\nu \right).
\end{align}
To exclude terms that can introduce coupling between the ghost and auxiliary field, we require $Y^{\mu\nu}$ to be BRST invariant
\begin{align}
    \delta_B Y^{\mu\nu} = 0, 
\end{align}
so it can only depend on the background metric. On the contrary, the gauge fixing conditions $\mathcal{G}_\mu$ can depend on background metrics and perturbations.

We proceed with this choice of the gauge fixing term and split the complete metric into the background metric $g_{\mu\nu}$ and small metric perturbations $h_{\mu\nu}$:
\begin{align}\label{the_perturbative_metric}
    g_{\mu\nu} = \overline{g}_{\mu\nu} + \kappa \, h_{\mu\nu} .
\end{align}
Perturbations $h_{\mu\nu}$ transform as a tensor with respect to the background geometry. The background metric has no dynamical degrees of freedom, so it is BRST invariant.
\begin{align}
    \delta_B \overline{g}_{\mu\nu} = 0.
\end{align}

Let us especially highlight that BRST transformations \eqref{BRST_all} are not related to the change of coordinates on the background spacetime. The original gauge transformations \eqref{the_metric_gauge_transfromation} and the corresponding BRST transformations \eqref{BRST_all} are due to the redundant description of gravitational degrees of freedom. The coordinate transformations on the background spacetime are due to the geometrical nature of the formalism and cannot be attributed to any dynamic degree of freedom. To put it otherwise, the introduction of the background geometry explicitly separates the redundancy associated with the freedom to choose a coordinate frame from the true gauge redundancy of the gravitational field associated with the presence of dynamical constraints.

To proceed with the perturbative treatment, we shall rewrite the original BRST transformations \eqref{BRST_all} in terms of the background geometry.

\begin{theorem}

    {~}

    \noindent BRST transformations \eqref{BRST_all} take the following form for the perturbative expansion \eqref{the_perturbative_metric}.
    \begin{align}\label{BRST_perturbative}
        \begin{split}
            & \delta_B h_{\mu\nu} = \cfrac{1}{\kappa} \left( \overline{\nabla}_\mu c_\nu + \overline{\nabla}_\nu c_\mu \right) + \left( h_{\mu\lambda} \overline{\nabla}_\nu + h_{\nu\lambda}\overline{\nabla}_\mu + \overline{\nabla}_\lambda h_{\mu\nu} \right) c^\lambda ,\\
            & \delta_B c^\mu = c^\sigma \overline{\nabla}_\sigma c^\mu ,\\
            & \delta_B \overline{c}_\mu = B_\mu ,\\
            & \delta_B B_\mu = 0.
        \end{split}
    \end{align}
    Here, $\overline{\nabla}$ is the covariant derivative for the background geometry.

\end{theorem}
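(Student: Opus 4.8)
The plan is to derive \eqref{BRST_perturbative} by substituting the perturbative split \eqref{the_perturbative_metric} directly into the covariant BRST transformations \eqref{BRST_all}. Since the background metric carries no dynamical content, $\delta_B \overline{g}_{\mu\nu}=0$, the first line of \eqref{BRST_all} gives $\kappa\,\delta_B h_{\mu\nu}=\delta_B g_{\mu\nu}$, hence $\delta_B h_{\mu\nu}=\kappa^{-1}\,\delta_B g_{\mu\nu}$. Writing $g_{\mu\lambda}=\overline{g}_{\mu\lambda}+\kappa h_{\mu\lambda}$ everywhere on the right-hand side of that line and sorting by powers of $\kappa$, the right-hand side splits into a piece built from $\overline{g}$ (which, after dividing by $\kappa$, scales as $\kappa^{-1}$) and a piece built from $h$ (independent of $\kappa$).

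For the $\overline{g}$-piece I would observe that $\bigl(\overline{g}_{\mu\lambda}\partial_\nu+\overline{g}_{\nu\lambda}\partial_\mu+\partial_\lambda\overline{g}_{\mu\nu}\bigr)c^\lambda$ is precisely $\mathcal{L}_c\overline{g}_{\mu\nu}$, which by \eqref{the_metric_gauge_transfromation} and the first line of the first theorem, applied to the background, equals $\overline{\nabla}_\mu c_\nu+\overline{\nabla}_\nu c_\mu$; dividing by $\kappa$ produces the inhomogeneous term of \eqref{BRST_perturbative}. For the $h$-piece, $\bigl(h_{\mu\lambda}\partial_\nu+h_{\nu\lambda}\partial_\mu+\partial_\lambda h_{\mu\nu}\bigr)c^\lambda$ is the Lie derivative $\mathcal{L}_c h_{\mu\nu}$ of a rank-two covariant tensor on the background; since the Lie derivative of such a tensor does not depend on the choice of a symmetric connection, I may replace $\partial$ by $\overline{\nabla}$ — the background Christoffel symbols cancel in pairs by their symmetry — to obtain $\bigl(h_{\mu\lambda}\overline{\nabla}_\nu+h_{\nu\lambda}\overline{\nabla}_\mu+\overline{\nabla}_\lambda h_{\mu\nu}\bigr)c^\lambda$. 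Adding the two pieces gives the first line of \eqref{BRST_perturbative}.

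For the ghost line, the original transformation $\delta_B c^\mu=c^\sigma\nabla_\sigma c^\mu=c^\sigma\partial_\sigma c^\mu+\Gamma^\mu_{\sigma\rho}\,c^\sigma c^\rho$. The product $c^\sigma c^\rho$ is antisymmetric under $\sigma\leftrightarrow\rho$ because the ghosts anticommute, while $\Gamma^\mu_{\sigma\rho}$ is symmetric, so the Christoffel contraction vanishes and $\delta_B c^\mu=c^\sigma\partial_\sigma c^\mu$; running the same argument with the background connection gives $c^\sigma\partial_\sigma c^\mu=c^\sigma\overline{\nabla}_\sigma c^\mu$, which is the second line of \eqref{BRST_perturbative}. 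The antighost and auxiliary-field transformations in \eqref{BRST_all} involve neither the metric nor its split, so they carry over verbatim as the last two lines.

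The only step requiring genuine care is the ghost line: although \eqref{BRST_all} (and \eqref{BRST_ghost}) is written with the full connection $\nabla$, that connection drops out of $c^\sigma\nabla_\sigma c^\mu$ by Grassmann antisymmetry, so writing $\overline{\nabla}$ (or, equally, $\partial$) in its place is legitimate. A secondary, purely bookkeeping point is to track the $\kappa$-powers consistently, so that the $\overline{g}$-term in $\delta_B h_{\mu\nu}$ comes with the $\kappa^{-1}$ factor characteristic of the inhomogeneous shift of a gauge field while the $h$-term is $\kappa$-independent; everything else is a routine substitution.
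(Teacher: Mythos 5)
Your derivation is correct and follows the same direct route the paper implies: substitute $g_{\mu\nu}=\overline{g}_{\mu\nu}+\kappa h_{\mu\nu}$ into the partial-derivative form of $\delta_B g_{\mu\nu}$, identify the $\overline{g}$-piece as $\mathcal{L}_c\overline{g}_{\mu\nu}=\overline{\nabla}_\mu c_\nu+\overline{\nabla}_\nu c_\mu$ (with $c_\nu=\overline{g}_{\nu\lambda}c^\lambda$, consistent with the paper's background-index conventions) and the $h$-piece as the connection-independent Lie derivative $\mathcal{L}_c h_{\mu\nu}$, and note that $c^\sigma\nabla_\sigma c^\mu=c^\sigma\overline{\nabla}_\sigma c^\mu$ because the symmetric Christoffel contraction dies against the anticommuting ghosts. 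The $\kappa$ bookkeeping and the unchanged antighost and auxiliary-field lines are handled correctly, so nothing is missing.
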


\noindent These transformations allow one to obtain an explicit expression for the gauge fixing term.

\begin{theorem}

    {~}

    \noindent The gauge fixing term \eqref{Gauge_fixing_term_definition_perturbative} takes the following form if the BRST transformation \eqref{BRST_perturbative} holds:
    \begin{align}
        \mathcal{A}_\text{gf} = \int d^4 x \sqrt{-\overline{g}} \Bigg[ - \cfrac{\varepsilon}{2} \, \kappa^2 B_\mu Y^{\mu\nu} B_\nu + B_\mu Y^{\mu\nu} \mathcal{G}_\nu - \overline{c}_\mu Y^{\mu\nu} \delta_B \mathcal{G}_\nu \Bigg] .
    \end{align}
    
\end{theorem}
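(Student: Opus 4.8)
The plan is to compute $\mathcal{A}_\text{gf}$ directly by letting the nilpotent graded derivation $\delta_B$ act through the integrand of \eqref{Gauge_fixing_term_definition_perturbative}, exploiting the fact that all background data are inert under it. First I would record that the measure $\int d^4 x \sqrt{-\overline{g}}$ is BRST invariant, since $\delta_B\overline{g}_{\mu\nu}=0$, so $\delta_B$ may be pushed inside the integral; likewise $Y^{\mu\nu}$ depends on the background metric only and we imposed $\delta_B Y^{\mu\nu}=0$, so as an operator $\delta_B$ both annihilates $Y^{\mu\nu}$ and commutes past it. Thus the only factors $\delta_B$ can act on nontrivially are $\overline{c}_\mu$, $\mathcal{G}_\nu$, and $B_\nu$.

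Next I would apply the graded Leibniz rule. Abbreviating $X^\mu = Y^{\mu\nu}\bigl(\mathcal{G}_\nu - \frac{\varepsilon}{2}\kappa^2 B_\nu\bigr)$, which is Grassmann even, and using that $\overline{c}_\mu$ and $\delta_B$ are both Grassmann odd, one has $\delta_B(\overline{c}_\mu X^\mu) = (\delta_B \overline{c}_\mu)\, X^\mu - \overline{c}_\mu\, (\delta_B X^\mu)$. By \eqref{BRST_antighost} the first term is $B_\mu Y^{\mu\nu}\bigl(\mathcal{G}_\nu - \frac{\varepsilon}{2}\kappa^2 B_\nu\bigr)$, which supplies the $B\,Y\,\mathcal{G}$ cross term together with the $-\frac{\varepsilon}{2}\kappa^2\, B\,Y\,B$ term. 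For the second term, $\delta_B Y^{\mu\nu}=0$ and $\delta_B B_\nu=0$ leave only $\delta_B X^\mu = Y^{\mu\nu}\,\delta_B\mathcal{G}_\nu$, producing $-\overline{c}_\mu Y^{\mu\nu}\delta_B\mathcal{G}_\nu$. Assembling these three contributions and reinstating $\int d^4 x \sqrt{-\overline{g}}$ gives exactly the stated formula.

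There is no real analytic obstacle; the single point requiring care is the bookkeeping of Grassmann signs, in particular the relative minus sign in front of $\overline{c}_\mu(\delta_B X^\mu)$ arising when $\delta_B$ anticommutes past $\overline{c}_\mu$. It is also worth checking consistency with the general expression \eqref{Gauge_fixing_term_1}: the ``dangerous'' coupling $\overline{c}_\mu(\delta_B Y^{\mu\nu})B_\nu$ there is now killed precisely by the hypothesis $\delta_B Y^{\mu\nu}=0$, and the $\nabla_\lambda c^\lambda$ contributions of \eqref{Gauge_fixing_term_1} drop out because the present construction integrates against the background density $\sqrt{-\overline{g}}$ rather than $\sqrt{-g}$. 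Finally I would note that the theorem holds for any admissible choice of $\mathcal{G}_\nu$; the explicit form of $\delta_B\mathcal{G}_\nu$ is then obtained from \eqref{BRST_perturbative} once $\mathcal{G}_\nu$ is specified.
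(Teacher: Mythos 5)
Your computation is correct and is essentially the same argument the paper relies on: since $\sqrt{-\overline{g}}$ and $Y^{\mu\nu}$ are BRST inert, the graded Leibniz rule applied to $\overline{c}_\mu Y^{\mu\nu}\left(\mathcal{G}_\nu - \tfrac{\varepsilon}{2}\kappa^2 B_\nu\right)$ together with $\delta_B\overline{c}_\mu = B_\mu$ and $\delta_B B_\mu = 0$ immediately yields the three stated terms, with the relative minus sign handled correctly. Your closing consistency check against \eqref{Gauge_fixing_term_1} (the $\delta_B Y$ and $\nabla_\lambda c^\lambda$ terms dropping out) is a nice confirmation but not a new route.
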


\noindent The corresponding generating functional $\mathcal{Z}$, defined via the functional integral, also simplifies. The simplification occurs because the functional integral is a Gaussian integral with respect to the auxiliary field $B_\mu$. 

\begin{theorem}

    {~}

    \noindent The generating functional of the following form
    \begin{align}
        \mathcal{Z} & \overset{\text{def}}{=} \int\mathcal{D}[g]\mathcal{D}[B]\mathcal{D}[c]\mathcal{D}[\overline{c}] \exp\Bigg[ i\,\mathcal{A}[g] + i \, \mathcal{A}_\text{gf}[g,B,c,\overline{c}] \Bigg]
    \end{align}
    with the gauge fixing term \eqref{Gauge_fixing_term_definition_perturbative} takes the following form after the integration over the auxiliary field
    \begin{align}\label{Generating_Functional_Generic_1}
        \mathcal{Z} & = \cfrac{1}{\sqrt{\det Y}} \, \int \mathcal{D}[h]\mathcal{D}[c]\mathcal{D}[\overline{c}] \exp\Bigg[ i \, \mathcal{A}[h] + i \int d^4 x \sqrt{-\overline{g}} \left[ \cfrac{1}{2\,\varepsilon\,\kappa^2} ~ \mathcal{G}_\mu Y^{\mu\nu} \mathcal{G}_\nu -  \overline{c}_\mu Y^{\mu\nu} \delta_B \mathcal{G}_\nu \right] \Bigg] .
    \end{align}

\end{theorem}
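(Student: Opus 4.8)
The plan is to carry out the Gaussian integration over $B_\mu$ by completing the square, exploiting that $B_\mu$ enters the exponent only through $\mathcal{A}_\text{gf}$ — the action $\mathcal{A}[g]$ and the measures $\mathcal{D}[c]$, $\mathcal{D}[\overline{c}]$ are $B$-independent. Among the three terms of $\mathcal{A}_\text{gf}$ from the previous theorem, the last one, $-\overline{c}_\mu Y^{\mu\nu}\delta_B\mathcal{G}_\nu$, is also $B$-independent: $\mathcal{G}_\nu$ depends only on the metric, and by \eqref{BRST_perturbative} the BRST variation of $h_{\mu\nu}$ involves only $h$ and $c$, never $B$. I would also record at the outset that $\mathcal{D}[g]=\mathcal{D}[h]$, since the split \eqref{the_perturbative_metric} is a shift of the integration variable by the fixed background metric, so the propagating variable may be taken to be $h_{\mu\nu}$.

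First I would isolate the $B$-quadratic and $B$-linear pieces,
\begin{align}
    \int d^4x\sqrt{-\overline{g}}\left[-\cfrac{\varepsilon}{2}\kappa^2 B_\mu Y^{\mu\nu}B_\nu + B_\mu Y^{\mu\nu}\mathcal{G}_\nu\right],
\end{align}
and complete the square. Using that $Y^{\mu\nu}$ is invertible and that only its self-adjoint part contributes to the bilinear form $B_\mu Y^{\mu\nu}B_\nu$ (integrating by parts against the background volume element if $Y$ carries derivatives), one rewrites this as
\begin{align}
    -\cfrac{\varepsilon\kappa^2}{2}\int d^4x\sqrt{-\overline{g}}\,\widetilde{B}_\mu Y^{\mu\nu}\widetilde{B}_\nu + \cfrac{1}{2\varepsilon\kappa^2}\int d^4x\sqrt{-\overline{g}}\,\mathcal{G}_\mu Y^{\mu\nu}\mathcal{G}_\nu,
\end{align}
with $\widetilde{B}_\mu \overset{\text{def}}{=} B_\mu - \dfrac{1}{\varepsilon\kappa^2}\mathcal{G}_\mu$. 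Because $\mathcal{G}_\mu$ does not depend on $B_\mu$, this is a genuine translation of the integration variable, so $\mathcal{D}[B]=\mathcal{D}[\widetilde{B}]$ by translation invariance of the functional measure.

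The integral over $\widetilde{B}$ then factors out of the functional integral as a pure Gaussian,
\begin{align}
    \int\mathcal{D}[\widetilde{B}]\exp\left[-\cfrac{i\,\varepsilon\kappa^2}{2}\int d^4x\sqrt{-\overline{g}}\,\widetilde{B}_\mu Y^{\mu\nu}\widetilde{B}_\nu\right]\ \propto\ \cfrac{1}{\sqrt{\det\!\left(\varepsilon\kappa^2\sqrt{-\overline{g}}\,Y\right)}}\ \propto\ \cfrac{1}{\sqrt{\det Y}},
\end{align}
where the field-independent factors $i$, $\varepsilon$, $\kappa$ and the background volume element are absorbed into the overall normalisation of $\mathcal{Z}$. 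Collecting what survives — the original action rewritten as $\mathcal{A}[h]$, the term $\tfrac{1}{2\varepsilon\kappa^2}\mathcal{G}_\mu Y^{\mu\nu}\mathcal{G}_\nu$ produced by completing the square, and the untouched ghost term $-\overline{c}_\mu Y^{\mu\nu}\delta_B\mathcal{G}_\nu$ — reproduces \eqref{Generating_Functional_Generic_1}.

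The only genuinely delicate point is the meaning of $\det Y$: since $Y^{\mu\nu}$ may be a differential operator, the Gaussian integration yields a functional determinant that requires a regulator, and one must check it is field-independent so that it can indeed be pulled out in front. This is guaranteed by the earlier requirement $\delta_B Y^{\mu\nu}=0$, i.e. that $Y^{\mu\nu}$ depends only on the background metric. Apart from this, the argument is a routine completion of the square carried out at the level of rigour customary for path integrals.
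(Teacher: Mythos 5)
Your proposal is correct and follows essentially the same route the paper takes: the paper obtains \eqref{Generating_Functional_Generic_1} precisely by noting that the exponent is Gaussian in $B_\mu$ and integrating it out, which is your completion of the square producing the $\tfrac{1}{2\varepsilon\kappa^2}\,\mathcal{G}_\mu Y^{\mu\nu}\mathcal{G}_\nu$ term, the untouched ghost term, and the $1/\sqrt{\det Y}$ prefactor. Your added remarks on the shift invariance of $\mathcal{D}[B]$ and on $\det Y$ being field independent (guaranteed by $\delta_B Y^{\mu\nu}=0$, i.e. $Y$ depending only on the background) are consistent with the paper's discussion and introduce no discrepancy.
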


The presence of $1/\sqrt{\det Y}$ is well known, and it was discovered in many classical papers on higher derivative gravity  \cite{Julve:1978xn,Fradkin:1981hx,Barth:1983hb,Ohta:2020bgz,Shapiro:2022ugk}. However, the factor $1/\sqrt{\det Y}$ is non-trivial if and only if $Y^{\mu\nu}$ is a differential operator. If $Y^{\mu\nu}$ does not contain derivatives, then the corresponding determinant is a finite constant absorbed by the normalisation. Let us also note that the sign of the last term is entirely arbitrary. One can change it either by changing the sign of the gauge fixing condition $\mathcal{G}_\mu \to - \mathcal G_\mu$ or by simultaneously changing the sign of the $Y^{\mu\nu} \to - Y^{\mu\nu}$ and the gauge fixing parameter $\varepsilon \to - \varepsilon$.

We use the following gauge fixing condition, which is a vector for the background geometry:
\begin{align}\label{the_gauge_fixing_condition}
    & \mathcal{G}_\mu \overset{\text{def}}{=} \kappa\left( \overline{\nabla}^\sigma h_{\mu\sigma} - \frac12 \, \overline{\nabla}_\mu \left( \overline{g}^{\alpha\beta} h_{\alpha\beta} \right) \right) = \cfrac{\kappa}{2} \, \mathfrak{C}_\mu{}^{\nu\alpha\beta} \overline{\nabla}_\nu h_{\alpha\beta}
\end{align}
with the following notation of the $\mathfrak{C}$ tensor:
\begin{align}
    \mathfrak{C}_{\mu\nu\alpha\beta} \overset{\text{def}}{=} \overline{g}_{\mu\alpha} \overline{g}_{\nu\beta} + \overline{g}_{\mu\beta} \overline{g}_{\nu\alpha} - \overline{g}_{\mu\nu} \overline{g}_{\alpha\beta}.
\end{align}
The BRST action on this gauge fixing term has the following form:
\begin{align}
    \delta_B \mathcal{G}_\mu = \overline{\square} c_\mu + \overline{R}_{\mu\nu} \, c^\nu + \cfrac{\kappa}{2}\, \overline{\nabla}_\nu \left[ \mathfrak{C}_\mu{}^{\nu\alpha\beta} \left( h_{\alpha\lambda} \overline{\nabla}_\beta + h_{\beta\lambda}\overline{\nabla}_\alpha + \overline{\nabla}_\lambda h_{\alpha\beta} \right) c^\lambda \right] .
\end{align}
Consequently, we obtain the following form of the generating functional that we use below.

\begin{theorem}

    {~}

    \noindent The generating functional\footnote{ We took  formulae for \eqref{Generating_Functional_Generic_1} and replaced $\mathcal{G}_\mu \to - \mathcal{G}_\mu$ and $\varepsilon\to\varepsilon/4$. If we were to proceed with \eqref{Generating_Functional_Generic_1}, then the gauge fixing term for $h_{\mu\nu}$ would receive an additional $1/4$ factor, while the ghost-perturbation vertex would have a different sign. The choice of gauge fixing term and parameter does not influence the physical content of the model. We only change these definitions to obtain more conveniently looking formulae.}
    \begin{align}
        \begin{split}
            \mathcal{Z} & \overset{\text{def}}{=} \int\mathcal{D}[g]\mathcal{D}[B]\mathcal{D}[c]\mathcal{D}[\overline{c}] \exp\Bigg[ i\,\mathcal{A}[g] + i \, \mathcal{A}_\text{gf}[g,B,c,\overline{c}] \Bigg]\, \\
            \mathcal{A}_\text{gf} &\overset{\text{def}}{=} \delta_B \int d^4 x \sqrt{-\overline{g}} ~ \overline{c}_\mu \, Y^{\mu\nu} \left( - \mathcal{G}_\nu - 2\, \varepsilon \, \kappa^2 B_\nu \right) \\
            \mathcal{G}_\mu &\overset{\text{def}}{=} \kappa\left( \overline{\nabla}^\sigma h_{\mu\sigma} - \frac12 \, \overline{\nabla}_\mu \left( \overline{g}^{\alpha\beta} h_{\alpha\beta} \right) \right)
        \end{split}
    \end{align}
    takes the following form after the integration over the auxiliary field
    \begin{align}\label{the_Generating_functional}
        \begin{split}
            \mathcal{Z} =& \cfrac{1}{\sqrt{\det Y}} \int\mathcal{D}[h]\mathcal{D}[c]\mathcal{D}[\overline{c}] \exp\Bigg[ i\, \mathcal{A} + i \int d^4 x \sqrt{-\overline{g}} \left[ \cfrac{1}{2\,\varepsilon} \overline{\nabla}_\rho h_{\mu\nu} \left( \mathfrak{C}_\lambda{}^{\rho\mu\nu} \mathfrak{C}_\tau{}^{\sigma\alpha\beta} \,Y^{\lambda\tau} \right) \overline{\nabla}_\sigma h_{\alpha\beta} \right] \\
            & + i \! \int \!\! d^4 x \sqrt{-\overline{g} } \left[ \overline{c}_\mu \,Y^{\mu\nu} \overline{\square} c_\nu  + \overline{c}_\mu \,Y^{\mu\nu} \overline{R}_{\nu\lambda} c^\lambda + \cfrac{\kappa}{2} \,\overline{c}_\mu \, Y^{\mu\nu} \overline{\nabla}_\tau \left[ \mathfrak{C}_\nu{}^{\tau\alpha\beta} \left( h_{\alpha\lambda} \overline{\nabla}_\beta \!+\! h_{\beta\lambda}\overline{\nabla}_\alpha \!+\! \overline{\nabla}_\lambda h_{\alpha\beta} \right) c^\lambda \right] \right] \Bigg].
        \end{split}
    \end{align}
\end{theorem}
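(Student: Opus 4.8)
The plan is to obtain the stated form of $\mathcal{Z}$ by three elementary moves: (i) evaluate the BRST variation that defines $\mathcal{A}_\text{gf}$; (ii) perform the Gaussian functional integral over $B_\mu$; (iii) insert the explicit expressions for $\mathcal{G}_\mu$ and $\delta_B\mathcal{G}_\mu$. Most of the groundwork is already laid by the preceding theorems in this section, so the argument is largely assembly plus one genuine computation.

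First I would apply $\delta_B$ to $\int d^4x\,\sqrt{-\overline g}\;\overline c_\mu\,Y^{\mu\nu}\bigl(-\mathcal{G}_\nu - 2\varepsilon\kappa^2 B_\nu\bigr)$. Using that $\delta_B$ acts as an odd derivation, that $\sqrt{-\overline g}$ and $Y^{\mu\nu}$ are BRST invariant (so no variation hits them), and the rules $\delta_B\overline c_\mu = B_\mu$, $\delta_B B_\mu = 0$ from the perturbative BRST theorem, the variation collapses to a term quadratic in $B_\mu$, a cross term $\propto B_\mu Y^{\mu\nu}\mathcal{G}_\nu$, and a ghost term $\propto \overline c_\mu Y^{\mu\nu}\delta_B\mathcal{G}_\nu$ — precisely the structure established in the earlier ``gauge fixing term'' theorem, up to the sign and parameter rescaling recorded in the footnote. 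This step is purely algebraic and uses no properties of $\mathcal{G}_\mu$.

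Second, the $B_\mu$-dependence is Gaussian, so I would complete the square with a shift $B_\mu \mapsto B_\mu + (\text{const})\,\mathcal{G}_\mu$ (independent of $B$, hence harmless in the measure). The residual $B$-integral produces $(\det Y)^{-1/2}$ together with a field-independent constant absorbed into the normalisation of $\mathcal{Z}$; here one must keep in mind that $Y^{\mu\nu}$ may be a differential operator, so $\det Y$ is a genuine functional determinant and cannot be dropped, which is exactly the $1/\sqrt{\det Y}$ prefactor in the claim. What is left behind is a quadratic remainder $\propto \mathcal{G}_\mu Y^{\mu\nu}\mathcal{G}_\nu/(\varepsilon\kappa^2)$ with the coefficient fixed by the footnote's choice of gauge fixing term and parameter. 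Steps (i)--(ii) together just reproduce the generic generating functional established earlier, now specialised to $\mathcal{G}_\mu = \tfrac{\kappa}{2}\,\mathfrak{C}_\mu{}^{\nu\alpha\beta}\overline\nabla_\nu h_{\alpha\beta}$.

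Third, I would substitute this $\mathcal{G}_\mu$ into the quadratic remainder, which immediately produces the $\overline\nabla_\rho h_{\mu\nu}\bigl(\mathfrak{C}_\lambda{}^{\rho\mu\nu}\mathfrak{C}_\tau{}^{\sigma\alpha\beta}Y^{\lambda\tau}\bigr)\overline\nabla_\sigma h_{\alpha\beta}$ structure, and substitute $\delta_B\mathcal{G}_\nu$ into the ghost term to read off the $\overline{\square}c$, $\overline R c$, and cubic $h c$ pieces. The single non-mechanical point is confirming the displayed form of $\delta_B\mathcal{G}_\mu$: applying $\overline\nabla_\nu$ to the perturbative $\delta_B h_{\alpha\beta}$, the $h$-independent terms give $\mathfrak{C}_\mu{}^{\nu\alpha\beta}\overline\nabla_\nu\overline\nabla_\alpha c_\beta$, which upon contracting the three pieces of $\mathfrak{C}$ reduces to $\overline{\square}c_\mu + [\overline\nabla^\nu,\overline\nabla_\mu]c_\nu = \overline{\square}c_\mu + \overline R_{\mu\nu}c^\nu$ by the background Ricci identity, while the $h$-dependent terms reassemble verbatim into the displayed cubic vertex. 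I expect this curvature-commutator step — getting $\overline R_{\mu\nu}$ with the correct sign and checking no stray terms survive the $\mathfrak{C}$ contraction — to be the only place requiring care; the remainder is bookkeeping. Finally I would note that $\mathcal{A}[g]$ enters only as $\mathcal{A}[h]$ under \eqref{the_perturbative_metric} and is untouched by the gauge fixing, which accounts for the first term of the exponent and for the graviton self-vertices remaining unaffected.
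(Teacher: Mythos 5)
Your proposal is correct and follows essentially the same route as the paper: the theorem is obtained by specialising the earlier generic result (BRST variation of the gauge-fixing term using $\delta_B\overline{c}_\mu=B_\mu$, $\delta_B B_\mu=0$ and the BRST invariance of $\sqrt{-\overline g}$ and $Y^{\mu\nu}$, then Gaussian integration over $B_\mu$ yielding $1/\sqrt{\det Y}$ and the $\mathcal{G}_\mu Y^{\mu\nu}\mathcal{G}_\nu$ remainder) to $\mathcal{G}_\mu=\tfrac{\kappa}{2}\,\mathfrak{C}_\mu{}^{\nu\alpha\beta}\overline\nabla_\nu h_{\alpha\beta}$, with the Ricci-identity step $\mathfrak{C}_\mu{}^{\nu\alpha\beta}\overline\nabla_\nu\overline\nabla_\alpha c_\beta=\overline\square c_\mu+\overline R_{\mu\nu}c^\nu$ being exactly the computation the paper performs just before stating the theorem. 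Your deferral of the precise numerical coefficient to the footnote's redefinitions ($\mathcal{G}_\mu\to-\mathcal{G}_\mu$, $\varepsilon\to\varepsilon/4$ applied to the generic formula, which produces the $1/(2\varepsilon)$ and the $+\overline{c}_\mu Y^{\mu\nu}\delta_B\mathcal{G}_\nu$ sign) matches the paper's own bookkeeping, so nothing essential is missing.
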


The generating functional \eqref{the_Generating_functional} has a remarkably simple structure. Its first term is quadratic in perturbations, so it only influences the form of the graviton propagator. The expression \eqref{the_Generating_functional} has no terms with higher powers of perturbations, so this gauge fixing procedure does not influence expressions for graviton vertices. The ghost sector contains a part quadratic in ghosts, describing their propagation around the background spacetime. Only the last term in \eqref{the_Generating_functional} describes an interaction which is a cubic coupling between two ghosts and a single graviton. This simple structure of the ghost sector is in stark contrast with the results obtained with the direct implementation of the Faddeev-Popov prescription \cite{Latosh:2023zsi}. The reason behind this is the suitable choice of the gauge fixing term, which appears to be the simplest gauge fixing term conceivable within the BRST formalism.

The generating functional \eqref{the_Generating_functional} provides the most general expression suitable for any gravitational model. To operate with a particular model, one has to specify the microscopic action $\mathcal{A}$ and the operator $Y^{\mu\nu}$. We consider the two most important cases: general relativity and quadratic gravity. These models play a unique role since they define the most general structure of a graviton propagator. Any other gravity model reduces to them up to a non-trivial coupling between matter and gravity.

\subsection{General relativity}

The structure of the Hilbert action $\mathcal{A}_\text{H}$ was thoroughly discussed in previous publications \cite{Latosh:2022ydd,Latosh:2023zsi,Latosh:2024lhl}, so we do not reproduce its derivation here. The main focus of this subsection is the gauge fixing term and the ghost sector.

General relativity admits second-order field equations, so there is no need to introduce derivatives to $Y^{\mu\nu}$. We define $Y^{\mu\nu}$ simply to be the background metric:
\begin{align}
    Y^{\mu\nu} = \overline{g}^{\mu\nu}.
\end{align}
This choice of the operator produces the following structure of the generating functional \eqref{the_Generating_functional}:
\begin{align}
    \begin{split}
        \mathcal{Z} =& \int\mathfrak{D}[h]\mathcal{D}[c]\mathcal{D}[\overline{c}]\,\exp\Bigg[ i\, \left(-\cfrac{2}{\kappa^2} \right) \int d^4 x \sqrt{-g} \,g^{\mu\nu} g^{\alpha\beta} g^{\rho\sigma} \left[ \Gamma_{\alpha\mu\rho} \Gamma_{\sigma\nu\beta} - \Gamma_{\alpha\mu\nu} \Gamma_{\rho\beta\sigma} \right] \\
        &- \cfrac{i}{2\,\varepsilon} \int d^4 x \sqrt{-\overline{g}} \,h^{\mu\nu} \Big[ \left( \overline{g}_{\nu\rho} \overline{g}_{\sigma\alpha} \overline{g}_{\beta\mu} +\cdots \right) - \left( \overline{g}_{\mu\nu} \overline{g}_{\alpha\rho} \overline{g}_{\beta\sigma} + \cdots  \right) + \overline{g}_{\mu\nu} \overline{g}_{\alpha\beta} \overline{g}_{\rho\sigma} \Big] \overline{\nabla}_\rho \overline{\nabla}_\sigma h^{\alpha\beta} \\
        & + i\, \int d^4 x \sqrt{-\overline{g}} \Big[ \overline{c}^\nu \left( \overline{g}_{\mu\nu} \overline{\square} + \overline{R}_{\mu\nu} \right) c^\mu - \cfrac{\kappa}{2} \,\overline{\nabla}_\lambda \overline{c}^\nu \mathfrak{C}_\nu{}^{\lambda\alpha\beta} \left( h_{\alpha\mu} \overline{\nabla}_\beta + h_{\beta\mu} \overline{\nabla}_\alpha + \overline{\nabla}_\mu h_{\alpha\beta} \right) c^\mu \Big] \Bigg].
    \end{split}
\end{align}
In the second term, $\cdots$ note the terms required for the expression to be symmetric for $\mu\leftrightarrow\nu$, $\alpha\leftrightarrow\beta$, and $(\mu,\nu)\leftrightarrow(\alpha,\beta)$. In turn, $\Gamma_{\alpha\mu\nu}$ is the Christoffel symbol with all low indices, which admits a finite expansion in perturbations
\begin{align}
    \Gamma_{\alpha\mu\nu} = \cfrac12\,\left[ \pd_\mu g_{\nu\alpha} + \pd_\nu g_{\mu\alpha} - \pd_\alpha g_{\mu\nu} \right] = \overline{\Gamma}_{\alpha\mu\nu} + \kappa\, \overline{\Gamma}^\sigma_{\mu\nu} \,h_{\sigma\alpha} + \cfrac{\kappa}{2} \left[ \overline{\nabla}_\mu h_{\nu\alpha} + \overline{\nabla}_\nu h_{\mu\alpha} - \overline{\nabla}_\alpha h_{\mu\nu} \right] .
\end{align}

To obtain the Feynman rules, we choose the flat background and obtain the following expression:
\begin{align}\label{Generating_functional_GR}
    \begin{split}
        \mathcal{Z} =& \int\mathfrak{D}[h]\mathcal{D}[c]\mathcal{D}[\overline{c}]\,\exp\Bigg[ i\, \left(-\cfrac{2}{\kappa^2} \right) \int d^4 x \sqrt{-g} \,g^{\mu\nu} g^{\alpha\beta} g^{\rho\sigma} \left[ \Gamma_{\alpha\mu\rho} \Gamma_{\sigma\nu\beta} - \Gamma_{\alpha\mu\nu} \Gamma_{\rho\beta\sigma} \right] \\
        & + i\, \cfrac{1}{\varepsilon} \int \cfrac{d^4 p}{(2\pi)^4} \,h^{\mu\nu}(-p) \left( P^1_{\mu\nu\alpha\beta} +\cfrac32\,P^0_{\mu\nu\alpha\beta} + \cfrac12\, \overline{P}^0_{\mu\nu\alpha\beta} - \cfrac12 \overline{\overline{P}}^0_{\mu\nu\alpha\beta} \right) \, p^2 \,h^{\alpha\beta}(p) \\
        & + i\, \int\cfrac{d^4p}{(2\pi)^4} \, \overline{c}^\nu(-p) \left( - \eta_{\mu\nu} p^2 \right) c^\mu(p) - i\, \cfrac{\kappa}{2}\, \int d^4 x \, \pd_\lambda \overline{c}^\nu \mathfrak{C}_\nu{}^{\lambda\alpha\beta} \left( h_{\alpha\mu} \pd_\beta + h_{\beta\mu} \pd_\alpha + \pd_\mu h_{\alpha\beta} \right) c^\mu \Big] \Bigg].
    \end{split}
\end{align}
Here $P^1$, $P^2$, $P^0$, $\overline{P}^0$, and $\overline{\overline{P}}^0$ are the Nieuwenhuizen operators \cite{VanNieuwenhuizen:1973fi,Accioly:2000nm} which we present in Appendix \ref{Appendix_Nieuwenhuizen}. This expression produces the standard ghost propagator:
\begin{align}
    \begin{gathered}
        \begin{fmffile}{Ghost_Propagator}
            \begin{fmfgraph*}(30,30)
                \fmfleft{L}
                \fmfright{R}
                \fmf{dots_arrow}{L,R}
                \fmflabel{$\mu$}{L}
                \fmflabel{$\nu$}{R}
            \end{fmfgraph*}
        \end{fmffile}
    \end{gathered}
     \hspace{20pt }= i\, \cfrac{\eta_{\mu\nu}}{p^2} \, .
\end{align}
The ghost couples to gravitons by a single three-particle vertex:
\begin{align}
\nonumber \\
    \begin{gathered}
        \begin{fmffile}{Ghost_Graviton_Vertex}
            \begin{fmfgraph*}(40,40)
                \fmfleft{L}
                \fmfright{R1,R2}
                \fmf{dbl_wiggly}{L,V}
                \fmf{dots_arrow}{R1,V}
                \fmf{dots_arrow}{V,R2}
                \fmfdot{V}
                \fmflabel{$\rho\sigma,k$}{L}
                \fmflabel{$\mu,p_1$}{R1}
                \fmflabel{$\nu,p_2$}{R2}
            \end{fmfgraph*}
        \end{fmffile}
    \end{gathered}    
    = i\,\cfrac{\kappa}{2} \, (p_2)_\lambda \mathfrak{C}_\nu{}^{\lambda\alpha\beta}\left[ (p_1)_\alpha I_{\beta\mu\rho\sigma} + (p_1)_\beta I_{\alpha\mu\rho\sigma} + k_\mu I_{\alpha\beta\rho\sigma} \right] . \\ \nonumber
\end{align}
In this diagram, all momenta are directed inwards and $I_{\mu\nu\alpha\beta}$ is the unit tensor:
\begin{align}
    I_{\mu\nu\alpha\beta} \overset{\text{def}}{=} \cfrac12\,\left( \eta_{\mu\alpha}\eta_{\nu\beta} + \eta_{\mu\beta} \eta_{\nu\alpha} \right).
\end{align}
Lastly, the quadratic part of the action, which is responsible for the graviton propagator, reads:
\begin{align}
    \begin{split}
        & -\cfrac{2}{\kappa^2} \int d^4 x \sqrt{-g} \,g^{\mu\nu} g^{\alpha\beta} g^{\rho\sigma} \left[ \Gamma_{\alpha\mu\rho} \Gamma_{\sigma\nu\beta} - \Gamma_{\alpha\mu\nu} \Gamma_{\rho\beta\sigma} \right] \\
        & \hspace{20pt}+ \cfrac{1}{\varepsilon} \int \cfrac{d^4 p}{(2\pi)^4} \,h^{\mu\nu}(-p) \left( P^1_{\mu\nu\alpha\beta} +\cfrac32\,P^0_{\mu\nu\alpha\beta} + \cfrac12\, \overline{P}^0_{\mu\nu\alpha\beta} - \cfrac12  \,\overline{\overline{P}}^0_{\mu\nu\alpha\beta} \right) \, p^2 \,h^{\alpha\beta}(p) \\
        & = \cfrac12 \, \int \cfrac{d^4 p}{(2\pi)^4}  \, h^{\mu\nu}(-p) \left( \mathcal{O}_\text{GR+gf} \right)_{\mu\nu\alpha\beta} \,p^2\, h^{\alpha\beta}(p) + \mathcal{O}\left(\kappa \right) .
    \end{split}
\end{align}
We omit $\mathcal{O}\left(\kappa\right)$ terms because they describe graviton vertices and remain unaffected by the gauge fixing term. The operator $\mathcal{O}_\text{GR+gf}$ has the following form\footnote{To obtain expressions from \cite{Latosh:2023zsi}, one shall replace $\varepsilon \to 4/\epsilon$.} in terms of the Nieuwenhuizen operators:
\begin{align}
    \left( \mathcal{O}_\text{GR+gf} \right)_{\mu\nu\alpha\beta}  = \cfrac{2}{\varepsilon} \, P^1_{\mu\nu\alpha\beta} + P^2_{\mu\nu\alpha\beta} + \left( \cfrac{3}{\varepsilon} - 2 \right) P^0_{\mu\nu\alpha\beta} + \cfrac{1}{\varepsilon} \, \overline{P}^0_{\mu\nu\alpha\beta} - \cfrac{1}{\varepsilon} \, \overline{\overline{P}}^0_{\mu\nu\alpha\beta} .
\end{align}
The inverse operator in arbitrary $d$ reads:
\begin{align}
    \left( \mathcal{O}_\text{GR+gf}^{-1} \right)_{\mu\nu\alpha\beta} = \cfrac{\varepsilon}{2} P^1_{\mu\nu\alpha\beta} + P^2_{\mu\nu\alpha\beta} + \cfrac{d-5}{d-2}\, P^0_{\mu\nu\alpha\beta} + \cfrac{(d-1)(\varepsilon-1)-\varepsilon}{d-2} \, \overline{P}^0_{\mu\nu\alpha\beta} - \cfrac{1}{d-2} \, \overline{\overline{P}}^0_{\mu\nu\alpha\beta}.
\end{align}
Consequently, the graviton propagator takes the following form:
\begin{align}
    \begin{gathered}
        \begin{fmffile}{Graviton_Propagator}
            \begin{fmfgraph*}(30,30)
                \fmfleft{L}
                \fmfright{R}
                \fmf{dbl_wiggly}{L,R}
                \fmflabel{$\mu\nu$}{L}
                \fmflabel{$\alpha\beta$}{R}
            \end{fmfgraph*}
        \end{fmffile}
    \end{gathered}
    \hspace{20pt } = \mathcal{G}_{\mu\nu\alpha\beta}(p) = i\,  \cfrac{\left( \mathcal{O}_\text{GR+gf}^{-1} \right)_{\mu\nu\alpha\beta}}{p^2} \,.
\end{align}
Finally, it takes a very simple and well-known for $\varepsilon = 2$:
\begin{align}\label{GR_propagator_simple}
    \mathcal{G}_{\mu\nu\alpha\beta}(p) = \cfrac12\, \left[ \eta_{\mu\alpha} \eta_{\nu\beta} + \eta_{\mu\beta} \eta_{\nu\alpha} - \cfrac{2}{d-2} \, \eta_{\mu\nu} \eta_{\alpha\beta} \right] \cfrac{i}{p^2} \, . \\ \nonumber
\end{align}

To conclude this section, let us discuss the relation between the given ghost constructions and the other known in the literature. The gauge fixing conditions \eqref{the_gauge_fixing_condition} are well-known in the literature and date back at least to \cite{Donoghue:1994dn,Donoghue:2017pgk}. The main difference between the gauge fixing condition \eqref{the_gauge_fixing_condition} and those used in \cite{Donoghue:1994dn,Donoghue:2017pgk} is in the following. One can introduce the standard harmonic gauge fixing term $g^{\mu\nu} \Gamma^\alpha_{\mu\nu}$ to the theory. Its contribution leading in $\kappa$ matches \eqref{the_gauge_fixing_condition}. However, the further one expands the gravity action in $\kappa$, the further they must expand the gauge fixing term. One of the previous papers devoted to FeynGrav followed exactly this approach \cite{Latosh:2023zsi}. It results in a theory with an infinite number of graviton-ghost vertices. More importantly, there is no reason to cut the expansion of the harmonic gauge fixing term at a given order in $\kappa$, for this goes against the perturbation theory logic.

In this section, we show that one can use the gauge fixing condition \eqref{the_gauge_fixing_condition} because it respects the sophisticated structure of gauge transformations in perturbative gravity. As we show above, the gauge transformation of the complete metric $g_{\mu\nu}$ consists of the gauge transformation of the background metric $\overline{g}_{\mu\nu}$ and gauge transformations of the perturbations $h_{\mu\nu}$. If one imposes gauge fixing conditions on the background metric, this would specify the physical frame in which we are working. On the contrary, gauge fixing conditions on perturbations do not in any way specify the physical frame, but rather remove redundancy in the description. Therefore, within the perturbative setting, the harmonic gauge fixing conditions $g^{\mu\nu} \Gamma^\alpha_{\mu\nu}$ imply that both perturbations and the physical frame should satisfy the given equation. On the contrary, equation \eqref{the_gauge_fixing_condition} means that in an arbitrary physical frame of our choosing, perturbations shall satisfy the given equation. And it is this separation of the two components of the gauge fixing condition that simplifies the theory's structure.

\subsection{Quadratic gravity}

The quadratic gravity is given by the following action, whose perturbative structure was discussed in \cite{Latosh:2024lhl}:
\begin{align}
    \mathcal{A}_\text{QG} = \int d^4 x \sqrt{-g} \left( - \cfrac{2}{\kappa^2} \right) \left[ R - \cfrac{1}{6\,m_0^2} \,R^2 +\cfrac{1}{m_2^2} \left( R_{\mu\nu}^2 -\cfrac13\, R^2 \right) \right].
\end{align}
The model contains the standard gravitons, scalars with mass $m_0$, and spin $2$ ghosts with mass $m_2$. The model is renormalisable, but it has ghost degrees of freedom, which challenge its unitarity \cite{Stelle:1976gc,Stelle:1977ry}. The discussion of unitarity and the applicability of this model lies far beyond the scope of this paper and will not be addressed. The goal of this paper (and of the \texttt{FeynGrav} package) is to provide tools to work with Feynman rules. Because of this, we treat the model perturbatively, as any other gravity model, and do not address any issues related to its unitarity or applicability.

Among all the states present in quadratic gravity, only gravitons contain gauge redundancy since the other perturbations are massive. Because of this, one can use exactly the same gauge fixing procedure that was used for general relativity. The procedure results in the same ghost propagator and ghost-graviton vertex. Graviton vertices remain unaffected, and the gauge fixing term only influences the graviton propagator. Similarly to general relativity, we introduce an operator describing the quadratic part of the action:
\begin{align}
    \begin{split}
        &\mathcal{A}_\text{QG} + \mathcal{A}_\text{gf} = \cfrac12\, \int\cfrac{d^4 p}{ (2\pi)^4} \, h^{\mu\nu} (-p) \, \left( \mathcal{O}_\text{QG+gf}\right)_{\mu\nu\alpha\beta} \square h^{\alpha\beta}(p) + \mathcal{O}\left( \kappa \right) ,\\
        & \left( \mathcal{O}_\text{QG+gf}\right)_{\mu\nu\alpha\beta}  = \cfrac{2}{\varepsilon} \, P^1_{\mu\nu\alpha\beta} +\left( 1 - \cfrac{p^2}{m_2^2} \right) P^2_{\mu\nu\alpha\beta} + \left( \cfrac{3}{\varepsilon} - 2 + 2 \,\cfrac{p^2}{m_0^2} \right) P^0_{\mu\nu\alpha\beta} + \cfrac{1}{\varepsilon} \, \overline{P}^0_{\mu\nu\alpha\beta} - \cfrac{1}{\varepsilon} \, \overline{\overline{P}}^0_{\mu\nu\alpha\beta} .
    \end{split}
\end{align}
Consequently, in arbitrary dimensions, the graviton propagator reads:
\begin{align}
    \begin{split}
        \mathcal{G}_{\mu\nu\alpha\beta}(p) = & \left( \mathcal{O}_\text{QG+gf}^{-1}\right)_{\mu\nu\alpha\beta} \,\cfrac{i}{p^2}  - \left( P^2_{\mu\nu\alpha\beta} + \cfrac{d-4}{d-1} \,P^0_{\mu\nu\alpha\beta} \right) \cfrac{i}{p^2 - m_2^2} \\
        & + \cfrac{1}{d-2} \left( \cfrac{3}{d-1} \,P^0_{\mu\nu\alpha\beta} + (d-1) \,\overline{P}^0_{\mu\nu\alpha\beta} + \overline{\overline{P}}^0_{\mu\nu\alpha\beta} \right) \,\cfrac{i}{p^2 - \cfrac{3(d-2)\,m_0^2 m_2^2}{(d-4)\, m_0^2 +2(d-1)\, m_2^2} }  \, .
    \end{split}
\end{align}

On the other hand, one can use the higher derivative gauge fixing and introduce derivative operators to the operator $Y^{\mu\nu}$. For the sake of simplicity, we use the same gauge fixing conditions $\mathcal{G}_\mu$ and introduce derivatives to the operator $Y^{\mu\nu}$ as follows:
\begin{align}
    Y^{\mu\nu} = \varepsilon_0 \, \overline{g}^{\mu\nu} \, \overline{\square} + \varepsilon_1 \overline{\nabla}^\mu \overline{\nabla}^\nu + \varepsilon_2 \overline{\nabla}^\nu \overline{\nabla}^\mu = \varepsilon_0 \, \overline{g}^{\mu\nu} \, \overline{\square} + \varepsilon^+ \left\{\overline\nabla^\mu,\overline\nabla^\nu\right\} + \varepsilon^- \left[ \overline\nabla^\mu,\overline\nabla^\nu \right] .
\end{align}
Here $\left\{\overline\nabla_\mu,\overline\nabla_\nu\right\} \overset{\text{def}}{=} \overline\nabla_\mu \overline\nabla_\nu + \overline\nabla_\nu \overline\nabla_\mu $ is the anticommutator, and the gauge fixing parameters $\varepsilon^\pm$ accompanying the symmetric and antisymmetric combinations of derivatives are related to $\varepsilon_{1,2}$ as follows:
\begin{align}
        \varepsilon^\pm & = \cfrac{\varepsilon_1 \pm \varepsilon_2}{2} \,.
\end{align}
Consequently, the part of the action describing the ghosts becomes
\begin{align}
    \begin{split}
        & \int d^4 x \sqrt{-\overline{g}}~ \overline{c}_\mu Y^{\mu\nu}  \left( \overline{\square} c_\nu + \overline{R}_{\nu\lambda} c^\lambda \right) \\
        & = \int d^4 x \sqrt{-\overline{g}} \,\Bigg[ \varepsilon_0 \left\{ \overline{c}_\mu \overline{\square}^2 c^\mu + \overline{c}_\mu \overline{\square} (\overline{R}^{\mu\nu} c_\nu ) \right\}  + \overline{c}_\mu \left[ \varepsilon^+  \{\overline{\nabla}^\mu,\overline{\nabla}^\nu \} - \varepsilon^- \overline{R}^{\mu\nu} \right](\overline{\square}c_\nu + \overline{R}_{\nu\lambda}c^\lambda) \Bigg].
    \end{split}
\end{align}
In the flat background spacetime limit, it reduces to the following form
\begin{align}
    \int\cfrac{d^4 p}{(2\pi)^4} \,c^\nu(-p) \,\varepsilon_0 \left[ \eta_{\mu\nu} + 2 \cfrac{\varepsilon^+}{\varepsilon_0} \, \cfrac{p_\mu p_\nu}{p^2} \right] \,p^4 c^\mu(p) .
\end{align}
It produces the following propagator:
\begin{align}
    \begin{gathered}
        \begin{fmffile}{Ghost_HD_Propagator}
            \begin{fmfgraph*}(30,30)
                \fmfleft{L}
                \fmfright{R}
                \fmf{dots_arrow}{L,R}
                \fmflabel{$\mu$}{L}
                \fmflabel{$\nu$}{R}
            \end{fmfgraph*}
        \end{fmffile}
    \end{gathered}
    \hspace{20pt} = i\, \cfrac{1}{\varepsilon_0} \left( \eta_{\mu\nu} - \cfrac{\varepsilon_1+\varepsilon_2}{\varepsilon_0+\varepsilon_1+\varepsilon_2} \, \cfrac{p_\mu \,p_\nu}{p^2} \right) \cfrac{1}{p^4} \, .
\end{align}

The gauge fixing term for the graviton takes the following form:
\begin{align}
    \begin{split}
        & \int d^4 x \sqrt{-\overline{g}} \left[ \cfrac{1}{2\,\varepsilon} \overline{\nabla}_\rho h_{\mu\nu} \left( \mathfrak{C}_\lambda{}^{\rho\mu\nu} \mathfrak{C}_\tau{}^{\sigma\alpha\beta} \,Y^{\lambda\tau} \right) \overline{\nabla}_\sigma h_{\alpha\beta} \right] \\
        &= -\cfrac{1}{2\,\varepsilon} \int d^4 x \sqrt{-\overline{g}} \,h_{\mu\nu} \,\mathfrak{C}^{\lambda\rho\mu\nu} \mathfrak{C}^{\tau\sigma\alpha\beta} \overline{\nabla}_\rho \left\{ \varepsilon_0 \, \overline{g}_{\lambda\tau} \overline{\square} + \varepsilon^+  \left\{  \overline{\nabla}_\lambda , \overline{\nabla}_\tau  \right\} \right\} \overline{\nabla}_\sigma h^{\alpha\beta} .
    \end{split}
\end{align}
The contribution with $\varepsilon^-$ does not enter this expression since it is an antisymmetric operator with respect to $\mu\leftrightarrow\nu$. In the flat background, the term has the following representation with the Nieuwenhuizen operators:
\begin{align}
    \begin{split}
        &-\cfrac{1}{2\,\varepsilon} \int\cfrac{d^4 p}{(2\pi)^4} \,h^{\mu\nu}(-p) \,\mathfrak{C}^{\lambda\rho\mu\nu} \mathfrak{C}^{\tau\sigma\alpha\beta} \left\{ \varepsilon_0 \, \eta_{\lambda\tau} \, p^2 \, p_\rho \, p_\sigma + 2\, \varepsilon^+  p_\lambda \, p_\tau \, p_\rho \, p_\sigma \right\} h^{\alpha\beta}  \\
        &= \cfrac{1}{2} \int \cfrac{d^4 p}{(2\pi)^4} \,h^{\mu\nu}(-p) \left(\mathcal{O}_\text{gfHD}\right)_{\mu\nu\alpha\beta} \, p^4 \, h^{\alpha\beta}(p),
    \end{split}
\end{align}
\begin{align}
    \left(\mathcal{O}_\text{gfHD}\right)_{\mu\nu\alpha\beta} = -\cfrac{1}{\varepsilon} \left( 2\, \varepsilon_0 P^1_{\mu\nu\alpha\beta} + \left( \varepsilon_0 + \varepsilon_1 + \varepsilon_2 \right)\left[ 3 P^0_{\mu\nu\alpha\beta} +  \overline{P}^0_{\mu\nu\alpha\beta} - \overline{\overline{P}}^0_{\mu\nu\alpha\beta} \right] \right) .
\end{align}
Consequently, the graviton propagator takes the following form:
\begin{align}
    \begin{split}
        \mathcal{G}_{\mu\nu\alpha\beta} =& \left[ P^2_{\mu\nu\alpha\beta} + \cfrac{d-5}{d-2} \, P^0_{\mu\nu\alpha\beta} - \cfrac{d-1}{d-2} \, \overline{P}^0_{\mu\nu\alpha\beta} - \cfrac{1}{d-2} \, \overline{\overline{P}}^0_{\mu\nu\alpha\beta} \right] \cfrac{i}{p^2}\\
        & - \left[ P^2_{\mu\nu\alpha\beta} + \cfrac{d-4}{d-1} \,P^0_{\mu\nu\alpha\beta} \right] \cfrac{i}{p^2 - m_2^2} - \varepsilon \left[ \cfrac{1}{2\varepsilon_0} \,P^1_{\mu\nu\alpha\beta} + \cfrac{1}{\varepsilon_0 + \varepsilon_1 + \varepsilon_2}\, \overline{P}^0_{\mu\nu\alpha\beta} \right] \cfrac{i}{p^4} \\
        & + \cfrac{1}{d-2}\left[ \cfrac{3}{d - 1} \, P^0_{\mu\nu\alpha\beta} + (d-1)\,\overline{P}^0_{\mu\nu\alpha\beta} + \overline{\overline{P}}^0_{\mu\nu\alpha\beta} \right] \cfrac{i}{p^2 - \cfrac{3(d-2)\,m_0^2 \,m_2^2}{(d-4) m_0^2+ 2(d-1) m_2^2} } \, .
    \end{split}
\end{align}

Lastly, the term describing the coupling of ghosts to gravitons reads
\begin{align}
        \mathcal{A}_{int} &= \cfrac{\kappa}{2} \int\! d^4 x \sqrt{-\overline{g}} ~ \overline{c}_\mu  \mathfrak{C}_\nu{}^{\tau\alpha\beta} \left[ \varepsilon_0 \, \overline{g}^{\mu\nu} \, \overline{\square} + \varepsilon_1 \overline\nabla^\mu\overline\nabla^\nu + \varepsilon_2 \overline\nabla^\nu \overline\nabla^\mu  \right]  \overline{\nabla}_\tau \left( h_{\alpha\mu} \overline{\nabla}_\beta c^\mu + h_{\beta\mu} \overline{\nabla}_\alpha c^\mu + \overline{\nabla}_\mu h_{\alpha\beta} c^\mu \right) .
\end{align}
For the flat background, it becomes
\begin{align}
    \begin{split}
        & \cfrac{\kappa}{2} \int \cfrac{p^4 p_1}{ (2\pi)^4} \,\cfrac{d^4 p_2}{ (2\pi)^4} \, \cfrac{d^4 k}{(2\pi)^4} \, (2\pi)^4 \delta\left(p_1+p_2+k\right) h^{\rho\sigma}(k)\\
        & \times \overline{c}^\nu(p_2) \mathfrak{C}^{\lambda\tau\alpha\beta} \left[ \varepsilon_0  \eta_{\nu\lambda}(p_1)^2 + (\varepsilon_1 \!+\! \varepsilon_2) (p_1)_\nu (p_1)_\lambda  \right] (p_1)_\tau \left( (p_1)_\alpha I_{\mu\beta\rho\sigma} \!+\! (p_1)_\beta I_{\mu\alpha\rho\sigma} \!+\! k_\mu I_{\alpha\beta\rho\sigma} \right)c^\mu(p_1).
    \end{split}
\end{align}
It produces the corresponding expression for the interaction vertex:
\begin{align}
\nonumber \\
\hspace{15pt}
    \begin{gathered}
        \begin{fmffile}{Ghost_Graviton_Vertex_HD}
            \begin{fmfgraph*}(35,35)
                \fmfleft{L}
                \fmfright{R1,R2}
                \fmf{dbl_wiggly}{L,V}
                \fmf{dots_arrow}{R1,V}
                \fmf{dots_arrow}{V,R2}
                \fmfdot{V}
                \fmflabel{$\rho\sigma,k$}{L}
                \fmflabel{$\mu,p_1$}{R1}
                \fmflabel{$\nu$}{R2}
            \end{fmfgraph*}
        \end{fmffile}
    \end{gathered}    
    =  i\,\cfrac{\kappa}{2} \, \mathfrak{C}^{\lambda\tau\alpha\beta} \left[ \varepsilon_0  \eta_{\nu\lambda}(p_1)^2 + (\varepsilon_1 \!+\! \varepsilon_2) (p_1)_\nu (p_1)_\lambda  \right] (p_1)_\tau \left( (p_1)_\alpha I_{\mu\beta\rho\sigma} \!+\! (p_1)_\beta I_{\mu\alpha\rho\sigma} \!+\! k_\mu I_{\alpha\beta\rho\sigma} \right) . \\ \nonumber
\end{align}

These formulae conclude the implementation of the BRST formalism for Feynman rules for gravity. They significantly simplify the overall structure of the Feynman rules since they only include a single graviton-ghost vertex and do not alter the structure of graviton vertices. In Sections \ref{Section_FG}, we discuss the implementation of these rules in \texttt{FeynGrav}.

\section{Cheung-Remmen Variables}\label{Section_CR}

This section introduces the Cheung-Remmen variables and describes the corresponding BRST complex \cite{Cheung:2017kzx,Alvarez:2025hym}. The result of this section is a specific parametrisation of the Hilbert action and a finite set of the corresponding Feynman rules.

The Cheung-Remmen variables $\gog^{\mu\nu}$ are the new dynamical variables that describe gravitational degrees of freedom. One introduces them by a non-linear mapping of an existing metric $g_{\mu\nu}$:
\begin{align}\label{Cheung-Remmen_variables}
    \gog^{\mu\nu} \overset{\text{def}}{=} \sqrt{-g} \, g^{\mu\nu} .
\end{align}
With the given definition, one can obtain an expression for the matrix inverse to the Cheung-Remmen variables:
\begin{align}\label{Cheung-Remmen_variables_inverse}
    \gog_{\mu\nu} = \cfrac{1}{\sqrt{-g}} ~ g_{\mu\nu} .
\end{align}

Equation \eqref{Cheung-Remmen_variables_inverse} is a conformal mapping of the metric for one specific choice of the conformal factor. Therefore, $\gog_{\mu\nu}$ is also a conformal metric. For simplicity, we will use notions of Cheung-Remmen variables and Cheung-Remmen metric interchangeably. Let us also note that, in contrast to the conventional approach, we treat the metric with upper indices, $\gog^{\mu\nu}$, as the fundamental variables.

To establish the form of the Hilbert action, we shall prove a series of theorems that describe the structure of various geometric quantities in terms of the Cheung-Remmen variables. First and foremost, it is useful to express the metric $g_{\mu\nu}$ and the corresponding volume factor $\sqrt{-g}$ through the Cheung-Remmen metric and its volume factor:
\begin{align}
        g_{\mu\nu} &= \left( \sqrt{-\gog} \right)^{- \frac{2}{d-2} } \gog_{\mu\nu} \, , & g^{\mu\nu} & = \left( \sqrt{-\gog} \right)^{\frac{2}{d-2}} \gog^{\mu\nu} \,, & \sqrt{-g} & = \left( \sqrt{-\gog} \right)^{- \frac{2}{d-2}}  \,.
\end{align}

Secondly, there is a relation between derivatives of the metric $g_{\mu\nu}$ and the Cheung-Remmen metric $\gog_{\mu\nu}$
\begin{align}
    \pd_\mu g_{\alpha \beta} & = \mathcal{C}_{\alpha\beta}{}^{\rho\sigma} \pd_\mu \gog_{\rho\sigma} , &  \pd_\mu \gog_{\alpha \beta} & = (\mathcal{C}^{-1})_{\alpha\beta}{}^{\rho\sigma} \pd_\mu g_{\rho\sigma} . 
\end{align}
Here, $\mathcal{C}$ and $\mathcal{C}^{-1}$ are the following tensors:
\begin{align}
    \mathcal{C}_{\alpha\beta}{}^{\rho\sigma} &= (\sqrt{-\gog})^{ -\frac{2}{d-2}} \left[ I_{\alpha\beta}{}^{\rho\sigma} - \cfrac{1}{d-2} \, \gog_{\alpha\beta}\,\gog^{\rho\sigma} \right] , &         (\mathcal{C}^{-1})_{\alpha\beta}{}^{\rho\sigma} &= (\sqrt{-\gog})^{ \frac{2}{d-2}} \left[ I_{\alpha\beta}{}^{\rho\sigma} - \cfrac12 \, \gog_{\alpha\beta}\,\gog^{\rho\sigma} \right] .
\end{align}
These tensors are inverse to each other:
\begin{align}
    \mathcal{C}_{\alpha\beta}{}^{\rho\sigma} (\mathcal{C}^{-1})_{\rho\sigma}{}^{\mu\nu} = (\mathcal{C}^{-1})_{\alpha\beta}{}^{\rho\sigma} \mathcal{C}_{\rho\sigma}{}^{\mu\nu} = I_{\alpha\beta}{}^{\mu\nu} .
\end{align}
    
This relation between metric derivatives allows us to obtain a relation between Christoffel symbols. It is more convenient to use the Christoffel symbols with all low indices because it is a linear combination of the metric derivatives:
\begin{align}
    \Gamma_{\alpha\mu\nu}[g] &\overset{\text{def}}{=} \cfrac12\, \left[ \pd_\mu g_{\nu\alpha} + \pd_\nu g_{\mu\alpha} - \pd_\alpha g_{\mu\nu} \right] .
\end{align}
Consequently, the following relation holds between the Christoffel symbols and metric derivatives:
\begin{align}
    \Gamma^{\alpha\mu\nu}[g] & = \Gamma_{\alpha\mu\nu}^{\lambda\rho\sigma} \, \pd_\lambda g_{\rho\sigma} \, , & \pd_\alpha g_{\mu\nu} &=  (\Gamma^{-1})_{\alpha\mu\nu}^{\lambda\rho\sigma} \Gamma_{\lambda\rho\sigma}[g] .
\end{align}
Here $\Gamma$ and $\Gamma^{-1}$ are tensors:
\begin{align}
    \Gamma_{\alpha\mu\nu}^{\lambda\rho\sigma} & = \cfrac12\left[ \delta^\lambda_\mu I_{\alpha\nu}{}^{\rho\sigma} + \delta^\lambda_\nu I_{\alpha\mu}{}^{\rho\sigma} - \delta^\lambda_\alpha I_{\mu\nu}{}^{\rho\sigma} \right] \, & (\Gamma^{-1})_{\alpha\mu\nu}^{\lambda\rho\sigma} &= \delta_\mu^\lambda I_{\alpha\nu}{}^{\rho\sigma} + \delta_\nu^\lambda I_{\alpha\mu}{}^{\rho\sigma} \, .
\end{align}
These tensors are inverse to each other:
\begin{align}
    \Gamma_{\alpha\mu\nu}^{\omega\rho\sigma}  (\Gamma^{-1})_{\omega\rho\sigma}^{\beta\lambda\tau} =(\Gamma^{-1})_{\alpha\mu\nu}^{\omega\rho\sigma}  \Gamma_{\omega\rho\sigma}^{\beta\lambda\tau} = \delta_\alpha^\beta \, I_{\mu\nu}{}^{\lambda\tau} \,.
\end{align}

The following theorem specifies the relation between Christoffel symbols calculated for $g_{\mu\nu}$ and $\gog_{\mu\nu}$. It is a direct corollary of the theorem above.

\begin{theorem}

    \begin{align}\label{Cheung-Remmen_Christoffel_mapping}
        \Gamma_{\alpha\mu\nu}[g] & =(\sqrt{-\gog})^{- \frac{2}{d-2} } \left[ \Gamma_{\alpha\mu\nu}[\gog] - \cfrac{2}{d-2}\left\{ \gog_{\alpha\mu} \Gamma^\lambda_{\nu\lambda}[\gog] + \gog_{\alpha\nu} \Gamma^\lambda_{\mu\lambda}[\gog] - \gog_{\mu\nu} \Gamma^\lambda_{\alpha\lambda}[\gog] \right\} \right] \,.
    \end{align}
    
\end{theorem}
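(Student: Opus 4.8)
The plan is to start from the definition $\gog_{\mu\nu} = (\sqrt{-g})^{-1} g_{\mu\nu}$, equivalently $g_{\mu\nu} = (\sqrt{-\gog})^{-2/(d-2)} \gog_{\mu\nu}$, and simply substitute this into the definition $\Gamma_{\alpha\mu\nu}[g] = \tfrac12(\pd_\mu g_{\nu\alpha} + \pd_\nu g_{\mu\alpha} - \pd_\alpha g_{\mu\nu})$. Since the conformal factor $\Omega \overset{\text{def}}{=} (\sqrt{-\gog})^{-2/(d-2)}$ is a scalar density built from $\gog$, the Leibniz rule gives $\pd_\mu g_{\nu\alpha} = \Omega\, \pd_\mu \gog_{\nu\alpha} + (\pd_\mu \Omega)\, \gog_{\nu\alpha}$, and the whole computation reduces to bookkeeping of the $\pd\Omega$ terms.

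The key steps, in order, are as follows. First I would compute $\pd_\mu \Omega$ in terms of $\gog$: using $\pd_\mu \ln\sqrt{-\gog} = \tfrac12 \gog^{\rho\sigma}\pd_\mu \gog_{\rho\sigma} = \Gamma^\lambda_{\mu\lambda}[\gog]$, one gets $\pd_\mu \Omega = -\tfrac{2}{d-2}\, \Omega\, \Gamma^\lambda_{\mu\lambda}[\gog]$. Second, I would assemble the three pieces of $\Gamma_{\alpha\mu\nu}[g]$: the ``$\gog\,\pd\gog$'' pieces reassemble (after factoring out $\Omega$) into exactly $\Omega\,\Gamma_{\alpha\mu\nu}[\gog]$, while the ``$(\pd\Omega)\,\gog$'' pieces give $\tfrac12\bigl[(\pd_\mu\Omega)\gog_{\nu\alpha} + (\pd_\nu\Omega)\gog_{\mu\alpha} - (\pd_\alpha\Omega)\gog_{\mu\nu}\bigr]$. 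Third, substitute $\pd\Omega = -\tfrac{2}{d-2}\Omega\,\Gamma^\lambda_{\cdot\lambda}[\gog]$ into that second group and factor out $\Omega$; the result is precisely $-\tfrac{2}{d-2}\Omega\bigl[\gog_{\nu\alpha}\Gamma^\lambda_{\mu\lambda} + \gog_{\mu\alpha}\Gamma^\lambda_{\nu\lambda} - \gog_{\mu\nu}\Gamma^\lambda_{\alpha\lambda}\bigr]$, matching \eqref{Cheung-Remmen_Christoffel_mapping} after relabelling the symmetric index pair. Alternatively, as the statement hints (``a direct corollary of the theorem above''), one can bypass the conformal-factor computation entirely: contract the identity $\pd_\lambda g_{\rho\sigma} = \mathcal{C}_{\rho\sigma}{}^{\omega\tau}\pd_\lambda \gog_{\omega\tau}$ against the tensor $\Gamma_{\alpha\mu\nu}^{\lambda\rho\sigma}$ relating $\Gamma[g]$ to $\pd g$, then recognise that $\Gamma_{\alpha\mu\nu}^{\lambda\rho\sigma}\mathcal{C}_{\rho\sigma}{}^{\omega\tau}$ contracted with $\pd_\lambda\gog_{\omega\tau}$ reproduces $\Gamma[\gog]$ plus the trace terms; this is purely algebraic index manipulation using the explicit forms of $\Gamma_{\alpha\mu\nu}^{\lambda\rho\sigma}$ and $\mathcal{C}_{\alpha\beta}{}^{\rho\sigma}$ already given.

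I expect the main obstacle to be purely organisational rather than conceptual: carefully tracking which index of $\gog_{\alpha\mu}$ is the ``$\alpha$'' slot versus the symmetric $\mu\nu$ slots, and making sure the antisymmetry in the $-\pd_\alpha$ term produces the correct relative sign on the $-\gog_{\mu\nu}\Gamma^\lambda_{\alpha\lambda}$ term. A secondary check worth doing is dimensional consistency — verifying that $\Gamma^\lambda_{\mu\lambda}[\gog]$ (a trace of the $\gog$-Christoffel) is exactly what the scalar-density derivative $\pd_\mu\ln\sqrt{-\gog}$ produces, which is a standard identity but deserves an explicit line. Once those sign and labelling issues are pinned down, the proof collapses to a few lines of substitution, and the $1/(d-2)$ factors fall out automatically from the power $-2/(d-2)$ in the conformal relation.
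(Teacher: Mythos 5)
Your overall strategy is sound, and your second route is in fact what the paper itself intends: the paper gives no explicit computation, declaring the result a direct corollary of the relations $\pd_\mu g_{\alpha\beta}=\mathcal{C}_{\alpha\beta}{}^{\rho\sigma}\pd_\mu\gog_{\rho\sigma}$ and $\Gamma_{\alpha\mu\nu}[g]=\Gamma^{\lambda\rho\sigma}_{\alpha\mu\nu}\pd_\lambda g_{\rho\sigma}$, and your first (conformal-factor) route is algebraically equivalent. However, your step 3 does not follow from your step 2, and this is a genuine gap rather than a cosmetic one. The trace pieces you isolate carry the overall $\tfrac12$ of the Christoffel definition, so substituting $\pd_\mu\Omega=-\tfrac{2}{d-2}\,\Omega\,\Gamma^\lambda_{\mu\lambda}[\gog]$ into $\tfrac12\bigl[(\pd_\mu\Omega)\gog_{\nu\alpha}+(\pd_\nu\Omega)\gog_{\mu\alpha}-(\pd_\alpha\Omega)\gog_{\mu\nu}\bigr]$ produces the coefficient $-\tfrac{1}{d-2}$, not the $-\tfrac{2}{d-2}$ you announce; you have silently dropped the $\tfrac12$. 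Carried out honestly, your own steps give
\begin{align}
    \Gamma_{\alpha\mu\nu}[g] = (\sqrt{-\gog})^{-\frac{2}{d-2}}\left[\Gamma_{\alpha\mu\nu}[\gog] - \cfrac{1}{d-2}\left(\gog_{\alpha\mu}\Gamma^\lambda_{\nu\lambda}[\gog]+\gog_{\alpha\nu}\Gamma^\lambda_{\mu\lambda}[\gog]-\gog_{\mu\nu}\Gamma^\lambda_{\alpha\lambda}[\gog]\right)\right],
\end{align}
which differs from the printed statement by a factor of $2$ in the trace terms.

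Note that the same $\tfrac{1}{d-2}$ also emerges from the paper's ``direct corollary'' route, since $\Gamma^{\lambda\rho\sigma}_{\alpha\mu\nu}\gog_{\rho\sigma}=\tfrac12\bigl(\delta^\lambda_\mu\gog_{\alpha\nu}+\delta^\lambda_\nu\gog_{\alpha\mu}-\delta^\lambda_\alpha\gog_{\mu\nu}\bigr)$ while $\gog^{\omega\tau}\pd_\lambda\gog_{\omega\tau}=2\,\Gamma^\sigma_{\lambda\sigma}[\gog]$, so the $\tfrac12$ and the $2$ cancel and the $-\tfrac{1}{d-2}$ of $\mathcal{C}$ survives unchanged; moreover, the $\tfrac{1}{d-2}$ version is the one consistent with the paper's subsequent formulas for $P_{\mu\nu}[\gog]$ and $\sqrt{-g}\,R[g]=R[\gog]+2(d-1)\gog^{\alpha\beta}P_{\alpha\beta}[\gog]$ (equivalently, with the standard conformal law at $\pd_\mu\varphi=-\tfrac{1}{d-2}\pd_\mu\ln\sqrt{-\gog}$, using $\Gamma^\lambda_{\mu\lambda}[\gog]=\pd_\mu\ln\sqrt{-\gog}$). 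So the ``match'' you claim is an artifact of the dropped $\tfrac12$: as written, your argument neither proves the equation with the printed coefficient $\tfrac{2}{d-2}$ (no correct computation yields it) nor states the corrected identity. Keep the $\tfrac12$ explicit, conclude the $\tfrac{1}{d-2}$ form, and flag the apparent factor-of-two misprint in the statement; the rest of your outline (the identity for $\pd_\mu\ln\sqrt{-\gog}$ and the reassembly of the $\gog\,\pd\gog$ pieces into $\Omega\,\Gamma_{\alpha\mu\nu}[\gog]$) is correct.
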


Lastly, the following theorem specifies the relation between the Riemann tensors for metrics related by such a mapping. It is convenient to express such a relation in terms of the Kulkarni-Nomizu product.

\begin{theorem}

    {~}

    \noindent The following relation between the Riemann tensors holds:
    \begin{align}\label{Cheung-Remmen_Riemann_mapping}
        R_{\mu\nu\alpha\beta}[g]=(\sqrt{-\gog})^{- \frac{2}{d-2} } \left[ R_{\mu\nu\alpha\beta}[\gog] + \left( \gog \owedge P[\gog] \right)_{\mu\nu\alpha\beta} \right] .
    \end{align}
    Here $\owedge$ denotes the Kulkarni-Nomizu product
    \begin{align}
        (X \owedge Y)_{\mu\nu\alpha\beta} \overset{\text{def}}{=} X_{\mu\alpha} Y_{\nu\beta} + X_{\nu\beta} Y_{\mu\alpha} - X_{\mu\beta} Y_{\nu\alpha} - X_{\nu\alpha} Y_{\mu\beta}.
    \end{align}
    In our case, the Kulkarni-Nomizu product is constructed from the Cheung-Remmen metric $\gog$ and $P[\gog]$:
    \begin{align}
        \begin{split}
                P_{\mu\nu}[\gog] &\overset{\text{def}}{=}\cfrac{1}{d-2} \nabla_\nu \omega_\mu + \cfrac{1}{(d-2)^2} \, \left[ \omega_\mu \omega_\nu - \cfrac12\, \gog_{\mu\nu} \omega^2 \right] ,\\
                \omega_\mu &\overset{\text{def}}{=} \pd_\mu \log \sqrt{-\gog} \, .
        \end{split}
    \end{align}
    
\end{theorem}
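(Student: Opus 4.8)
The whole computation rests on recognising that the Cheung--Remmen metric is conformal to the original one: by the relations established above, $g_{\mu\nu} = (\sqrt{-\gog})^{-\frac{2}{d-2}}\,\gog_{\mu\nu} = e^{2\phi}\,\gog_{\mu\nu}$ with $\phi \overset{\text{def}}{=} -\frac{1}{d-2}\log\sqrt{-\gog}$, so that $\pd_\mu\phi = -\frac{1}{d-2}\,\omega_\mu$. Hence \eqref{Cheung-Remmen_Riemann_mapping} is nothing but the standard conformal-rescaling law for the Riemann tensor, specialised to this particular $\phi$ and repackaged in terms of the Kulkarni--Nomizu product. The plan is to carry out that rescaling and then collect the result.

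First I would record the difference of the two Levi--Civita connections. Either by raising an index in Theorem \ref{Cheung-Remmen_Christoffel_mapping}, or directly from $g = e^{2\phi}\gog$, one has $\Gamma^\alpha_{\mu\nu}[g] = \Gamma^\alpha_{\mu\nu}[\gog] + C^\alpha{}_{\mu\nu}$ with the tensor $C^\alpha{}_{\mu\nu} = \delta^\alpha_\mu\,\pd_\nu\phi + \delta^\alpha_\nu\,\pd_\mu\phi - \gog_{\mu\nu}\,\gog^{\alpha\lambda}\,\pd_\lambda\phi$, linear in $\omega_\mu = \pd_\mu\log\sqrt{-\gog}$. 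Next, I would use the general identity for the curvature of a connection shifted by a tensor,
\[
R^\rho{}_{\sigma\mu\nu}[g] = R^\rho{}_{\sigma\mu\nu}[\gog] + \nabla_\mu C^\rho{}_{\nu\sigma} - \nabla_\nu C^\rho{}_{\mu\sigma} + C^\rho{}_{\mu\lambda}\,C^\lambda{}_{\nu\sigma} - C^\rho{}_{\nu\lambda}\,C^\lambda{}_{\mu\sigma},
\]
where $\nabla$ is the Levi--Civita connection of $\gog$; this follows by writing out both curvature formulae and noting that the $\pd\Gamma-\pd\Gamma$ differences reassemble into covariant derivatives precisely because $C$ is tensorial. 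Substituting the explicit $C$, the $\nabla C$ terms produce $\nabla\nabla\phi$ contributions dressed with $\delta$'s and $\gog$'s, the $CC$ terms produce $\pd\phi\,\pd\phi$ and $(\pd\phi)^2\gog$ contributions, the pieces symmetric under $\mu\leftrightarrow\nu$ drop out of the antisymmetrisation, and the rest --- after lowering $\rho$ with $g_{\rho\lambda} = e^{2\phi}\gog_{\rho\lambda}$, which pulls out the overall factor $(\sqrt{-\gog})^{-\frac{2}{d-2}}$ and converts the stray $\delta$'s into $\gog$'s --- must be shown to be exactly the Kulkarni--Nomizu product $\gog\owedge(-A)$ with the single symmetric tensor $A_{\mu\nu} = \nabla_\mu\pd_\nu\phi - \pd_\mu\phi\,\pd_\nu\phi + \tfrac12(\pd\phi)^2\,\gog_{\mu\nu}$. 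Finally, using $\pd_\mu\phi = -\frac{1}{d-2}\omega_\mu$ and $\nabla_\mu\pd_\nu\phi = -\frac{1}{d-2}\nabla_\nu\omega_\mu$ turns $-A_{\mu\nu}$ into precisely $P_{\mu\nu}[\gog]$ as defined in the statement --- coefficients $\tfrac{1}{d-2}$, $\tfrac{1}{(d-2)^2}$ and the $-\tfrac12\gog_{\mu\nu}\omega^2$ trace term included --- which proves \eqref{Cheung-Remmen_Riemann_mapping}; the Riemann symmetries of the right-hand side come for free from the Kulkarni--Nomizu structure.

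The main obstacle is the collection step: one has to check that the contributions with a derivative or a metric contracted on the ``$\sigma$'' slot, together with the mixed $\delta$--$\gog$ pieces, combine so that the net result is genuinely a Kulkarni--Nomizu product of $\gog$ with one symmetric tensor, and one must follow the coefficient of the trace term $\gog_{\mu\nu}\omega^2$ with care, since the $\nabla C$ and $CC$ parts feed it with opposite signs and must add up to exactly $-\tfrac12$. As a consistency check --- indeed an alternative proof --- the same identity drops out of the conformal invariance of the $(1,3)$ Weyl tensor together with the known transformation of the Schouten tensor, $P^{\mathrm{Sch}}_{\mu\nu}[g] = P^{\mathrm{Sch}}_{\mu\nu}[\gog] + P_{\mu\nu}[\gog]$, and the decomposition $R = W + \gog\owedge P^{\mathrm{Sch}}$.
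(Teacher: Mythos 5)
Your proposal is correct and is essentially the paper's own derivation: the paper proves \eqref{Cheung-Remmen_Riemann_mapping} by starting from the Christoffel relation \eqref{Cheung-Remmen_Christoffel_mapping} (which is exactly your connection-difference tensor $C^\alpha{}_{\mu\nu}$ with an index moved) and then carrying out the ``careful manipulation of indices'' that you describe as the collection step. One caveat on framing: the paper explicitly stresses that the textbook conformal-rescaling law must \emph{not} simply be cited here, because $\phi=-\tfrac{1}{d-2}\log\sqrt{-\gog}$ is not a scalar but (the log of) a density; your argument survives this objection only because you actually re-derive the formula chart-by-chart from $\Gamma[g]=\Gamma[\gog]+C$ and the curvature-difference identity, which never invokes the transformation law of $\phi$ (only symmetry of $\pd_\mu\pd_\nu\phi$ and the Levi-Civita connection of $\gog$) --- so you should state that explicitly rather than calling the result ``nothing but the standard conformal-rescaling law'', and note that the individual ingredients $\omega_\mu$, $P_{\mu\nu}[\gog]$, $R_{\mu\nu\alpha\beta}[\gog]$ are not tensors, only the combination is. Finally, be aware that the relative sign between $R[\gog]$ and the Kulkarni--Nomizu term flips with the Riemann sign convention, so the sign bookkeeping in your collection step (your check that $-A_{\mu\nu}=P_{\mu\nu}[\gog]$) must be done consistently with the convention in which the curvature-difference formula is written --- as you have it, the computation does land on the stated theorem.
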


The following comment on the derivation is due. Most often, one considers a conformal mapping of the following form:
\begin{align}
    g_{\mu\nu} \to e^{2\varphi} \, g_{\mu\nu} \, .
\end{align}
In this expression, $\varphi$ is assumed to be a scalar. This setup results in the following formula for the Riemann tensor:
\begin{align}
    R_{\mu\nu\alpha\beta} \to e^{2\varphi} \left[ R_{\mu\nu\alpha\beta} + g\owedge T[\varphi] \right] \, ,
\end{align}
where 
\begin{align}
    T[\varphi]_{\mu\nu} \overset{\text{def}}{=} \nabla_{\mu\nu} \varphi - \nabla_\mu \varphi \nabla_\nu \varphi + \cfrac12 \, g_{\mu\nu} \left\lvert\pd\varphi\right\rvert^2  \,.
\end{align}
This formula only holds for $\varphi$ that are scalar, so one must not use it for \eqref{Cheung-Remmen_variables_inverse}. The factor $\sqrt{-\gog}$ present in \eqref{Cheung-Remmen_variables_inverse} is not a scalar, but a tensor density. Therefore, one must use \eqref{Cheung-Remmen_Christoffel_mapping} to derive \eqref{Cheung-Remmen_Riemann_mapping}. The derivation itself reduces to careful index manipulation.

The following theorem is a direct corollary of \eqref{Cheung-Remmen_Riemann_mapping}. It is particularly interesting because it shows that in the Cheung-Remmen variables the elementary volume $\sqrt{-g}$ does not enter the Hilbert Lagrangian. This is the main reason why one can reduce the Hilbert action to the polynomial form.

\begin{theorem}

    \begin{align}
        \begin{split}
            R_{\mu\nu}[g] &= R_{\mu\nu}[\gog] + (d-2) P_{\mu\nu}[\gog] + \gog_{\mu\nu} \gog^{\alpha\beta} P_{\alpha\beta}[\gog] \, , \\
            R[g] &= (\sqrt{-\gog})^{\frac{2}{d-2}} \, \left( R[\gog ] + 2 (d-1) \gog^{\alpha\beta} P_{\alpha\beta}[\gog] \right) \, , \\
            \sqrt{-g} R[g] &= R[\gog ] + 2 (d-1) \gog^{\alpha\beta} P_{\alpha\beta}[\gog] \,.
        \end{split}
    \end{align}

\end{theorem}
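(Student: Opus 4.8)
The plan is to derive all three identities as algebraic consequences of the Riemann tensor mapping \eqref{Cheung-Remmen_Riemann_mapping} by successive contractions with the Cheung-Remmen metric $\gog^{\mu\nu}$. First I would compute the Ricci tensor $R_{\mu\nu}[g]$ by contracting \eqref{Cheung-Remmen_Riemann_mapping} over the pair $(\mu,\alpha)$ using $g^{\mu\alpha}$; since $g^{\mu\alpha} = (\sqrt{-\gog})^{2/(d-2)}\gog^{\mu\alpha}$, the overall conformal prefactor $(\sqrt{-\gog})^{-2/(d-2)}$ cancels exactly, which is the structural reason the volume factor will drop out of the scalar curvature density. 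I would then expand the contraction of the Kulkarni-Nomizu product: from the definition $(X\owedge Y)_{\mu\nu\alpha\beta}=X_{\mu\alpha}Y_{\nu\beta}+X_{\nu\beta}Y_{\mu\alpha}-X_{\mu\beta}Y_{\nu\alpha}-X_{\nu\alpha}Y_{\mu\beta}$ with $X=\gog$ and $Y=P[\gog]$, contracting $\gog^{\mu\alpha}$ against the four terms produces, respectively, $d\,P_{\nu\beta}$, $\gog_{\nu\beta}\gog^{\mu\alpha}P_{\mu\alpha}$, $-P_{\nu\beta}$, and $-P_{\nu\beta}$, giving $(d-2)P_{\nu\beta}+\gog_{\nu\beta}\gog^{\alpha\beta}P_{\alpha\beta}$ after relabeling, which is exactly the claimed formula for $R_{\mu\nu}[g]$.

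For the second identity I would contract the Ricci relation once more with $g^{\mu\nu}=(\sqrt{-\gog})^{2/(d-2)}\gog^{\mu\nu}$. The term $R_{\mu\nu}[\gog]$ yields $R[\gog]$ times the prefactor; the term $(d-2)P_{\mu\nu}[\gog]$ yields $(d-2)\gog^{\mu\nu}P_{\mu\nu}$; and the term $\gog_{\mu\nu}\gog^{\alpha\beta}P_{\alpha\beta}$ contracted with $\gog^{\mu\nu}$ gives $d\cdot\gog^{\alpha\beta}P_{\alpha\beta}$, so the $P$-dependent pieces combine to $(d-2+d)\gog^{\alpha\beta}P_{\alpha\beta}=2(d-1)\gog^{\alpha\beta}P_{\alpha\beta}$, which produces precisely $R[g]=(\sqrt{-\gog})^{2/(d-2)}\left(R[\gog]+2(d-1)\gog^{\alpha\beta}P_{\alpha\beta}[\gog]\right)$.

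The third identity is then immediate: multiply the second by $\sqrt{-g}=(\sqrt{-\gog})^{-2/(d-2)}$, as recorded in the earlier formula $\sqrt{-g}=(\sqrt{-\gog})^{-2/(d-2)}$, so the conformal factors cancel and one is left with $\sqrt{-g}\,R[g]=R[\gog]+2(d-1)\gog^{\alpha\beta}P_{\alpha\beta}[\gog]$. I do not anticipate a serious obstacle here, since the whole argument is a bookkeeping exercise in index contraction once \eqref{Cheung-Remmen_Riemann_mapping} is granted; the only point requiring mild care is tracking the contractions of the Kulkarni-Nomizu product and confirming that the conformal prefactor $(\sqrt{-\gog})^{\mp 2/(d-2)}$ cancels at exactly the right stage, which is the place where a sign or dimension-dependent coefficient could slip.
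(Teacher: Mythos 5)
Your proposal is correct and follows essentially the paper's own route: the paper presents this theorem as a direct corollary of the Riemann-tensor mapping \eqref{Cheung-Remmen_Riemann_mapping}, obtained exactly by the contractions you perform with $g^{\mu\alpha}=(\sqrt{-\gog})^{2/(d-2)}\gog^{\mu\alpha}$ and the identity $\sqrt{-g}=(\sqrt{-\gog})^{-2/(d-2)}$. Your bookkeeping of the Kulkarni--Nomizu contraction, the resulting coefficients $(d-2)$ and $2(d-1)$, and the cancellation of the conformal prefactors is accurate.
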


To proceed with mapping the Hilbert action, we will employ further simplification. In the original formulation, the action contains the second-order derivative of the metric. One can always remove them with an integration by parts:
\begin{align}
    \int d^d x \, \sqrt{-g} \, R = \int d^d x \, \sqrt{-g} \, g^{\mu\nu} g^{\alpha\beta} g^{\rho\sigma} \left[ \Gamma_{\alpha\mu\rho} \Gamma_{\sigma\nu\beta} - \Gamma_{\alpha\mu\nu} \Gamma_{\rho\beta\sigma} \right] + \text{surface terms}.
\end{align}
Further, we omit all surface terms for simplicity. The following theorem specifies the structure of the Hilbert action modulo surface terms.

\begin{theorem}

    \begin{align}
        \begin{split}
            \int d^d x \, \sqrt{-g} \, R &= \int d^d x ~ \cfrac{1}{2} ~ \pd_{\lambda_1} \gog_{\alpha_1\beta_1} \mathcal{O}^{\lambda_1\alpha_1\beta_1\lambda_2\alpha_2\beta_2}(\gog) \pd_{\lambda_2} \gog_{\alpha_2\beta_2}  \, , \\
            \mathcal{O}^{\lambda_1\alpha_1\beta_1\lambda_2\alpha_2\beta_2}(\gog) & =  \cfrac{1}{4} \left( \gog^{\lambda_1\alpha_2} \gog^{\lambda_2\alpha_1}\gog^{\beta_1\beta_2} + \gog^{\lambda_1\beta_2} \gog^{\lambda_2\alpha_1}\gog^{\beta_1\alpha_2} + \gog^{\lambda_1\alpha_2} \gog^{\lambda_2\beta_1}\gog^{\alpha_1\beta_2} + \gog^{\lambda_1\beta_2} \gog^{\lambda_2\beta_1}\gog^{\alpha_1\alpha_2} \right) \\
            & - \cfrac14 \,\gog^{\lambda_1\lambda_2} \left( \gog^{\alpha_1\alpha_2} \gog^{\beta_1\beta_2} + \gog^{\alpha_1\beta_2} \gog^{\alpha_2\beta_1} \right) + \cfrac12\, \cfrac{1}{d-2} \, \gog^{\lambda_1\lambda_2} \gog^{\alpha_1\beta_1}\gog^{\alpha_2\beta_2}  .
        \end{split}
    \end{align}
    
\end{theorem}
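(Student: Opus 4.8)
The starting point is the standard ``$\Gamma\Gamma$'' (first-order) form of the Einstein--Hilbert action, already displayed in the excerpt: modulo surface terms,
\begin{align}
    \int d^d x\, \sqrt{-g}\, R[g] = \int d^d x\, \sqrt{-g}\, g^{\mu\nu} g^{\alpha\beta} g^{\rho\sigma} \left[ \Gamma_{\alpha\mu\rho}[g]\, \Gamma_{\sigma\nu\beta}[g] - \Gamma_{\alpha\mu\nu}[g]\, \Gamma_{\rho\beta\sigma}[g] \right].
\end{align}
Into this I would substitute the conformal dictionary $\sqrt{-g} = (\sqrt{-\gog})^{-2/(d-2)}$, $g^{\mu\nu} = (\sqrt{-\gog})^{2/(d-2)}\, \gog^{\mu\nu}$ together with the Christoffel mapping \eqref{Cheung-Remmen_Christoffel_mapping}. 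The decisive point is power counting in the conformal factor $(\sqrt{-\gog})^{2/(d-2)}$: $\sqrt{-g}$ supplies one inverse power, the three inverse metrics supply $+3$, and each of the two lower-index Christoffel symbols $\Gamma_{\cdots}[g]$ supplies one inverse power, so the net weight is $-1+3-2 = 0$ and the prefactor drops out. The total-derivative terms discarded above stay total derivatives after the substitution --- $\pd_\mu(\cdots)$ is a divergence whatever the field variables are --- so they still integrate to zero. One is left with
\begin{align}
    \int d^d x\, \sqrt{-g}\, R[g] = \int d^d x\; \gog^{\mu\nu} \gog^{\alpha\beta} \gog^{\rho\sigma} \left[ \widetilde\Gamma_{\alpha\mu\rho} \widetilde\Gamma_{\sigma\nu\beta} - \widetilde\Gamma_{\alpha\mu\nu} \widetilde\Gamma_{\rho\beta\sigma} \right] + (\text{surface}),
\end{align}
with $\widetilde\Gamma_{\alpha\mu\nu} \overset{\text{def}}{=} \Gamma_{\alpha\mu\nu}[\gog] - \tfrac{2}{d-2}\big( \gog_{\alpha\mu} \Gamma^\lambda_{\nu\lambda}[\gog] + \gog_{\alpha\nu} \Gamma^\lambda_{\mu\lambda}[\gog] - \gog_{\mu\nu} \Gamma^\lambda_{\alpha\lambda}[\gog] \big)$, which is already a polynomial expression in $\gog$, $\gog^{-1}$ and $\pd\gog$ --- no volume factor survives.

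Next I would make the dependence on $\pd_\lambda \gog_{\rho\sigma}$ explicit. Using the tensor $\Gamma_{\alpha\mu\nu}^{\lambda\rho\sigma}$ defined earlier, $\Gamma_{\alpha\mu\nu}[\gog] = \Gamma_{\alpha\mu\nu}^{\lambda\rho\sigma} \pd_\lambda \gog_{\rho\sigma}$, while $\Gamma^\lambda_{\nu\lambda}[\gog] = \tfrac12\, \gog^{\rho\sigma} \pd_\nu \gog_{\rho\sigma} = \omega_\nu$; hence $\widetilde\Gamma_{\alpha\mu\nu} = \widetilde\Gamma_{\alpha\mu\nu}^{\lambda\rho\sigma}\, \pd_\lambda\gog_{\rho\sigma}$ for an explicit coefficient tensor $\widetilde\Gamma_{\alpha\mu\nu}^{\lambda\rho\sigma}$ built from $\gog$, $\gog^{-1}$ and Kronecker deltas. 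Plugging this into $\gog^{\mu\nu}\gog^{\alpha\beta}\gog^{\rho\sigma}[\widetilde\Gamma\widetilde\Gamma - \widetilde\Gamma\widetilde\Gamma]$ expresses the integrand as $\tfrac12\, \pd_{\lambda_1}\gog_{\alpha_1\beta_1}\, \mathcal{O}^{\lambda_1\alpha_1\beta_1\lambda_2\alpha_2\beta_2}(\gog)\, \pd_{\lambda_2}\gog_{\alpha_2\beta_2}$. Finally I would symmetrise $\mathcal{O}$ in $\alpha_1\leftrightarrow\beta_1$, in $\alpha_2\leftrightarrow\beta_2$, and under $(\lambda_1\alpha_1\beta_1)\leftrightarrow(\lambda_2\alpha_2\beta_2)$ --- all genuine symmetries of the bilinear form, the first two because $\pd\gog$ is symmetric in its lower pair --- and read off the coefficients, which should reproduce the claimed $\mathcal{O}^{\lambda_1\alpha_1\beta_1\lambda_2\alpha_2\beta_2}(\gog)$.

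The only real work, and the main obstacle, is the index bookkeeping in the last step: the product of the two $\widetilde\Gamma\widetilde\Gamma$ pieces, each carrying $\tfrac{2}{d-2}$-weighted trace terms, produces a fair number of monomials in up to six inverse metrics, and one has to verify that the pure-$\Gamma[\gog]\Gamma[\gog]$ terms, the mixed terms and all the $\tfrac{1}{(d-2)^2}$ terms collapse precisely into the three structures written in the theorem --- in particular that everything proportional to $(d-2)^{-2}$ collapses into the single $\tfrac12\,\tfrac{1}{d-2}\,\gog^{\lambda_1\lambda_2}\gog^{\alpha_1\beta_1}\gog^{\alpha_2\beta_2}$ term. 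This is a finite, mechanical computation with no conceptual content beyond the prefactor cancellation above; as an independent check one can instead expand the alternative form $\sqrt{-g}\,R[g] = R[\gog] + 2(d-1)\,\gog^{\alpha\beta} P_{\alpha\beta}[\gog]$ from the preceding theorem, integrating $R[\gog]$ by parts into its own $\Gamma\Gamma$ form and substituting $\omega_\mu = \tfrac12\,\gog^{\rho\sigma}\pd_\mu\gog_{\rho\sigma}$ into $P_{\mu\nu}[\gog]$; the two routes must agree on $\mathcal{O}$.
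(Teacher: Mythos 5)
Your strategy is the paper's own route: the paper introduces the first-order $\Gamma\Gamma$ form precisely for this purpose and relies on the Christoffel-mapping theorem \eqref{Cheung-Remmen_Christoffel_mapping}, so substituting the dictionary $g^{\mu\nu}=(\sqrt{-\gog})^{2/(d-2)}\gog^{\mu\nu}$, $\sqrt{-g}=(\sqrt{-\gog})^{-2/(d-2)}$, noting the conformal-weight cancellation $-1+3-2=0$, and symmetrising the resulting quadratic form in $\pd_\lambda\gog_{\alpha\beta}$ is exactly the intended derivation. The only part you defer — the index bookkeeping that collapses the trace and $1/(d-2)$ terms into the three structures of $\mathcal{O}$ — is mechanical and is likewise not spelled out in the paper, so the proposal is correct and essentially identical in approach.
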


\noindent The operator $\mathcal{O}^{\lambda_1\alpha_1\beta_1\lambda_2\alpha_2\beta_2}$ depends on the metric $\gog$ and admits the following symmetries by the construction:
\begin{align}
    \mathcal{O}^{\lambda_1\alpha_1\beta_1\lambda_2\alpha_2\beta_2} = \mathcal{O}^{\lambda_1\beta_1\alpha_1\lambda_2\alpha_2\beta_2} = \mathcal{O}^{\lambda_1\alpha_1\beta_1\lambda_2\beta_2\alpha_2}= \mathcal{O}^{\lambda_2\alpha_2\beta_2\lambda_1\alpha_1\beta_1} .
\end{align}
The previous papers \cite{Cheung:2017kzx,Alvarez:2025hym} used the following operator to describe this structure
\begin{align}
    \gog^{\lambda_1\alpha_2} \gog^{\lambda_2\alpha_1}\gog^{\beta_1\beta_2} - \cfrac12 \,\gog^{\lambda_1\lambda_2} \gog^{\alpha_1\alpha_2} \gog^{\beta_1\beta_2} + \cfrac12\, \cfrac{1}{d-2} \, \gog^{\lambda_1\lambda_2} \gog^{\alpha_1\beta_1}\gog^{\alpha_2\beta_2} .
\end{align}
One can obtain $ \mathcal{O}^{\lambda_1\alpha_1\beta_1\lambda_2\alpha_2\beta_2}$ by symmetrising the expression above.

The previous theorem shows that not only is the action polynomial in fields, but also the derivatives act only on the Cheung-Remmen metric with the low indices $\gog_{\mu\nu}$. It is this feature of the action that helps us to introduce an auxiliary variable. The following theorem specifies the method for introducing an auxiliary variable to construct an equivalent action.

\begin{theorem}

    {~}

    \noindent The following actions $\mathcal{S}_1$ and $\mathcal{S}_2$ are equivalent:
    \begin{align}
        \begin{split}
            \mathcal{S}_1 [\gog] & = \int d^d x \left[ \cfrac12\, \pd_{\lambda_1}\gog_{\alpha_1\beta_1} \mathcal{O}^{\lambda_1\alpha_1\beta_1\lambda_2\alpha_2\beta_2}(\gog) \, \pd_{\lambda_2} \gog_{\alpha_2\beta_2}  \right] , \\
            \mathcal{S}_2 [\gog,A] & = \int d^d x \left[ -\cfrac12 \, A^{\lambda_1\alpha_1\beta_1} (\mathcal{O}^{-1})_{\lambda_1\alpha_1\beta_1\lambda_2\alpha_2\beta_2} (\gog) \,A^{\lambda_2\alpha_2\beta_2} + A^{\lambda\alpha\beta} \pd_\lambda \gog_{\alpha\beta} \right] .
        \end{split}
    \end{align}
    Here, operator $\mathcal{O}^{-1}$ is an inverse of the $\mathcal{O}$:
    \begin{align}
        (\mathcal{O}^{-1})_{\lambda_1\alpha_1\beta_1\lambda_2\alpha_2\beta_2} (\gog) \mathcal{O}^{\lambda_2\alpha_2\beta_2\lambda_3\alpha_3\beta_3} (\gog) = \delta_{\lambda_1}^{\lambda_3} \, I_{\alpha_1\beta_1}{}^{\alpha_3\beta_3} .
    \end{align}
    
\end{theorem}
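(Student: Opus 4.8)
The plan is to prove the equivalence of $\mathcal{S}_1$ and $\mathcal{S}_2$ by showing that integrating out the auxiliary field $A^{\lambda\alpha\beta}$ from $\mathcal{S}_2$ reproduces $\mathcal{S}_1$. Since $A$ appears only quadratically (with no derivatives) and linearly in $\mathcal{S}_2$, its Euler--Lagrange equation is purely algebraic. First I would vary $\mathcal{S}_2$ with respect to $A^{\lambda_1\alpha_1\beta_1}$, using the symmetries of $(\mathcal{O}^{-1})$ inherited from those of $\mathcal{O}$, to obtain
\begin{align}
    (\mathcal{O}^{-1})_{\lambda_1\alpha_1\beta_1\lambda_2\alpha_2\beta_2}(\gog)\, A^{\lambda_2\alpha_2\beta_2} = \pd_{\lambda_1}\gog_{\alpha_1\beta_1} .
\end{align}
Contracting both sides with $\mathcal{O}^{\lambda_1\alpha_1\beta_1\lambda_3\alpha_3\beta_3}(\gog)$ and using the inverse relation $(\mathcal{O}^{-1})\mathcal{O} = \delta\, I$ stated in the theorem gives the on-shell value
\begin{align}
    A^{\lambda_3\alpha_3\beta_3} = \mathcal{O}^{\lambda_3\alpha_3\beta_3\lambda_1\alpha_1\beta_1}(\gog)\, \pd_{\lambda_1}\gog_{\alpha_1\beta_1} .
\end{align}

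Next I would substitute this expression back into $\mathcal{S}_2$. The quadratic term becomes $-\tfrac12\,\pd_{\lambda_1}\gog_{\alpha_1\beta_1}\,\mathcal{O}^{\lambda_1\alpha_1\beta_1\lambda_2\alpha_2\beta_2}\,\pd_{\lambda_2}\gog_{\alpha_2\beta_2}$ after two applications of $\mathcal{O}(\mathcal{O}^{-1}) = \delta\,I$, while the linear term $A^{\lambda\alpha\beta}\pd_\lambda\gog_{\alpha\beta}$ becomes $+\pd_{\lambda}\gog_{\alpha\beta}\,\mathcal{O}^{\lambda\alpha\beta\lambda_2\alpha_2\beta_2}\,\pd_{\lambda_2}\gog_{\alpha_2\beta_2}$. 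Adding these yields $+\tfrac12\,\pd_{\lambda_1}\gog_{\alpha_1\beta_1}\,\mathcal{O}^{\lambda_1\alpha_1\beta_1\lambda_2\alpha_2\beta_2}\,\pd_{\lambda_2}\gog_{\alpha_2\beta_2}$, which is exactly the integrand of $\mathcal{S}_1$. At the level of the path integral this substitution is the statement that the Gaussian integral over $A$ produces, besides $\mathcal{S}_1$, only a field-dependent functional determinant $\bigl(\det \mathcal{O}^{-1}(\gog)\bigr)^{\pm 1/2}$; since $\mathcal{O}^{-1}$ contains no derivatives this is an ultralocal factor which, in dimensional regularisation, is set to unity and does not alter any Feynman rule, so ``equivalent'' holds in the relevant sense.

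The one genuine subtlety — and the step I expect to be the main obstacle — is the very existence and well-definedness of $\mathcal{O}^{-1}(\gog)$ as a tensor built algebraically from $\gog^{\mu\nu}$, together with the matching of index symmetries. The operator $\mathcal{O}^{\lambda_1\alpha_1\beta_1\lambda_2\alpha_2\beta_2}$ acts on the space of objects symmetric in $(\alpha_i\beta_i)$ and symmetric under the pair exchange $(\lambda_1\alpha_1\beta_1)\leftrightarrow(\lambda_2\alpha_2\beta_2)$, but it is a map on a vector space whose dimension (for each fixed $\lambda$) is that of symmetric two-tensors; one must check that $\mathcal{O}$ restricted to this space is invertible — i.e. that it has no zero modes — so that $\mathcal{O}^{-1}$ exists. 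I would verify this by an explicit computation at a point where $\gog^{\mu\nu} = \eta^{\mu\nu}$ (sufficient by general covariance of the tensorial identity), decomposing into the $P^1, P^2, P^0, \overline P^0, \overline{\overline P}{}^0$ Nieuwenhuizen-type projectors on the index pairs $(\alpha_i\beta_i)$ and reading off the eigenvalues of $\mathcal{O}$ sector by sector; as long as all eigenvalues are nonzero (the $1/(d-2)$ structure suggests the only danger is at $d=2$, which is excluded), $\mathcal{O}^{-1}$ is well defined and the substitution above goes through verbatim. Finally I would note that the surface terms dropped earlier in passing from $\sqrt{-g}R$ to the $\Gamma\Gamma$ form, and any surface terms generated in the $A$-substitution, do not affect the equations of motion and hence do not affect the claimed equivalence.
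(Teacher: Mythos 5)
Your proposal is correct and follows essentially the same route as the paper: vary $\mathcal{S}_2$ with respect to $A$, note the field equation is algebraic, solve $A^{\lambda\alpha\beta}=\mathcal{O}^{\lambda\alpha\beta\lambda_1\alpha_1\beta_1}(\gog)\,\pd_{\lambda_1}\gog_{\alpha_1\beta_1}$, and substitute back to recover $\mathcal{S}_1$. Your extra remarks on the ultralocal determinant and on invertibility of $\mathcal{O}$ (which the paper handles by exhibiting $\mathcal{O}^{-1}$ explicitly in the following theorem) are sound but not part of the paper's argument for this statement.
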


\begin{proof}

    To show the equivalence of these actions, one shall obtain the field equations for $A$:
    \begin{align}
        -(\mathcal{O}^{-1})_{\lambda\alpha\beta\lambda_1\alpha_1\beta_1}(\gog) A^{\lambda_1\alpha_1\beta_1} + \pd_\lambda \gog_{\alpha\beta} = 0 \, .
    \end{align}
    Because operator $\mathcal{O}$ is invertible, the field equations reduce to an algebraic relation:
    \begin{align}
        A^{\lambda\alpha\beta} = \mathcal{O}^{\lambda\alpha\beta\lambda_1\alpha_1\beta_1}(\gog) \pd_{\lambda_1} \gog_{\alpha_1\beta_1} .
    \end{align}
    When this relation is satisfy, the action $\mathcal{S}_2$ matches $\mathcal{S}_1$ exactly:
    \begin{align}
        \mathcal{S}_2 [\gog_{\mu\nu},\mathcal{O}^{\lambda\alpha\beta\lambda_1\alpha_1\beta_1}(\gog) \pd_{\lambda_1} \gog_{\alpha_1\beta_1}] = \mathcal{S}_1[\gog_{\mu\nu}].
    \end{align}
    
\end{proof}

\noindent Lastly, one shall only fine the operator $\mathcal{O}^{-1}$ to complete the mapping of the Hilbert action.

\begin{theorem}

    {~}

    \noindent The operator $\mathcal{O}^{\lambda_1\alpha_1\beta_1\lambda_2\alpha_2\beta_2}$ is invertible. The inverse operator reads:
    \begin{align}
        \begin{split}
            \left(\mathcal{O}^{-1}\right)_{\lambda_1\alpha_1\beta_1\lambda_2\alpha_2\beta_2} =& \cfrac12\left( \gog_{\lambda_1\alpha_2} \gog_{\lambda_2\alpha_1} \gog_{\beta_1\beta_2} \!+\! \gog_{\lambda_1\beta_2} \gog_{\lambda_2\alpha_1} \gog_{\beta_1\alpha_2} \!+\! \gog_{\lambda_1\alpha_2} \gog_{\lambda_2\beta_1} \gog_{\alpha_1\beta_2} \!+\! \gog_{\lambda_1\beta_2} \gog_{\lambda_2\beta_1} \gog_{\alpha_1\alpha_2} \right) \\
            & - \cfrac12\, \cfrac{1}{d-1} \left( \gog_{\lambda_1\alpha_1} \gog_{\lambda_2\alpha_2} \gog_{\beta_1\beta_2} \!+\! \gog_{\lambda_1\alpha_1} \gog_{\lambda_2\beta_2} \gog_{\beta_1\alpha_2} \!+\! \gog_{\lambda_1\beta_1} \gog_{\lambda_2\alpha_2} \gog_{\alpha_1\beta_2} \!+\! \gog_{\lambda_1\beta_1} \gog_{\lambda_2\beta_2} \gog_{\alpha_1\alpha_2} \right).
        \end{split}
    \end{align}
    
\end{theorem}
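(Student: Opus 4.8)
The plan is to verify the stated identity $(\mathcal{O}^{-1})_{\lambda_1\alpha_1\beta_1\lambda_2\alpha_2\beta_2}\,\mathcal{O}^{\lambda_2\alpha_2\beta_2\lambda_3\alpha_3\beta_3}=\delta_{\lambda_1}^{\lambda_3}\,I_{\alpha_1\beta_1}{}^{\alpha_3\beta_3}$ by direct contraction, after first reducing it to a single convenient point. Both $\mathcal{O}$ and the candidate $\mathcal{O}^{-1}$ are polynomial in $\gog^{\mu\nu}$ and $\gog_{\mu\nu}$ and carry their indices tensorially, so the identity is a $GL(d)$-covariant polynomial identity in the entries of $\gog$; since $GL(d)$ acts transitively on positive-definite symmetric matrices with $\delta_{\mu\nu}$ in the orbit, and a polynomial identity valid on that dense set holds identically, it suffices to verify the identity in a frame where $\gog_{\mu\nu}=\delta_{\mu\nu}$. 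There the raising and lowering of indices is trivial, and the claim becomes a finite combinatorial identity among contractions of Kronecker deltas.

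First I would record the small set of ``tensor structures'' out of which the two operators are assembled at $\gog=\delta$: the four-term index-swap structure $S$, whose action on a tensor $\Phi_{\lambda\alpha\beta}$ symmetric in $(\alpha\beta)$ is $(S\Phi)_{\lambda\alpha\beta}=2\Phi_{\alpha\lambda\beta}+2\Phi_{\beta\lambda\alpha}$ (twice the sum of the two transpositions that move the vector slot); the double-trace structure $T=2\,\delta^{\lambda_1\lambda_2}I^{\alpha_1\beta_1\alpha_2\beta_2}$; the full-trace structure $U=\delta^{\lambda_1\lambda_2}\delta^{\alpha_1\beta_1}\delta^{\alpha_2\beta_2}$; and the four-term ``contract-the-vector-slot-into-the-opposite-pair'' structure $C$ occurring in $\mathcal{O}^{-1}$. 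In this notation $\mathcal{O}=\tfrac14 S-\tfrac14 T+\tfrac{1}{2(d-2)}U$ and $\mathcal{O}^{-1}=\tfrac12 S-\tfrac{1}{2(d-1)}C$; note that $\tfrac14 S$ acts as the identity on the totally symmetric part of the index triple, which is why it appears in both. Next I would compute the multiplication table of these structures under contraction of a middle index triple --- the six products $S\!\cdot\!S$, $S\!\cdot\!T$, $S\!\cdot\!U$, $C\!\cdot\!S$, $C\!\cdot\!T$, $C\!\cdot\!U$ --- each of which re-expands in a fixed finite basis of delta-structures (closing the basis requires a couple of further structures, e.g.\ $\delta^{\alpha_1\beta_1}$ times a term pairing $\lambda_1$ with one of $\alpha_3,\beta_3$ and $\lambda_3$ with the other, plus its $1\leftrightarrow3$ partner, which I would fix at the outset). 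Substituting into $\bigl(\tfrac12 S-\tfrac{1}{2(d-1)}C\bigr)\bigl(\tfrac14 S-\tfrac14 T+\tfrac{1}{2(d-2)}U\bigr)$ and collecting coefficients, the claim is that every structure except $\delta_{\lambda_1}^{\lambda_3}I_{\alpha_1\beta_1}{}^{\alpha_3\beta_3}$ cancels, and that this one survives with coefficient $1$.

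A cleaner way to organise the same computation is representation-theoretic: at $\gog=\delta$ the space of tensors $\Phi_{\lambda\alpha\beta}$ symmetric in $(\alpha\beta)$ decomposes under $O(d)$ as $\mathrm{Sym}^3_0 V\oplus(\text{hook }(2,1))\oplus V^{\oplus 2}$, all of the structures above are $O(d)$-equivariant, and by Schur's lemma $\mathcal{O}$ and $\mathcal{O}^{-1}$ act as scalars on the first two summands and as $2\times 2$ matrices on the multiplicity-two vector component. The identity then reduces to three checks: the product of the two scalars equals $1$ on $\mathrm{Sym}^3_0 V$, the product of the two scalars equals $1$ on the hook, and the product of the two $2\times 2$ matrices --- in the basis $\Phi_{\lambda\alpha\beta}=v_\lambda\delta_{\alpha\beta}$ versus $\Phi_{\lambda\alpha\beta}=\delta_{\lambda\alpha}v_\beta+\delta_{\lambda\beta}v_\alpha$ --- is the identity matrix. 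Each is obtained by evaluating $\mathcal{O}$ and then $\mathcal{O}^{-1}$ on one or two explicit test tensors. The main obstacle, and the only step where cancellation is not automatic, is precisely the $2\times 2$ vector block: there the trace contractions produce explicit powers of $d$ from closed index loops, and these must recombine with the denominators $\tfrac{1}{d-2}$, $\tfrac{1}{d-1}$ and the cross term $\tfrac{1}{(d-1)(d-2)}$ to leave $d$-independent coefficients; this is also where the value $\tfrac{1}{d-1}$ in $\mathcal{O}^{-1}$ is forced, rather than being a free parameter. Everything else is routine index bookkeeping, and evaluating both sides symbolically for a few values of $d$ gives a fully independent consistency check.
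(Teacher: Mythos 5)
Your proposal is correct, and the identity it targets does hold. For comparison: the paper states this theorem without any proof at all — it only remarks that $\mathcal{O}^{-1}$ can be obtained by symmetrising $2\left(\gog_{\lambda_1\alpha_2}\gog_{\lambda_2\alpha_1}-\tfrac{1}{d-1}\,\gog_{\lambda_1\alpha_1}\gog_{\lambda_2\alpha_2}\right)\gog_{\beta_1\beta_2}$ — so the implicit argument is precisely the brute-force contraction you describe, presumably done symbolically. What you add is organisation, and both of your devices are sound. The reduction to $\gog=\delta$ is legitimate modulo one cosmetic quibble: a Lorentzian $\gog$ is not in the $GL(d)$ orbit of $\delta$, but the polynomial-identity (Zariski-density) extension you invoke covers this; alternatively the reduction is unnecessary, since the general-$\gog$ contraction only ever uses $\gog_{\mu\alpha}\gog^{\alpha\nu}=\delta_\mu{}^\nu$ and $\gog_{\mu\nu}\gog^{\mu\nu}=d$ and is no harder than the flat case. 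The Schur-lemma route is the genuinely different and cleaner option: on tensors symmetric in the paired indices one finds $(\mathcal{O}\Phi)_{\lambda\alpha\beta}=\tfrac12\left(\Phi_{\alpha\lambda\beta}+\Phi_{\beta\lambda\alpha}\right)-\tfrac12\,\Phi_{\lambda\alpha\beta}+\tfrac{1}{2(d-2)}\,\gog_{\alpha\beta}\,\Phi_\lambda{}^\sigma{}_\sigma$ and $(\mathcal{O}^{-1}\Psi)_{\lambda\alpha\beta}=\Psi_{\alpha\lambda\beta}+\Psi_{\beta\lambda\alpha}-\tfrac{1}{d-1}\left(\gog_{\lambda\alpha}\Psi^\sigma{}_{\sigma\beta}+\gog_{\lambda\beta}\Psi^\sigma{}_{\sigma\alpha}\right)$, so the eigenvalue products are $\tfrac12\cdot 2=1$ on the traceless totally symmetric part and $(-1)\cdot(-1)=1$ on the traceless hook, and in your basis $\{v_\lambda\gog_{\alpha\beta},\ \gog_{\lambda\alpha}v_\beta+\gog_{\lambda\beta}v_\alpha\}$ the two $2\times2$ blocks indeed multiply to the identity — confirming your diagnosis that the vector block is the only place where the $d$-dependent factors must conspire and where the coefficient $\tfrac{1}{d-1}$ is forced. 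The one point worth making explicit in a final write-up is that the asserted relation $(\mathcal{O}^{-1})_{\lambda_1\alpha_1\beta_1\lambda_2\alpha_2\beta_2}\mathcal{O}^{\lambda_2\alpha_2\beta_2\lambda_3\alpha_3\beta_3}=\delta_{\lambda_1}^{\lambda_3}I_{\alpha_1\beta_1}{}^{\alpha_3\beta_3}$ is an identity of operators on the subspace of tensors symmetric in each index pair (the right-hand side is the identity there, not on all rank-three tensors); your decomposition is taken on exactly that subspace, so nothing in your argument fails, but the restriction should be stated.
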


\noindent Similarly to the previous theorem, the operator $ \mathcal{O}^{-1}_{\lambda_1\alpha_1\beta_1\lambda_2\alpha_2\beta_2}$ admits the following symmetry by construction:
\begin{align}
     \left(\mathcal{O}^{-1}\right)_{\lambda_1\alpha_1\beta_1\lambda_2\alpha_2\beta_2} =  \left(\mathcal{O}^{-1}\right)_{\lambda_1\beta_1\alpha_1\lambda_2\alpha_2\beta_2} =  \left(\mathcal{O}^{-1}\right)_{\lambda_1\alpha_1\beta_1\lambda_2\beta_2\alpha_2} =  \left(\mathcal{O}^{-1}\right)_{\lambda_2\alpha_2\beta_2\lambda_1\alpha_1\beta_1} .
\end{align}
One can construct it by a suitable symmetrisation of the following expression:
\begin{align}
    2\, \left( \gog_{\lambda_1\alpha_2} \gog_{\lambda_2\alpha_1} - \cfrac{1}{d-1} \, \gog_{\lambda_1\alpha_1} \gog_{\lambda_2\alpha_2} \right) \gog_{\beta_1\beta_2} .
\end{align}

The following theorem applies the results of all the above theorems to map the Hilbert action on the Cheung-Remmen variables.

\begin{theorem}[Cheung-Remmen action]

    {~}

    \noindent The Hilbert action in $d$ dimensions
    \begin{align}
        S_\text{H}[g] = -\cfrac{2}{ \kappa^{d-2} } \int d^d x ~\sqrt{-g} \, R
    \end{align}
    is equivalent to the following Cheung-Remmen action:
    \begin{align}\label{Cheung-Remmen_action}
        \begin{split}
            S_\text{CR}[\gog,A] &= - \cfrac{2}{ \kappa^{d-2} }\int d^d x \left[ -\cfrac12\, A^{\lambda_1\mu_1\nu_1}  \left(\mathcal{O}^{-1}\right)_{\lambda_1\mu_1\nu_1\lambda_2\mu_2\nu_2} (\gog)  A^{\lambda_2\mu_2\nu_2} + A^{\lambda\mu\nu}\pd_\lambda \gog_{\mu\nu} \right] \\
            &= - \cfrac{2}{ \kappa^{d-2} }\int d^d x \left[ A^{\lambda_1}_{\mu_1\nu_1} \,A^{\lambda_2}_{\mu_2\nu_2} \left( \delta_{\lambda_1}^{\mu_2} \delta_{\lambda_2}^{\mu_1} - \cfrac{1}{d-1} \, \delta_{\lambda_1}^{\mu_1} \, \delta_{\lambda_2}^{\mu_2} \right) \gog^{\nu_1\nu_2} - A^\lambda_{\mu\nu} \, \pd_\lambda \gog^{\mu\nu} \right].
        \end{split}
    \end{align}
    
\end{theorem}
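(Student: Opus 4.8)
The plan is to obtain the statement by chaining the preceding theorems and then performing a purely algebraic rewriting. For the first equality, I would combine the theorem that expresses $\int d^dx\,\sqrt{-g}\,R$ (modulo surface terms) as the bilinear $\tfrac12\,\pd_{\lambda_1}\gog_{\alpha_1\beta_1}\,\mathcal{O}^{\lambda_1\alpha_1\beta_1\lambda_2\alpha_2\beta_2}(\gog)\,\pd_{\lambda_2}\gog_{\alpha_2\beta_2}$ with the auxiliary-field theorem, which asserts that $\mathcal{S}_1[\gog]$ and $\mathcal{S}_2[\gog,A]$ are equivalent (on the $A$-field equation $A^{\lambda\mu\nu}=\mathcal{O}^{\lambda\mu\nu\lambda'\mu'\nu'}(\gog)\,\pd_{\lambda'}\gog_{\mu'\nu'}$ they coincide identically). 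Multiplying by the overall normalisation $-2/\kappa^{d-2}$ gives exactly the first line of \eqref{Cheung-Remmen_action}. At the level of the generating functional this is the familiar statement that the Gaussian integration over $A$ reproduces $\mathcal{S}_1$ up to an ultralocal $(\det\mathcal{O})^{\pm1/2}$ factor that is field-independent in dimensional regularisation and hence absorbed into the measure; I would mention this but not dwell on it.

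For the quadratic piece of the second equality I would substitute the explicit inverse kernel $(\mathcal{O}^{-1})_{\lambda_1\alpha_1\beta_1\lambda_2\alpha_2\beta_2}$ from the previous theorem into $-\tfrac12\,A^{\lambda_1\mu_1\nu_1}(\mathcal{O}^{-1})_{\lambda_1\mu_1\nu_1\lambda_2\mu_2\nu_2}\,A^{\lambda_2\mu_2\nu_2}$. Since $A^{\lambda\mu\nu}$ is symmetric in $\mu\leftrightarrow\nu$ (it is contracted with $\pd_\lambda\gog_{\mu\nu}$), each of the four permuted monomials in the first bracket of $\mathcal{O}^{-1}$ contracts to one and the same scalar, and likewise for the four monomials in the trace bracket; collapsing them converts the prefactors $\tfrac12$ and $\tfrac{1}{2(d-1)}$ into $1$ and $\tfrac{1}{d-1}$. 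I would then lower the $\mu_i,\nu_i$ indices of $A$ with the Cheung-Remmen metric, defining $A^{\lambda}{}_{\mu\nu}\overset{\text{def}}{=}\gog_{\mu\rho}\gog_{\nu\sigma}A^{\lambda\rho\sigma}$, and use $\gog^{\mu\alpha}\gog_{\alpha\nu}=\delta^\mu_\nu$ to trade pairs $\gog_{\lambda\mu}$ for Kronecker deltas while leaving a single pair of indices raised against $\gog^{\nu_1\nu_2}$. This brings the quadratic term to the form $A^{\lambda_1}{}_{\mu_1\nu_1}\,A^{\lambda_2}{}_{\mu_2\nu_2}\big(\delta_{\lambda_1}^{\mu_2}\delta_{\lambda_2}^{\mu_1}-\tfrac{1}{d-1}\delta_{\lambda_1}^{\mu_1}\delta_{\lambda_2}^{\mu_2}\big)\gog^{\nu_1\nu_2}$ appearing in the second line.

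For the linear term I would rewrite $A^{\lambda\mu\nu}\pd_\lambda\gog_{\mu\nu}$ directly in terms of the fundamental variables $\gog^{\mu\nu}$: differentiating $\gog_{\mu\alpha}\gog^{\alpha\nu}=\delta_\mu^\nu$ yields $\pd_\lambda\gog_{\mu\nu}=-\gog_{\mu\alpha}\gog_{\nu\beta}\,\pd_\lambda\gog^{\alpha\beta}$, so with the same index-lowering convention $A^{\lambda\mu\nu}\pd_\lambda\gog_{\mu\nu}=-A^{\lambda}{}_{\mu\nu}\,\pd_\lambda\gog^{\mu\nu}$, which is the last term of the second line; no further integration by parts is required, since the only surface terms in the whole construction are those already discarded in reducing $\sqrt{-g}\,R$ to first-order form. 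The main obstacle is the index bookkeeping in the quadratic step: one must verify which permutations of $\mathcal{O}^{-1}$ are genuinely equal after invoking the $\mu\leftrightarrow\nu$ symmetry of $A$, check that the trace pieces recombine with coefficient exactly $\tfrac{1}{d-1}$ (and not the $\tfrac{1}{d-2}$ that occurs in $\mathcal{O}$ itself), and fix the overall sign and the definition of the lowered-index field $A^{\lambda}{}_{\mu\nu}$ consistently between the quadratic and linear terms — in particular, confirming that this relabeling of the auxiliary field is a change of variables with unit Jacobian, so that it leaves the generating functional unchanged. Everything beyond this is routine contraction of tensors that have already been written down explicitly.
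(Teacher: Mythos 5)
Your route is the paper's own: the theorem is not proved separately there, but follows by chaining the preceding results exactly as you do --- the first-order form of $\sqrt{-g}\,R$ with kernel $\mathcal{O}$, the $\mathcal{S}_1\simeq\mathcal{S}_2$ auxiliary-field theorem (whose proof you correctly invoke), the explicit $\mathcal{O}^{-1}$, and then a purely cosmetic relabeling to get the second displayed line. The first equality is therefore fine and identical in spirit to the paper.

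The second equality, however, does not close the way you assert, and the sign issue you yourself flag as ``the main obstacle'' is left unresolved rather than fixed. Carrying out your own steps: all four monomials in each bracket of $\mathcal{O}^{-1}$ give equal contractions against the symmetric $A$'s, so
\begin{align}
A^{\lambda_1\alpha_1\beta_1}\left(\mathcal{O}^{-1}\right)_{\lambda_1\alpha_1\beta_1\lambda_2\alpha_2\beta_2}A^{\lambda_2\alpha_2\beta_2}
 = 2\,[T]-\cfrac{2}{d-1}\,[S],
 \qquad
 [T]=A^{\lambda_1\alpha_1\beta_1}A^{\lambda_2\alpha_2\beta_2}\gog_{\lambda_1\alpha_2}\gog_{\lambda_2\alpha_1}\gog_{\beta_1\beta_2},
\end{align}
with $[S]$ the trace analogue, so that $-\tfrac12\,A\,\mathcal{O}^{-1}A=-[T]+\tfrac{1}{d-1}[S]$. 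Lowering indices with $\gog$, i.e.\ $A^{\lambda}{}_{\mu\nu}=\gog_{\mu\rho}\gog_{\nu\sigma}A^{\lambda\rho\sigma}$, one finds $[T]-\tfrac{1}{d-1}[S]=A^{\lambda_1}{}_{\mu_1\nu_1}A^{\lambda_2}{}_{\mu_2\nu_2}\bigl(\delta_{\lambda_1}^{\mu_2}\delta_{\lambda_2}^{\mu_1}-\tfrac{1}{d-1}\delta_{\lambda_1}^{\mu_1}\delta_{\lambda_2}^{\mu_2}\bigr)\gog^{\nu_1\nu_2}$, so the quadratic piece of the first line equals the \emph{negative} of the quadratic form displayed in the second line, while the linear terms match ($A^{\lambda\mu\nu}\pd_\lambda\gog_{\mu\nu}=-A^{\lambda}{}_{\mu\nu}\pd_\lambda\gog^{\mu\nu}$, as you say). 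A redefinition $A\to-A$ cannot repair this, since it flips only the linear term; a quick on-shell check confirms the mismatch: substituting $A=\mathcal{O}\,\pd\gog$ into the first line reproduces $+\mathcal{S}_1$, whereas eliminating $A$ from the second line as displayed yields $-\mathcal{S}_1$. So either the displayed second line carries a sign slip that your derivation should identify explicitly (the natural fix is an overall minus on its quadratic form, or equivalently a $+$ on its linear term), or an extra sign convention for $A^{\lambda}{}_{\mu\nu}$ must be stated; asserting that the ``routine contraction'' lands on the second line as written contradicts your own intermediate formulas.
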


\noindent We present the second line of the theorem solely to highlight the direct relation with the previous publication \cite{Cheung:2017kzx,Alvarez:2025hym}. For further calculations, it is useful to work with the symmetric operators $\mathcal{O}$ and $\mathcal{O}^{-1}$.

The action \eqref{Cheung-Remmen_action} is applicable in the most general setup. Our goal is to derive the Feynman rules, so we shall expand both the Cheung-Remmen and auxiliary variables about the flat background. To work with $A^\lambda_{\mu\nu}$ variables, we shall manipulate the indices of these operators in the following way:
\begin{align}
    \begin{split}
        \mathcal{O}{}^{\lambda_1} {}_{\alpha_1\beta_1} {}^{\lambda_2}{}_{\alpha_2\beta_2} (\gog) & \overset{\text{def}}{=} \mathcal{O}^{\lambda_1\mu_1\nu_1\lambda_2\mu_2\nu_2} \, \gog_{\mu_1\alpha_1} \, \gog_{\nu_1\beta_1} \, \gog_{\mu_2\alpha_2} \, \gog_{\nu_2\beta_2} \\
        & = \cfrac14 \left( \delta^{\lambda_1}_{\alpha_2} \delta^{\lambda_2}_{\alpha_1}\gog_{\beta_1\beta_2} + \cdots \right) - \cfrac14 \, \gog^{\lambda_1\lambda_2} \left( \gog_{\alpha_1\alpha_2}\gog_{\beta_1\beta_2} + \cdots \right) + \cfrac12\, \cfrac{1}{d-2} \, \gog^{\lambda_1\lambda_2}\,\gog_{\alpha_1\beta_1}\,\gog_{\alpha_2\beta_2} \, ,\\
        \left(\mathcal{O}^{-1}\right) {}_{\lambda_1}{}^{\alpha_1\beta_1}{}_{\lambda_2}{}^{\alpha_2\beta_2} (\gog) & \overset{\text{def}}{=} \left(\mathcal{O}^{-1}\right)_{\lambda_1\mu_1\nu_1\lambda_2\mu_2\nu_2} \, \gog^{\mu_1\alpha_1} \, \gog^{\nu_1\beta_1} \, \gog^{\mu_2\alpha_2} \, \gog^{\nu_2\beta_2} \\
        & = \cfrac12\left(\delta_{\lambda_1}^{\alpha_2} \delta_{\lambda_2}^{\alpha_1} \, \gog^{\beta_1\beta_2} + \cdots \right) - \cfrac12 \, \cfrac{1}{d-1}\left( \delta_{\lambda_1}^{\alpha_1}\delta_{\lambda_2}^{\alpha_2} \gog^{\beta_1\beta_2} + \cdots\right) \,.
    \end{split}
\end{align}
The operator $\mathcal{O}^{-1}$ which enters the Cheung-Remmen action \eqref{Cheung-Remmen_action} is linear with respect to $\gog^{\mu\nu}$. Consequently, its perturbative expansion will contain only two terms, which result in the simpler structure of the Feynman rules.

We introduce small perturbations $\goh^{\mu\nu}$ of the flat background metric $\eta^{\mu\nu}$ in the following way:
\begin{align}\label{CH_perturbative_expansion}
    \gog^{\mu\nu} = \eta^{\mu\nu} - \kappa \, \goh^{\mu\nu} .
\end{align}
The variables $\goh^{\mu\nu}$ admit a non-linear, but algebraic relation with the standard perturbative variables:
\begin{align}
    \goh^{\mu\nu} = \sum\limits_{n=0}^{\infty} (-1)^{n+1} \kappa^{n-1} (h^n)^{\mu\nu} .
\end{align}
The immediate corollary of such a choice of perturbative variables is the following split of the $\mathcal{O}^{-1}$ operator:
\begin{align}
    \begin{split}
        \left(\mathcal{O}^{-1}\right) {}_{\lambda_1}{}^{\alpha_1\beta_1}{}_{\lambda_2}{}^{\alpha_2\beta_2} (\eta^{\mu\nu} - \kappa \, \goh^{\mu\nu}) &= \left(\mathcal{O}^{-1}\right) {}_{\lambda_1}{}^{\alpha_1\beta_1}{}_{\lambda_2}{}^{\alpha_2\beta_2} (\eta^{\mu\nu}) - \kappa \left(\mathcal{O}^{-1}\right) {}_{\lambda_1}{}^{\alpha_1\beta_1}{}_{\lambda_2}{}^{\alpha_2\beta_2} (\goh^{\mu\nu}) \, , \\
        \left(\mathcal{O}^{-1}\right) {}_{\lambda_1}{}^{\alpha_1\beta_1}{}_{\lambda_2}{}^{\alpha_2\beta_2} (\eta) & = \cfrac12\left(\delta_{\lambda_1}^{\alpha_2} \delta_{\lambda_2}^{\alpha_1} \, \eta^{\beta_1\beta_2} + \cdots \right) - \cfrac12 \, \cfrac{1}{d-1}\left( \delta_{\lambda_1}^{\alpha_1}\delta_{\lambda_2}^{\alpha_2} \eta^{\beta_1\beta_2} + \cdots\right) \, , \\
        \left(\mathcal{O}^{-1}\right) {}_{\lambda_1}{}^{\alpha_1\beta_1}{}_{\lambda_2}{}^{\alpha_2\beta_2} (\goh) & = \cfrac12\left(\delta_{\lambda_1}^{\alpha_2} \delta_{\lambda_2}^{\alpha_1} \, \goh^{\beta_1\beta_2} + \cdots \right) - \cfrac12 \, \cfrac{1}{d-1}\left( \delta_{\lambda_1}^{\alpha_1}\delta_{\lambda_2}^{\alpha_2} \goh^{\beta_1\beta_2} + \cdots\right) .
     \end{split}
\end{align}

The choice of the perturbative expansion \eqref{CH_perturbative_expansion} uniquely fixes the perturbative expansion for the auxiliary variable $A^\lambda_{\mu\nu}$. The Cheung-Remmen action \eqref{Cheung-Remmen_action} contains a mixing term between variables $A$ and $\gog$. In order to exclude this term, one must introduce a term proportional to $\pd\goh$ to the perturbative expansion of $A$.

\begin{theorem}[Cheung-Remmen action in perturbative variables]

    {~}

    \noindent If one defines the perturbative variables $\goh$ and $B$ in the following way:
    \begin{align}
        \begin{split}
            \gog^{\mu\nu} &= \eta^{\mu\nu} - \kappa \, \goh^{\mu\nu} , \\
            A^\lambda_{\mu\nu} &= B^\lambda_{\mu\nu} + \kappa \, \mathcal{O}^\lambda{}_{\mu\nu}{}^\alpha{}_{\rho\sigma} (\eta) ~ \pd_\alpha \goh^{\rho\sigma}  .
        \end{split}
    \end{align}
    Then the action \eqref{Cheung-Remmen_action} splits in the following parts:
    \begin{align}
        S_\text{CR}[\goh,B] &= S_{\goh\goh} + S_{BB} + S_{\goh\goh\goh} + S_{B\goh\goh} + S_{BB\goh} \, ,
    \end{align}
    \begin{align}
        \begin{split}
            S_{\goh\goh} & = \int d^d x\left[ \cfrac12\, \pd_{\lambda_1} \goh^{\mu_1\nu_1} \left\{ -2\,\kappa^{4-d} \, \mathcal{O}^{\lambda_1}{}_{\mu_2\nu_2}{}^{\lambda_2}{}_{\mu_2\nu_2} (\eta) \right\} \pd_{\lambda_2} \goh^{\mu_2\nu_2} \right] \\
            & = \int d^d x \, \kappa^{4-d}\left[ \goh^{\mu\nu} \pd_\lambda\pd_\mu \goh^\lambda{}_\nu + \cfrac12 \, \goh^{\mu\nu} \square \goh_{\mu\nu} - \cfrac12\,\cfrac{1}{d-2}\,\goh\square \, \goh \right] ,
        \end{split}
    \end{align}
    \begin{align}
        \begin{split}
            S_{BB} & =\int d^d x \left[ \cfrac12 \, B^{\lambda_1}_{\mu_1\nu_1} \left\{ 2\, \kappa^{2-d} (\mathcal {O}^{-1})_{\lambda_1}{}^{\mu_1\nu_1}{}_{\lambda_2}{}^{\mu_2\nu_2} (\eta) \right\} \, B^{\lambda_2}_{\mu_2\nu_2} \right] \\
            & = \int d^d x \, 2 \, \kappa^{2-d} \left[ B^{\mu\nu\lambda} B_{\nu\mu\lambda} - \cfrac{1}{d-1} \, B^\mu{}_{\mu\lambda} B^\nu{}_\nu{}^\lambda\right] ,
        \end{split}
    \end{align}
    \begin{align}
        \begin{split}
            S_{\goh\goh\goh} & = \int d^4 x \, \pd_{\lambda_1}\goh^{\mu_1\nu_1}\left\{ - \kappa^{5-d} \,  \mathcal{O}^{\lambda_1}{}_{\mu_1\nu_1}{}^{\alpha_1}{}_{\rho_1\sigma_1}(\eta) \left(\mathcal{O}^{-1}\right)_{\alpha_1}{}^{\rho_1\sigma_1}{}_{\alpha_2}{}^{\rho_2\sigma_2}(\goh) \mathcal{O}^{\alpha_2}{}_{\rho_2\sigma_2}{}^{\alpha_3}{}_{\rho_3\sigma_3}(\eta) \right\} \pd_{\lambda_2} \goh^{\mu_2\nu_2} \\
            & = \int d^d x \, \kappa^{5-d} \, \goh^{\mu\nu} \left[ -\cfrac12\, \pd_\mu \goh_{\alpha\beta} \pd_\nu \goh^{\alpha\beta} - \pd_\alpha \goh_{\mu\beta} \left( \pd^\beta \goh_\nu{}^\alpha - \pd^\alpha \goh_\nu{}^{\beta}\right) + \cfrac{1}{d-2}\left( \cfrac12\,  \pd_\mu \goh \pd_\nu\goh - \pd_\lambda \goh_{\mu\nu} \pd^\lambda \goh \right) \right] , 
        \end{split}
    \end{align}
    \begin{align}
        \begin{split}
            S_{B\goh\goh} & =\int d^d x \, B^{\lambda_1}_{\mu_1\nu_1} \left\{ -2 \, \kappa^{4-d} \, \left(O^{-1}\right)_{\lambda_1}{}^{\mu_1\nu_1}{}_{\alpha}{}^{\rho\sigma}(\goh)  \mathcal{O}^\alpha{}_{\mu\nu}{}^{\lambda_2}{}_{\mu_2\nu_2} (\eta) \right\} \pd_{\lambda_2} \goh^{\mu_2\nu_2} \\
            & = \int d^d x \, \kappa^{4-d} \left[ \cfrac{2}{d-2} \left( B^\alpha{}_{\alpha\beta} \goh^{\beta\mu} \pd_\mu \goh - B^{\alpha\beta\mu}\goh_{\alpha\beta}\pd_\mu \goh \right) -2 B^{\nu\beta\lambda} \goh_\beta{}^\mu \left( \pd_\mu \goh_{\nu\lambda} + \pd_\nu \goh_{\mu\lambda} - \pd_\lambda \goh_{\mu\nu} \right) \right] ,
        \end{split}
    \end{align}
    \begin{align}
        \begin{split}
            S_{BB\goh} & = \int d^d x \, B^{\lambda_1}_{\mu_1\nu_1}\left\{ - \kappa^{3-d} \, (\mathcal{O}^{-1})_{\lambda_1}{}^{\mu_1\nu_1}{}_{\lambda_2}{}^{\mu_2\nu_2} (\goh) \right\} B^{\lambda_2}_{\mu_2\nu_2} \\
            & = \int d^d x \, \kappa^{3-d} \, \goh^{\mu\nu} \left[ -2 \, B^{\alpha\beta}{}_\mu B_{\beta\alpha\nu} + \cfrac{2}{d-1}\,B^\alpha_{\alpha\mu} B^\beta{}_{\beta\nu} \right] .
        \end{split}
    \end{align}

\end{theorem}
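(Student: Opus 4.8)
\noindent The plan is to substitute the two field redefinitions straight into the Cheung--Remmen action \eqref{Cheung-Remmen_action}, written in the index form that uses the operator $(\mathcal{O}^{-1})_{\lambda_1}{}^{\mu_1\nu_1}{}_{\lambda_2}{}^{\mu_2\nu_2}(\gog)$, and to exploit the fact (already displayed above) that this operator is \emph{linear} in $\gog^{\mu\nu}$, so that $(\mathcal{O}^{-1})(\eta-\kappa\goh)=(\mathcal{O}^{-1})(\eta)-\kappa\,(\mathcal{O}^{-1})(\goh)$ with no further terms. Since $A^\lambda_{\mu\nu}=B^\lambda_{\mu\nu}+\kappa\,\mathcal{O}^\lambda{}_{\mu\nu}{}^\alpha{}_{\rho\sigma}(\eta)\,\pd_\alpha\goh^{\rho\sigma}$ is at most linear in $\kappa$ and $\pd_\lambda\gog^{\mu\nu}=-\kappa\,\pd_\lambda\goh^{\mu\nu}$, the whole action is a polynomial in $\kappa$ of degree three. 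There is no term linear in the fields (the lowest-order contribution from the kinetic mixing vanishes because $\pd\eta=0$, and the quadratic piece starts at order $B^2$), so the proof reduces to collecting the coefficients of $\kappa^0,\dots,\kappa^3$ and recognising each of them.

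\noindent First I would expand the mixing piece $-A^\lambda_{\mu\nu}\,\pd_\lambda\gog^{\mu\nu}=\kappa\,B^\lambda_{\mu\nu}\,\pd_\lambda\goh^{\mu\nu}+\kappa^2\,\pd\goh\cdot\mathcal{O}(\eta)\cdot\pd\goh$, where I abbreviate $\pd\goh\cdot\mathcal{O}(\eta)\cdot\pd\goh\equiv\pd_{\lambda_1}\goh^{\mu_1\nu_1}\,\mathcal{O}^{\lambda_1}{}_{\mu_1\nu_1}{}^{\lambda_2}{}_{\mu_2\nu_2}(\eta)\,\pd_{\lambda_2}\goh^{\mu_2\nu_2}$. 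Then I would expand the quadratic piece $-\tfrac12 A(\mathcal{O}^{-1})(\gog)A$; inserting both expansions and using the symmetry of $\mathcal{O}^{-1}$ under exchange of its two index triples, this produces, ordered in $\kappa$: the monomial $-\tfrac12 B(\mathcal{O}^{-1})(\eta)B$; then $-\kappa\,B(\mathcal{O}^{-1})(\eta)\mathcal{O}(\eta)\pd\goh$ together with $+\tfrac{\kappa}{2}B(\mathcal{O}^{-1})(\goh)B$; then a $\pd\goh\cdot\mathcal{O}(\eta)\cdot\pd\goh$ contribution (with coefficient $-\tfrac{\kappa^2}{2}$, using $\mathcal{O}(\eta)(\mathcal{O}^{-1})(\eta)\mathcal{O}(\eta)=\mathcal{O}(\eta)$) together with $+\kappa^2 B(\mathcal{O}^{-1})(\goh)\mathcal{O}(\eta)\pd\goh$; and finally $+\tfrac{\kappa^3}{2}\pd\goh\cdot\mathcal{O}(\eta)(\mathcal{O}^{-1})(\goh)\mathcal{O}(\eta)\cdot\pd\goh$.

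\noindent The heart of the argument is the cancellation of the unwanted graviton--auxiliary mixing. By the invertibility relation of the preceding theorem, $(\mathcal{O}^{-1})(\eta)\,\mathcal{O}(\eta)$ is the identity on the $(\lambda,\mu\nu)$ multi-index, so the order-$\kappa$ monomial $-\kappa\,B(\mathcal{O}^{-1})(\eta)\mathcal{O}(\eta)\pd\goh$ collapses to $-\kappa\,B^\lambda_{\mu\nu}\,\pd_\lambda\goh^{\mu\nu}$, which cancels exactly the $\kappa\,B^\lambda_{\mu\nu}\,\pd_\lambda\goh^{\mu\nu}$ from the mixing piece --- this is precisely the reason the $\pd\goh$-shift is built into the definition of $A$. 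The surviving terms are then matched one by one, with the overall factor $-2/\kappa^{d-2}$ out front: the $B^2$ monomial gives $S_{BB}$; $+\tfrac{\kappa}{2}B(\mathcal{O}^{-1})(\goh)B$ gives $S_{BB\goh}$; the two $\pd\goh\cdot\mathcal{O}(\eta)\cdot\pd\goh$ contributions add (weights $+1$ from the mixing piece and $-\tfrac12$ from the quadratic piece, net $+\tfrac12$) to give $S_{\goh\goh}$; $+\kappa^2 B(\mathcal{O}^{-1})(\goh)\mathcal{O}(\eta)\pd\goh$ gives $S_{B\goh\goh}$; and the order-$\kappa^3$ monomial gives $S_{\goh\goh\goh}$. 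To reach the explicit component expressions quoted in the statement --- those with $\goh^{\mu\nu}\pd_\lambda\pd_\mu\goh^\lambda{}_\nu$, $B^{\mu\nu\lambda}B_{\nu\mu\lambda}$, and so on --- one then inserts the explicit symmetric kernels $\mathcal{O}(\eta)$, $(\mathcal{O}^{-1})(\eta)$, $(\mathcal{O}^{-1})(\goh)$, contracts the Kronecker deltas and $\eta$'s, and integrates by parts once in $S_{\goh\goh}$ to move all derivatives onto a single factor.

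\noindent I expect the only genuine obstacle to be organisational rather than conceptual: keeping the index heights consistent when passing between $A^{\lambda\mu\nu}$ and $A^\lambda_{\mu\nu}$ (equivalently, between $\mathcal{O}^{-1}$ contracted with $\gog^{\mu\nu}$ and with $\gog_{\mu\nu}$), counting correctly the symmetrisation multiplicities hidden in the ``$\cdots$'' of $\mathcal{O}$ and $\mathcal{O}^{-1}$, and tracking the powers of $\kappa$ against the $-2/\kappa^{d-2}$ prefactor. No idea beyond the linearity of $\mathcal{O}^{-1}$ in $\gog^{\mu\nu}$ and the cancellation it enables is needed.
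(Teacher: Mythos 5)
Your proposal is correct and follows essentially the same route the paper (implicitly) takes: direct substitution of $\gog^{\mu\nu}=\eta^{\mu\nu}-\kappa\,\goh^{\mu\nu}$ and the shifted $A^\lambda_{\mu\nu}$ into \eqref{Cheung-Remmen_action}, exploiting the linearity of $\mathcal{O}^{-1}$ in $\gog^{\mu\nu}$, the symmetry of its two index triples, and the identity $(\mathcal{O}^{-1})(\eta)\,\mathcal{O}(\eta)=\mathbb{1}$ so that the $O(\kappa)\,B\,\pd\goh$ mixing cancels — precisely the cancellation the paper invokes when motivating the $\pd\goh$ shift in $A$ — after which the five surviving monomials, weighted by the prefactor $-2/\kappa^{d-2}$, reproduce $S_{\goh\goh}$, $S_{BB}$, $S_{\goh\goh\goh}$, $S_{B\goh\goh}$, $S_{BB\goh}$ with the stated coefficients. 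Your closing step (inserting the explicit kernels, contracting, and integrating by parts in $S_{\goh\goh}$) is all that remains, so the argument is complete in the same sense as the paper's.
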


Before proceeding, we shall comment on the choice of the background metric. One may expect that the same calculations may be done with a generic background $\overline{g}$ with little to no changes. From the calculation point of view, this is true. On the contrary, from the physical point of view, the choice of a generic background introduces a source term for the $B$ variable. Namely, the action contains a term $A^{\lambda}_{\mu\nu} \pd_\lambda \gog^{\mu\nu}$ which will produce a term $B^{\lambda}_{\mu\nu} \pd_\lambda \overline{g}^{\mu\nu}$ which is a linear coupling between $B$ and the background. Consequently, the choice of the flat background is determined if one wants to exclude such a source term.

Lastly, we shall bring the action to a form suitable for the functional integral.

\begin{theorem}

    {~}

    \noindent The action \eqref{CH_perturbative_expansion} takes the following form that is more suitable for the treatment with the functional integral.
    \begin{align}\label{Cheung-Remmen_action_perturbative}
        \mathcal{A}_\text{CR}[\goh,B] = \mathcal{A}_{\goh\goh} + \mathcal{A}_{BB} + \mathcal{A}_{\goh\goh\goh} + \mathcal{A}_{B\goh\goh} + \mathcal{A}_{BB\goh} \, ,
    \end{align}
    Terms $\mathcal{A}_{\goh\goh}$ and $\mathcal{A}_{BB}$ are quadratic in fields:
    \begin{align}
        \begin{split}
            \mathcal{A}_{\goh\goh} &= \int \cfrac{d^d p}{(2\pi)^d} ~ \goh^{\mu_1\nu_1}(-p) \, \goh^{\mu_2\nu_2}(p) \, \cfrac12\, \mathcal{P}_{\mu_1\nu_1\mu_2\nu_2} \, ,\\
            \mathcal{A}_\text{BB} &= \int \cfrac{d^d p}{(2\pi)^d} ~ B^{\lambda_1}_{\mu_1\nu_1}(-p) \, B^{\lambda_1}_{\mu_2\nu_2}(p) \, \cfrac12\, \left(\mathcal{P}_\text{B}\right)_{\lambda_1}{}^{\mu_1\nu_1}{}_{\lambda_2}{}^{\mu_2\nu_2} \,.
        \end{split}
    \end{align}
    The operators $\mathcal{P}$ and $\mathcal{P}_B$ take the following form:
    \begin{align}
        \begin{split}
            &\mathcal{P}_{\mu_1\nu_1\mu_2\nu_2} = \kappa^{4-d}\left[ -\cfrac12\left( p_{\mu_1} p_{\mu_2} \eta_{\nu_1\nu_2} + \cdots \right) + p^2\left( \cfrac12\left(\eta_{\mu_1\mu_2}\eta_{\nu_1\nu_2} + \eta_{\mu_1\nu_2}\eta_{\mu_2\nu_1}\right)  -\cfrac{1}{d-2} \, \eta_{\mu_1\nu_1}\eta_{\mu_2\nu_2} \right) \right] \, , \\
            & \left(\mathcal{P}_\text{B}\right)_{\lambda_1}{}^{\mu_1\nu_1}{}_{\lambda_2}{}^{\mu_2\nu_2} =\kappa^{2-d}\left[ \left(\delta_{\lambda_1}^{\mu_2} \delta_{\lambda_2}^{\mu_1} \, \eta^{\nu_1\nu_2} + \cdots \right) -  \cfrac{1}{d-1}\left( \delta_{\lambda_1}^{\mu_1}\delta_{\lambda_2}^{\mu_2} \eta^{\nu_1\nu_2} + \cdots\right) \right] \, ,
        \end{split}
    \end{align}
    The other terms describe the interaction of fields $\goh$ and $B$. We present them in the following symmetric form:
    \begin{align}
        \begin{split}
            \mathcal{A}_{\goh\goh\goh} & = \!\!\!\int\!\! \cfrac{d^d p_1}{(2\pi)^d} \cfrac{d^d p_2}{(2\pi)^d} \cfrac{d^d p_3}{(2\pi)^d} \, \goh^{\mu_1\nu_1}(p_1) \goh^{\mu_2\nu_2}(p_2)\goh^{\mu_3\nu_3}(p_3)\, (2\pi)^d \delta\left( p_1 \!+\! p_2 \!+\! p_3\right)  \cfrac{1}{3!} \,\mathcal{V}_{\mu_1\nu_1\mu_2\nu_2\mu_3\nu_3}(p_1,p_2,p_3) \,, \\
            \mathcal{A}_{B\goh\goh} & = \!\!\!\int\!\! \cfrac{d^d q}{(2\pi)^d} \cfrac{d^d p_1}{(2\pi)^d} \cfrac{d^d p_2}{(2\pi)^d} \, B^{\alpha}_{\rho\sigma}(q) \goh^{\mu_1\nu_1}(p_1) \goh^{\mu_2\nu_2}(p_2)\, (2\pi)^d \delta\left( q \!+\! p_1 \!+\! p_2\right)  \cfrac{1}{2!} \,\mathcal{V}_\alpha{}^{\rho\sigma}{}_{\mu_1\nu_1\mu_2\nu_2}(p_1,p_2) \,, \\
            \mathcal{A}_{BB\goh} & = \!\!\!\int\!\! \cfrac{d^d q_1}{(2\pi)^d} \cfrac{d^d q_2}{(2\pi)^d} \cfrac{d^d p}{(2\pi)^d} \, B^{\alpha_1}_{\rho_1\sigma_1}(q_1) B^{\alpha_2}_{\rho_2\sigma_2}(q_2) \goh^{\mu\nu}(p) \, (2\pi)^d \delta\left( q_1 \!+\! q_2 \!+\! p \right) \cfrac{1}{2!} \,\mathcal{V}_{\alpha_1}{}^{\rho_1\sigma_1}{}_{\alpha_2}{}^{\rho_2\sigma_2}{}_{\mu\nu} \,.
        \end{split}
    \end{align}
    Factors $1/3!$ and $1/2!$ are made explicit and correspond to the symmetrisation with respect to particles permutations. The expressions for $\mathcal{V}$ read
    \begin{align}
        \begin{split}
            \mathcal{V}_{\mu_1\nu_1\mu_2\nu_2\mu_3\nu_3} (p_1,p_2,p_3)  = & \cfrac{(-1)\,\kappa^{5-d}}{2^3 } \Bigg[ (p_1)_{\mu_2} (p_2)_{\mu_1} \eta_{\mu_3\nu_1}\eta_{\nu_3\nu_2} \\
            & \hspace{20pt} + \cfrac12\, (p_1)_{\mu_2} (p_3)_{\nu_2} \left(  \eta_{\mu_1\mu_3}\eta_{\nu_1\nu_3} - \cfrac{1}{d - 2} \, \eta_{\mu_1\nu_1}\eta_{\mu_3\nu_3} \right) \\
            & \hspace{20pt} - (p_1 \cdot p_2) \left( \eta_{\nu_1\mu_2}\eta_{\nu_2\mu_3}\eta_{\nu_3\mu_1} -\cfrac{1}{d-2} \, \eta_{\mu_1\nu_1}\,\eta_{\mu_2\nu_3}\eta_{\mu_3\nu_2} \right) + \cdots \Bigg],\\
            \mathcal{V}_{\alpha}{}^{\rho\sigma}{}_{\mu_1\nu_1\mu_2\nu_2} (p_1,p_2) =& - \cfrac{i \, \kappa^{4-d}}{2^2}  \Bigg[ (p_1)_\alpha \delta^\rho_{\mu_1} \delta^\sigma_{\mu_2} \eta_{\nu_1\nu_2} \\
            & \hspace{20pt} - (p_1)^\rho \left( \eta_{\alpha\mu_1} \delta^\sigma_{\mu_2} \eta_{\nu_1\nu_2} - \cfrac{1}{d-2} \, \eta_{\alpha\mu_2} \delta^\sigma_{\nu_2} \eta_{\mu_1\nu_1}  \right)\\
            & \hspace{20pt} + (p_1)_{\mu_2}  \left( \eta_{\alpha\mu_1} \, \delta^\sigma_{\nu_1} \delta^\rho_{\nu_2} - \cfrac{1}{d-2} \, \delta^\sigma_{\alpha} \eta_{\mu_1\nu_1} \delta^\rho_{\nu_2} \right) + \cdots \Bigg] , \\
            \mathcal{V}_{\alpha_1}{}^{\rho_1\sigma_1}{}_{\alpha_2}{}^{\rho_2\sigma_2}{}_{\mu\nu} =& - \cfrac{\kappa^{3-d}}{2^3} \,  \left[ \left(\delta_{\alpha_1}^{\rho_2}\delta_{\alpha_2}^{\rho_1} - \cfrac{1}{d-1} \, \delta_{\alpha_1}^{\rho_1}\delta_{\alpha_2}^{\rho_2} \right) \delta^{\sigma_1}_{\mu}\delta^{\sigma_2}_{\nu} + \cdots \right] .
        \end{split}
    \end{align}
    
\end{theorem}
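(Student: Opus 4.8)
The plan is to obtain each momentum-space object by Fourier transforming the corresponding position-space expression produced by the preceding theorem and then symmetrising the resulting kernel over the permutations of identical external legs. Concretely, I would insert $\goh^{\mu\nu}(x)=\int\frac{d^dp}{(2\pi)^d}\,e^{ip\cdot x}\goh^{\mu\nu}(p)$ and $B^\lambda_{\mu\nu}(x)=\int\frac{d^dp}{(2\pi)^d}\,e^{ip\cdot x}B^\lambda_{\mu\nu}(p)$ into $S_{\goh\goh}$, $S_{BB}$, $S_{\goh\goh\goh}$, $S_{B\goh\goh}$, $S_{BB\goh}$, perform the $x$-integration to produce the overall momentum-conserving $\delta$-functions, and replace each $\pd_\mu$ acting on a leg of momentum $p$ by $ip_\mu$ (and hence $\square$ by $-p^2$, up to the Fourier-sign convention).

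For the two quadratic terms this is immediate. Since $S_{BB}$ carries no derivatives, its Fourier transform is just the substitution of fields and directly gives $(\mathcal{P}_\text{B})_{\lambda_1}{}^{\mu_1\nu_1}{}_{\lambda_2}{}^{\mu_2\nu_2}$, the ``$\cdots$'' restoring the $\mu_i\leftrightarrow\nu_i$ (and $\rho\leftrightarrow\sigma$) symmetries. For $S_{\goh\goh}$ the three monomials $\goh^{\mu\nu}\pd_\lambda\pd_\mu\goh^\lambda{}_\nu$, $\tfrac12\goh^{\mu\nu}\square\goh_{\mu\nu}$ and $-\tfrac1{2(d-2)}\goh\,\square\,\goh$ become, after setting the second momentum to $-p$, respectively the $p_{\mu_1}p_{\mu_2}\eta_{\nu_1\nu_2}$, $\eta\eta$ and $\eta_{\mu_1\nu_1}\eta_{\mu_2\nu_2}$ structures of $\mathcal{P}_{\mu_1\nu_1\mu_2\nu_2}$; their coefficients and the ``$\cdots$'' are then fixed by symmetrising under $\mu_i\leftrightarrow\nu_i$ and under $(1)\leftrightarrow(2)$ combined with $p\to-p$. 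For the three cubic terms I would use that $\goh$ and $B$ are commuting fields, so the kernel multiplying $\goh(p_1)\goh(p_2)\goh(p_3)$ may be replaced by its average over the $3!$ permutations of the triples $(p_i,\mu_i,\nu_i)$; extracting this explicit $1/3!$ defines $\mathcal{V}_{\mu_1\nu_1\mu_2\nu_2\mu_3\nu_3}$. Likewise, averaging over the two identical $\goh$ legs in $\mathcal{A}_{B\goh\goh}$ and the two identical $B$ legs in $\mathcal{A}_{BB\goh}$ produces the $1/2!$ prefactors and defines $\mathcal{V}_\alpha{}^{\rho\sigma}{}_{\mu_1\nu_1\mu_2\nu_2}$ and $\mathcal{V}_{\alpha_1}{}^{\rho_1\sigma_1}{}_{\alpha_2}{}^{\rho_2\sigma_2}{}_{\mu\nu}$. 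Counting derivatives then fixes the factors of $i$: $S_{\goh\goh\goh}$ has two derivatives, yielding the overall $(-1)=i^2$ and the bilinears $(p_i)_\mu(p_j)_\nu$ and $(p_i\cdot p_j)$; $S_{B\goh\goh}$ has a single derivative, yielding the explicit $i$ and a linear momentum dependence; and $S_{BB\goh}$ is derivative-free, yielding a momentum-independent vertex $\propto\kappa^{3-d}$. At this stage one simply reads off the expressions for $\mathcal{V}$ after expanding the explicit tensors $\mathcal{O}(\eta)$, $\mathcal{O}^{-1}(\eta)$ and $\mathcal{O}^{-1}(\goh)$ supplied by the earlier theorems and contracting them against $\pd_\lambda\goh$.

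I expect the only genuine difficulty to be combinatorial and sign bookkeeping rather than anything conceptual: one must check that the ``$\cdots$'' together with the displayed $1/2^{k}$ and $1/n!$ normalisations reproduce exactly the required symmetry under $\mu_i\leftrightarrow\nu_i$, $\rho\leftrightarrow\sigma$ and permutations of identical fields without double counting, and that the signs coming from the Fourier convention and from the integrations by parts implicit in the preceding theorem are consistent across the momentum-dependent and momentum-independent pieces. Carrying this out is lengthy but entirely mechanical, since all the nontrivial algebra — the conformal-mapping identities, the construction of $\mathcal{O}$ and $\mathcal{O}^{-1}$, the auxiliary-field reformulation, and the perturbative split — has already been performed in the theorems above.
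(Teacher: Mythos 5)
Your proposal is correct and follows essentially the same route as the paper: the theorem is a direct Fourier-space rewriting of the position-space terms $S_{\goh\goh}$, $S_{BB}$, $S_{\goh\goh\goh}$, $S_{B\goh\goh}$, $S_{BB\goh}$ established in the preceding theorem, with derivatives replaced by momenta, momentum-conservation $\delta$-functions extracted, kernels symmetrised over identical legs to produce the explicit $1/3!$ and $1/2!$ factors, and the powers of $i$ fixed by the number of derivatives in each term. The only caveat is bookkeeping (Fourier-sign convention and the ``$\cdots$'' symmetrisations), which you correctly flag as mechanical rather than conceptual.
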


Let us clarify the choice of functions $\mathcal{V}$ in the theorem above. By the construction, such operators enjoy all the required symmetries. Namely, they are symmetric with respect to permutation of index pairs $\mu_i\leftrightarrow\nu_u$, $\rho_i\leftrightarrow\sigma_i$, and with respect to particles permutations $(\mu_i,\nu_i,p_i)\leftrightarrow(\mu_j,\nu_j,p_j)$ and $(\alpha_i,\rho_i,\sigma_i)\leftrightarrow(\alpha_i,\rho_i,\sigma_i)$. Since we separated the factors associated with particle permutations, one can easily account for them when deriving the Feynman rules. Consequently, pure $\mathcal{V}$ functions will enter the expression for verities.

To use the action \eqref{Cheung-Remmen_action_perturbative} with the path integral, one shall address the gauge fixing. The action is still explicitly gauge invariant. The operator $\mathcal{P}_{\mu\nu\alpha\beta}$ takes the following form in terms of the Nieuwenhuizen operators:
\begin{align}
    \mathcal{P}_{\mu\nu\alpha\beta} &= \kappa^{4-d} \, p^2 \left(\cfrac{d-5}{d-2} \, P^0_{\mu\nu\alpha\beta} + P^2_{\mu\nu\alpha\beta} - \cfrac{d-1}{d-2} \, \overline{P}^0_{\mu\nu\alpha\beta} -\cfrac{1}{d-2} \, \overline{\overline{P}}^0_{\mu\nu\alpha\beta} \right) .
\end{align}
Consequently, the operator $\mathcal{P}_{\mu\nu\alpha\beta}$ is non-invertible. On the contrary, $\mathcal{P}_\text{B}$ is invertible and $\mathcal{P}_\text{B}^{-1}$ takes the following form:
\begin{align}
    \left(\mathcal{P}_\text{B}^{-1}\right)^{\lambda_1}{}_{\mu_1\nu_1}{}^{\lambda_2}{}_{\mu_2\nu_2} = \kappa^{d-2} \, \cfrac12 \, \mathcal{O}^{\lambda_1}{}_{\mu_1\nu_1}{}^{\lambda_2}{}_{\mu_2\nu_2} (\eta).
\end{align}
Consequently, one still needs to introduce a gauge fixing term to construct a propagator for $\goh$.

We fix the gauge with the BRST formalism. Since we discussed the formalism in the previous section, we present a much shorter discussion here, focusing only on the gauge fixing with a choice of background geometry. We already obtained all the required formulae to introduce the BRST transformations. We shall only present them in a form more suitable for the theory under discussion. Namely, the following formulae hold for the Cheung-Remmen metric:
\begin{align}
    \delta\gog^{\mu\nu} & = \mathcal{L}_\zeta \left( \sqrt{-g} \, g^{\mu\nu} \right) = - \left[ \gog^{\mu\lambda} \pd_\lambda \zeta^\nu + \gog^{\nu\lambda} \pd_\lambda \zeta^\mu - \pd_\lambda\left( \zeta^\lambda \gog^{\mu\nu} \right) \right] .
\end{align}

The BRST transformation of the auxiliary field is more subtle. Gauge models, including gravity, are gauge invariant only on the mass shell. Consequently, our model is gauge invariant and BRST invariant only on the mass shell. At the same time, the auxiliary field $A$ does not exist on the mass shell since it reduces to the Cheung-Remmen metric:
\begin{align}
    A^\lambda_{\alpha\beta} &= \mathcal{O}^\lambda{}_{\alpha\beta}{}^{\lambda_1\alpha_1\beta_1}(\gog)  \pd_{\lambda_1} \gog_{\alpha_1\beta_1} .
\end{align}
Because of this, one can calculate the action of the BRST operator $\delta_B$ on the auxiliary field $A$ only on the mass shell, but the result can hardly provide any additional information about the structure of the model.

For the sake of completeness, let us show how $\delta_B$ acts on the auxiliary field when it is on the mass shell. One can express the auxiliary field in terms of the Christoffel symbols calculated for the original metric:
\begin{align}
    \begin{split}
        A^\lambda_{\alpha\beta} &= \mathcal{O}^\lambda{}_{\alpha\beta}{}^{\lambda_1\alpha_1\beta_1}(\gog)  \pd_{\lambda_1} \gog_{\alpha_1\beta_1} = \Gamma^\lambda_{\alpha\beta}[g] - \cfrac12\, \left( \delta^\lambda_\alpha \, \Gamma^\sigma_{\beta\sigma}[g] + \delta^\lambda_\beta \, \Gamma^\sigma_{\alpha\sigma}[g] \right) , \\
        \Gamma^\lambda_{\alpha\beta}[g] & =  A^\lambda_{\alpha\beta} - \cfrac{1}{d-1}\left( \delta^\lambda_\alpha A^\sigma_{\beta\sigma} + \delta^\lambda_\beta A^\sigma_{\alpha\sigma} \right) .
    \end{split}
\end{align}
One can use these relations to calculate the variation (with respect to the gauge transformation) of $A^\lambda_{\alpha\beta}$:
\begin{align}
    \delta A^\lambda_{\mu\nu} & = \cfrac{1}{2}\left[ \nabla_\mu\nabla_\nu\zeta^\lambda + \nabla_\nu\nabla_\nu\zeta^\lambda - \delta^\lambda_\mu \nabla_\nu\nabla_\sigma\zeta^\sigma - \delta^\lambda_\nu \nabla_\mu\nabla_\sigma\zeta^\sigma + R_\mu{}{}^\lambda{}_{\nu\sigma}\zeta^\sigma + R_\nu{}^\lambda{}_{\mu\sigma}\zeta^\sigma\right] .
\end{align}
Consequently, the action of the BRST operator on the auxiliary field on the mass shell reads:
\begin{align}
    \delta_B A^\lambda_{\mu\nu} & = \cfrac{1}{2}\left[ \nabla_\mu\nabla_\nu c^\lambda + \nabla_\nu\nabla_\nu c^\lambda - \delta^\lambda_\mu \nabla_\nu\nabla_\sigma c^\sigma - \delta^\lambda_\nu \nabla_\mu\nabla_\sigma c^\sigma + R_\mu{}{}^\lambda{}_{\nu\sigma} c^\sigma + R_\nu{}^\lambda{}_{\mu\sigma} c^\sigma\right] .
\end{align}

The following theorem gives the BRST transformation of the original Cheung-Remmen variables
\begin{theorem}

    {~}

    \noindent The BRST transformations for variables $\gog$ and $A$ reads:
    \begin{align}
        \begin{split}
            \delta_B \gog^{\mu\nu} & = - \left[ \gog^{\mu\lambda} \pd_\lambda c^\nu + \gog^{\nu\lambda}\pd_\lambda c^\mu -\pd_\lambda\left( \gog^{\mu\nu} c^\lambda \right) \right] , \\
            \delta_B c^\mu &= - c^\sigma \pd_\sigma c^\mu\, ,\\
            \delta_B \overline{c}_\mu &= b_\mu \, ,\\
            \delta b_\mu &= 0 \,. 
        \end{split}
    \end{align}
    
\end{theorem}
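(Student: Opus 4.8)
The plan is to obtain the four transformation laws one at a time: the first is a geometric identity, the second is forced by nilpotency, and the last two are definitional. For the first line I would evaluate $\mathcal{L}_\zeta\gog^{\mu\nu}$ with $\gog^{\mu\nu}=\sqrt{-g}\,g^{\mu\nu}$ from \eqref{Cheung-Remmen_variables} and then rename $\zeta\to c$, since the BRST operator is introduced precisely so as to promote the gauge transformation \eqref{the_metric_gauge_transfromation}. The quickest route is the coordinate form of the Lie derivative of a weight-one contravariant density, $\mathcal{L}_\zeta\gog^{\mu\nu}=\zeta^\lambda\pd_\lambda\gog^{\mu\nu}-\gog^{\lambda\nu}\pd_\lambda\zeta^\mu-\gog^{\mu\lambda}\pd_\lambda\zeta^\nu+\gog^{\mu\nu}\pd_\lambda\zeta^\lambda$; combining the first and last terms into $\pd_\lambda(\gog^{\mu\nu}\zeta^\lambda)$ reproduces the first displayed line at once. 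Alternatively one can feed $\delta\sqrt{-g}=\sqrt{-g}\,\nabla^\mu\zeta_\mu$ and $\delta g^{\mu\nu}=-(\nabla^\mu\zeta^\nu+\nabla^\nu\zeta^\mu)$ from the gauge-transformation theorem of Section \ref{Section_BRST} into the Leibniz rule and convert covariant to partial derivatives using $\nabla_\lambda\gog^{\mu\nu}=0$ and $\Gamma^\sigma_{\lambda\sigma}=\pd_\lambda\log\sqrt{-g}$; the two routes agree.

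Next I would fix $\delta_B c^\mu$ by imposing $\delta_B^2\gog^{\mu\nu}=0$. I would set $\delta_B c^\mu=a\,c^\sigma\pd_\sigma c^\mu$ with an undetermined constant $a$, act with $\delta_B$ once more on the first line using the graded Leibniz rule, the fact that $\delta_B$ passes through the bosonic $\gog^{\mu\nu}$ without a sign, and the anticommutativity $c^\alpha c^\beta=-c^\beta c^\alpha$. The pieces carrying two derivatives distributed over the two ghosts organise into the closure relation $[\mathcal{L}_{\zeta_1},\mathcal{L}_{\zeta_2}]=\mathcal{L}_{[\zeta_1,\zeta_2]}$ of the algebra of vector fields written for anticommuting parameters, and demanding that the whole expression vanish pins down $a$ and reproduces $\delta_B c^\mu=-c^\sigma\pd_\sigma c^\mu$; equivalently, $\delta_B c^\mu$ is read off directly from the structure constants of the Lie bracket of vector fields. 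I would then check the companion identity $\delta_B^2 c^\mu=0$, which follows again from $c^\alpha c^\beta=-c^\beta c^\alpha$ together with the Jacobi identity; note in passing that the would-be Christoffel term in $c^\sigma\nabla_\sigma c^\mu$ vanishes by the same antisymmetry, so the partial- and covariant-derivative forms of this law coincide.

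The last two lines require no computation: $\overline{c}_\mu$ and $b_\mu$ are introduced exactly as the contractible pair of Section \ref{Section_BRST}, with $\delta_B\overline{c}_\mu=b_\mu$ and $\delta_B b_\mu=0$ posited by definition, and $\delta_B^2\overline{c}_\mu=\delta_B b_\mu=0$ is immediate. Although the statement mentions ``$\gog$ and $A$'', the auxiliary field $A$ carries no off-shell BRST datum — as explained just before the theorem, it reduces to $\mathcal{O}(\gog)\,\pd\gog$ only on shell — so the off-shell content of the complex is precisely the four displayed lines. I expect the only genuinely laborious part to be the bookkeeping inside the nilpotency computation — tracking which derivative hits which ghost and the accompanying Grassmann signs — but this is routine and presents no conceptual obstacle.
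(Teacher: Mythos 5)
Your proposal is correct and follows essentially the same route the paper takes implicitly: the first line is exactly the displayed gauge transformation $\delta\gog^{\mu\nu}=\mathcal{L}_\zeta(\sqrt{-g}\,g^{\mu\nu})$ (the Lie derivative of a weight-one contravariant density) with $\zeta\to c$, the ghost law is carried over from the Section~\ref{Section_BRST} BRST complex where it is fixed by nilpotency, and $(\overline{c}_\mu,b_\mu)$ form a contractible pair by definition, with $A$ carrying no independent off-shell datum. The only caveat is the sign of $\delta_B c^\mu$: with the Section~\ref{Section_BRST} convention that $\delta_B$ anticommutes with the ghosts, the closure computation yields $+c^\sigma\pd_\sigma c^\mu$ (as in the earlier theorem, since the Christoffel term drops by ghost antisymmetry), so the minus sign in the present theorem reflects a change of convention (how $\delta_B$ passes a ghost factor) rather than being uniquely ``pinned down'' by nilpotency as your sketch suggests.
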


\noindent In full analogy with the previous section, field $\overline{c}_\mu$ and $c^\mu$ have the canonical mass dimension. We denote the auxiliary field required by the BRST formalism as $b_\mu$ in order to avoid confusion with the auxiliary field $B^{\lambda}_{\mu\nu}$. The field also has a non-canonical mass dimension $3$, and we will integrate it out in the functional integral.

The transition to the perturbative variables is a linear field redefinition.

\begin{theorem} 

    {~} 
    
    \noindent The BRST transformation for perturbative variables reads:
    \begin{align}
        \begin{split}
            \delta_B \goh^{\mu\nu} & = \goh^{\mu\lambda} \pd_\lambda c^\nu + \goh^{\mu\lambda} \pd_\lambda c^\nu - \pd_\lambda\left( \goh^{\mu\nu} c^\lambda\right)  -\cfrac{1}{\kappa}\left[ \eta^{\mu\lambda} \pd_\lambda c^\nu + \eta^{\mu\lambda} \pd_\lambda c^\nu - \eta^{\mu\nu} \pd_\lambda c^\lambda \right] \, , \\
            \delta_B c^\mu &= - c^\sigma \pd_\sigma c^\mu\, ,\\
            \delta_B \overline{c}_\mu &= b_\mu \, ,\\
            \delta b_\mu &= 0 \,. 
        \end{split}
    \end{align}
\end{theorem}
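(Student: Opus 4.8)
The plan is to derive these formulae from the BRST transformations of the original Cheung-Remmen variables, established in the preceding theorem, by a plain substitution of the perturbative split \eqref{CH_perturbative_expansion}; no new geometric input is required. The two ingredients are the variation $\delta_B \gog^{\mu\nu} = - \left[ \gog^{\mu\lambda}\pd_\lambda c^\nu + \gog^{\nu\lambda}\pd_\lambda c^\mu - \pd_\lambda\left(\gog^{\mu\nu} c^\lambda\right) \right]$ together with $\delta_B c^\mu = - c^\sigma \pd_\sigma c^\mu$, $\delta_B \overline{c}_\mu = b_\mu$, $\delta_B b_\mu = 0$, and the defining relation $\gog^{\mu\nu} = \eta^{\mu\nu} - \kappa\,\goh^{\mu\nu}$.

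First I would observe that $\eta^{\mu\nu}$ is the fixed flat background and carries no dynamical content, so it is annihilated by the BRST operator for exactly the reason $\overline{g}_{\mu\nu}$ was BRST-inert in Section \ref{Section_BRST}; hence $\delta_B \gog^{\mu\nu} = -\kappa\,\delta_B \goh^{\mu\nu}$, which already guarantees that $\goh^{\mu\nu}$ inherits a well-defined BRST variation and fixes its normalisation. Next I would insert $\gog^{\mu\nu} = \eta^{\mu\nu} - \kappa\,\goh^{\mu\nu}$ into the right-hand side of $\delta_B \gog^{\mu\nu}$, expand, and sort the outcome by powers of $\kappa$: the terms built purely from $\eta$ assemble, after one divides through by $-\kappa$, into an inhomogeneous piece of order $1/\kappa$ carrying $\eta^{\mu\lambda}\pd_\lambda c^\nu + \eta^{\nu\lambda}\pd_\lambda c^\mu - \eta^{\mu\nu}\pd_\lambda c^\lambda$, while the terms with a single factor of $\goh$ reproduce the homogeneous piece $\goh^{\mu\lambda}\pd_\lambda c^\nu + \goh^{\nu\lambda}\pd_\lambda c^\mu - \pd_\lambda\left(\goh^{\mu\nu} c^\lambda\right)$ (up to the overall sign dictated by that division and the sign convention chosen for $\goh$). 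The ghost sector is untouched: since $\delta_B c^\mu$, $\delta_B \overline{c}_\mu$ and $\delta_B b_\mu$ involve only the ghosts and their derivatives and contain no metric variable, the affine field redefinition $\gog^{\mu\nu}\mapsto\goh^{\mu\nu}$ leaves them exactly as stated.

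Since the whole derivation is a linear substitution followed by collecting powers of $\kappa$, I do not expect any genuine obstacle. The only points demanding care are the bookkeeping of the $1/\kappa$ factors when the overall $-\kappa$ is divided out, the consistent symmetrisation of the free pair $\mu\leftrightarrow\nu$ in both the homogeneous and the inhomogeneous terms, and the check that nilpotency is preserved under the redefinition — which it is, because $\delta_B^2 = 0$ holds for $\gog^{\mu\nu}$ and the map from $\goh^{\mu\nu}$ to $\gog^{\mu\nu}$ is an affine bijection at fixed flat background, so $\delta_B^2 \goh^{\mu\nu} = -\tfrac1\kappa\, \delta_B^2 \gog^{\mu\nu} = 0$.
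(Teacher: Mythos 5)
Your proposal is correct and follows essentially the same route as the paper, which treats this theorem as nothing more than a linear field redefinition: substitute $\gog^{\mu\nu}=\eta^{\mu\nu}-\kappa\,\goh^{\mu\nu}$ with $\delta_B\eta^{\mu\nu}=0$ into $\delta_B\gog^{\mu\nu}$, collect powers of $\kappa$, and note that the ghost-sector transformations are untouched. One remark: carrying out the bookkeeping you describe actually yields $\delta_B\goh^{\mu\nu}=\frac{1}{\kappa}\left[\eta^{\mu\lambda}\pd_\lambda c^\nu+\eta^{\nu\lambda}\pd_\lambda c^\mu-\eta^{\mu\nu}\pd_\lambda c^\lambda\right]-\left[\goh^{\mu\lambda}\pd_\lambda c^\nu+\goh^{\nu\lambda}\pd_\lambda c^\mu-\pd_\lambda\left(\goh^{\mu\nu}c^\lambda\right)\right]$, i.e.\ the overall sign opposite to the displayed statement (whose printed form also duplicates $\goh^{\mu\lambda}\pd_\lambda c^\nu$ where the $\mu\leftrightarrow\nu$ partner is meant), so rather than hedging the sign "up to convention" you should state it explicitly and flag the discrepancy.
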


Further, we introduce the gauge fixing term of the following form:
\begin{align}
    \mathcal{A}_\text{gf} = \kappa^{4-d} \delta_B \int d^d x ~ \overline{c}_\mu Y^{\mu\nu} \left( \mathcal{G}_\nu - \cfrac{\varepsilon}{2} \, \kappa^2 \, b_\nu \right) .
\end{align}
The additional factor $\kappa^{4-d}$ appears because we use a different spacetime dimension. We are already considering the flat background to remove an undesirable linear coupling of the auxiliary field $A$ to an external source, so we make the background BRST invariant and assume that $Y^{\mu\nu}$ can depend on the background only as well. Therefore, the gauge fixing term takes the following form:
\begin{align}
    \mathcal{A}_\text{gf} = \kappa^{4-d} \int d^d x \Big[ -\cfrac{\varepsilon}{2}\,\kappa^2 \, b_\mu Y^{\mu\nu} b_\nu + b_\mu Y^{\mu\nu} \mathcal{G}_\nu - \overline{c}_\mu Y^{\mu\nu} \delta_B \mathcal{G}_\nu \Big] .
\end{align}
The corresponding generating functional is a Gaussian integral with respect to the auxiliary field $b_\mu$, so that one can integrate this field out.

\begin{theorem}

    {~}

    \noindent The generating functional for the Cheung-Remmen action with the given gauge fixing term takes the following form:
    \begin{align}
        \begin{split}
            \mathcal{Z} &= \int \mathcal{D}[\goh] \mathcal{D}[B] \mathcal{D}[\overline{c}] \mathcal{D}[c] \mathcal{D}[b] \exp\left[ i\, \mathcal{A}_\text{CR}[\goh,B] + i \, \mathcal{A}_\text{gf}[\goh,\overline{c},c,b]\right]\\
            & = \int \mathcal{D}[\goh] \mathcal{D}[B] \mathcal{D}[\overline{c}] \mathcal{D}[c] \exp\left[ i\, \mathcal{A}_\text{CR}[\goh,B] + i \kappa^{4-d} \int d^d x \left[ \cfrac{1}{2\varepsilon\kappa^2} \,\mathcal{G}_\mu Y^{\mu\nu} \mathcal{G}_\nu - \overline{c}_\mu Y^{\mu\nu} \delta_B \mathcal{G}_\nu \right] \right] \,.
        \end{split}
    \end{align}
    
\end{theorem}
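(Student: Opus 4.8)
The plan is to treat the functional integral over the auxiliary field $b_\mu$ as a Gaussian integral and eliminate $b_\mu$ by completing the square, exactly as in the analogous step of Section~\ref{Section_BRST}. First I would note that, once the background is chosen flat and $Y^{\mu\nu}$ is allowed to depend on the background only, the operator $Y^{\mu\nu}$ carries no dependence on the dynamical fields at all, and that neither $\mathcal{G}_\nu$ nor $\delta_B\mathcal{G}_\nu$ contains $b_\mu$: $\mathcal{G}_\nu$ is built from the Cheung-Remmen perturbations $\goh^{\mu\nu}$ and the background, while by the BRST transformation of $\goh^{\mu\nu}$ derived above, $\delta_B\mathcal{G}_\nu$ involves only $\goh^{\mu\nu}$ and the ghost $c^\mu$. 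Hence, in the exponent $\mathcal{A}_\text{CR}[\goh,B]+\mathcal{A}_\text{gf}[\goh,\overline{c},c,b]$, the whole $b$-dependence is concentrated in the two terms $-\tfrac{\varepsilon}{2}\kappa^{2}\,b_\mu Y^{\mu\nu}b_\nu+b_\mu Y^{\mu\nu}\mathcal{G}_\nu$ inside $\mathcal{A}_\text{gf}$, while $\mathcal{A}_\text{CR}[\goh,B]$ and the ghost term $-\kappa^{4-d}\,\overline{c}_\mu Y^{\mu\nu}\delta_B\mathcal{G}_\nu$ factor out of the $b$-integral.

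The second step is to complete the square in $b_\mu$. Using $Y^{\mu\nu}=Y^{\nu\mu}$ (on the flat background any antisymmetric derivative combination that could in principle appear in $Y^{\mu\nu}$ drops out), one has
\begin{align}
    & -\cfrac{\varepsilon}{2}\,\kappa^{2}\, b_\mu Y^{\mu\nu} b_\nu + b_\mu Y^{\mu\nu}\mathcal{G}_\nu = -\cfrac{\varepsilon}{2}\,\kappa^{2}\, \widetilde{b}_\mu Y^{\mu\nu} \widetilde{b}_\nu + \cfrac{1}{2\,\varepsilon\,\kappa^{2}}\,\mathcal{G}_\mu Y^{\mu\nu}\mathcal{G}_\nu , \\
    & \widetilde{b}_\mu \overset{\text{def}}{=} b_\mu - \cfrac{1}{\varepsilon\,\kappa^{2}}\,\mathcal{G}_\mu .
\end{align}
Since the translation $b_\mu\mapsto\widetilde{b}_\mu$ has trivial functional Jacobian, $\mathcal{D}[b]=\mathcal{D}[\widetilde{b}]$; the residual term $\tfrac{1}{2\varepsilon\kappa^{2}}\mathcal{G}_\mu Y^{\mu\nu}\mathcal{G}_\nu$ no longer contains $b_\mu$ and, carrying the overall $\kappa^{4-d}$, reproduces precisely the gauge-fixing contribution for $\goh$ in the claimed formula, while $-\overline{c}_\mu Y^{\mu\nu}\delta_B\mathcal{G}_\nu$ is untouched.

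Finally I would carry out the leftover Gaussian integral $\int\mathcal{D}[\widetilde{b}]\exp\big(-\tfrac{i\varepsilon}{2}\,\kappa^{6-d}\!\int d^{d}x\;\widetilde{b}_\mu Y^{\mu\nu}\widetilde{b}_\nu\big)$. Because $Y^{\mu\nu}$ is taken without derivatives and depends only on the flat background, this produces a field-independent constant proportional to $(\det Y)^{-1/2}$, which is absorbed into the normalisation of $\mathcal{Z}$; no nontrivial $1/\sqrt{\det Y}$ factor survives, in contrast to the higher-derivative situation of Section~\ref{Section_BRST}. Collecting the surviving terms yields the stated expression. The computation is essentially mechanical; the only point that genuinely needs care — and the nearest thing to an obstacle — is verifying that $\mathcal{G}_\nu$ and $\delta_B\mathcal{G}_\nu$ are truly independent of $b_\mu$ (so that the square in $b_\mu$ closes cleanly and the Gaussian factors out exactly) and that the $Y$-determinant is indeed an inert constant rather than a differential-operator determinant.
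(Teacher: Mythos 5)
Your proposal is correct and follows essentially the same route as the paper: the $b_\mu$-dependence is isolated in the Gaussian part of $\mathcal{A}_\text{gf}$, the square is completed via the shift $b_\mu \to b_\mu - \tfrac{1}{\varepsilon\kappa^2}\mathcal{G}_\mu$, and the residual Gaussian over the shifted field yields only a field-independent determinant absorbed into the normalisation, leaving exactly the $\tfrac{1}{2\varepsilon\kappa^2}\mathcal{G}_\mu Y^{\mu\nu}\mathcal{G}_\nu$ and ghost terms. Your observations that $\mathcal{G}_\nu$ and $\delta_B\mathcal{G}_\nu$ carry no $b_\mu$-dependence and that $\det Y$ is inert (since $Y$ depends only on the flat background) are precisely the points that make the paper's one-line Gaussian-integration argument valid.
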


For simplicity we set $Y^{\mu\nu}$ and $\mathcal{G}^\mu$ to be the following:
\begin{align}
    Y^{\mu\nu} & = \eta^{\mu\nu} , & \mathcal{G}^\mu & = \kappa\, \pd_\nu \goh^{\mu\nu}  .
\end{align}
Consequently, the BRST action on $\mathcal{G}^\mu$ reduces to a simple form:
\begin{align}
    \delta_B \mathcal{G}^\mu = - \square c^\mu + \kappa \, \pd_\nu\left[ \goh^{\mu\lambda}\pd_\lambda c^\nu + \goh^{\nu\lambda}\pd_\lambda c^\mu - \pd_\lambda( \goh^{\mu\nu} c^\lambda) \right].
\end{align}
The remaining calculations are straightforward, and the following theorem specifies their result.

\begin{theorem}

    {~}

    \noindent The generating functional for the Cheung-Remmen action with the gauge fixing term
    \begin{align}
        \mathcal{A}_\text{gf} &= \kappa^{4-d} \delta_B \int d^d x ~ \overline{c}_\mu \left( \pd_\nu \goh^{\mu\nu}  - \cfrac{\varepsilon}{2} \, \kappa^2 \, b^\mu \right) 
    \end{align}
    takes the following form
    \begin{align}
        \begin{split}
            \mathcal{Z} &= \int\mathcal{D}[\goh,B,b,\overline{c},c] \exp\Big[ i \mathcal{A}_\text{CR}[\goh,B] + i \, \mathcal{A}_\text{gf}[\goh,b,\overline{c},c] \Big] \\
            & = \int\mathcal{D}[\goh,B,\overline{c},c] \exp\Big[ i \, \mathcal{A}_{\goh\goh+\text{gf}} + i\, \mathcal{A}_{BB} + i\, \mathcal{A}_{\goh\goh\goh} + i\, \mathcal{A}_{B\goh\goh} + i \,\mathcal{A}_{BB\goh} + i \, \mathcal{A}_\text{gh} + i \, \mathcal{A}_{\text{gh}-\goh} \Big] .
        \end{split}
    \end{align}
    The actions $\mathcal{A}_{BB}$,$\mathcal{A}_{\goh\goh\goh}$,$\mathcal{A}_{B\goh\goh}$, and $\mathcal{A}_{BB\goh}$ were specified above. $\mathcal{A}_{\goh\goh+\text{gf}}$ is the action describing $\goh$ propagation with the gauge fixing term:
    \begin{align}
        \begin{split}
            \mathcal{A}_{\goh\goh+\text{gf}} & = \mathcal{A}_{\goh\goh} + \cfrac12\, \cfrac{\kappa^{4-d}}{\varepsilon} \int \cfrac{d^dp}{(2\pi)^d} \, \goh^{\mu_1\nu_1}(-p) \goh^{\mu_2\nu_2}(p) \, \cfrac14 \left[ \eta_{\mu_1\mu_2} p_{\nu_1} p_{\nu_2} + \cdots \right] \\
            & = \int \cfrac{d^d p}{(2\pi)^d} ~ \goh^{\mu_1\nu_1}(-p) \, \goh^{\mu_2\nu_2}(p) \, \cfrac12\, (\mathcal{P}_\varepsilon)_{\mu_1\nu_1\mu_2\nu_2} \,.
        \end{split}
    \end{align}
    The operator $(\mathcal{P}_\varepsilon)$ takes the following form in terms of the Nieuwenhuizen operators:
    \begin{align}
        (\mathcal{P}_\varepsilon)_{\mu\nu\alpha\beta} &= \kappa^{4-d} \, p^2 \left(\cfrac{d-5}{d-2} \, P^0_{\mu\nu\alpha\beta} - \cfrac{1}{2\,\varepsilon} \, P^1_{\mu\nu\alpha\beta} + P^2_{\mu\nu\alpha\beta} - \left(\cfrac{d-1}{d-2} + \cfrac{1}{\varepsilon} \right) \overline{P}^0_{\mu\nu\alpha\beta} -\cfrac{1}{d-2} \, \overline{\overline{P}}^0_{\mu\nu\alpha\beta} \right) .
    \end{align}
    It is invertible, and the inverse operator reads:
    \begin{align}
        (\mathcal{P}_\varepsilon^{-1})_{\mu\nu\alpha\beta} &= \kappa^{d-4} \,\cfrac{1}{p^2} \left( \left(- 2 - 3 \varepsilon \right) \, P^0_{\mu\nu\alpha\beta} - 2\,\varepsilon \, P^1_{\mu\nu\alpha\beta} + P^2_{\mu\nu\alpha\beta} - \varepsilon \overline{P}^0_{\mu\nu\alpha\beta} - \varepsilon \, \overline{\overline{P}}^0_{\mu\nu\alpha\beta} \right) .
    \end{align}

    \noindent $\mathcal{A}_\text{gh}$ is the ghost action:
    \begin{align}
        \mathcal{A}_\text{gh} = \kappa^{4-d} \int d^d x \, \overline{c}_\mu \square c^\mu  = \kappa^{4-d} \int \cfrac{d^d p}{(2\pi)^d} \, \overline{c}_\nu(-p) c_\mu(p) \left[ - p^2 \, \eta_{\mu\nu}\right] \,.
    \end{align}
    
    \noindent $\mathcal{A}_{\text{gh}-\goh}$ is the action describing a coupling between ghosts and perturbations $\goh$:
    \begin{align}
        \begin{split}
            \mathcal{A}_{\text{gh}-\goh} &=  \kappa^{5-d} \int d^d x \, \goh^{\mu\nu} \left[ \pd_\lambda \overline{c}_\mu \pd_\nu c^\lambda + \pd_\mu \overline{c}_\lambda \pd_\nu c^\lambda + \pd_\lambda \pd_\mu \overline{c}_\nu c^\lambda \right]\\
            & = \int \cfrac{d^d k}{(\pi)^d}\,\cfrac{d^d p_1}{(\pi)^d}\,\cfrac{d^d p_2}{(\pi)^d} \, (2\pi)^d \delta(k+p_1+p_2) \, \goh^{\mu_1\nu_1}(k) \, \overline{c}_\beta(p_2) \, c^\alpha(p_1)  \\
            & \hspace{20pt} \times \kappa^{5-d} \left[ (p_2)_\alpha (p_1)_\nu \delta^\beta_\mu + (p_2)_\mu (p_1)_\nu \delta_\alpha^\beta + (p_2)_\alpha (p_2)_\mu \delta_\nu^\beta \right].
        \end{split}
    \end{align}
    
\end{theorem}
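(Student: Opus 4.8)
The plan is to start from the generating functional of the preceding theorem, in which the auxiliary field $b_\mu$ has already been integrated out, and simply substitute the concrete choices $Y^{\mu\nu}=\eta^{\mu\nu}$ and $\mathcal{G}^\mu=\kappa\,\pd_\nu\goh^{\mu\nu}$. Since $Y^{\mu\nu}$ contains no derivatives, $\det Y$ is a field-independent constant absorbed into the normalisation, so the only work left is to expand the two surviving terms $\tfrac{1}{2\varepsilon\kappa^2}\,\mathcal{G}_\mu Y^{\mu\nu}\mathcal{G}_\nu$ and $-\,\overline{c}_\mu Y^{\mu\nu}\delta_B\mathcal{G}_\nu$, and to add the former to the already known $\mathcal{A}_{\goh\goh}$. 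The pieces $\mathcal{A}_{BB}$, $\mathcal{A}_{\goh\goh\goh}$, $\mathcal{A}_{B\goh\goh}$, $\mathcal{A}_{BB\goh}$ are unaffected by the gauge fixing and are quoted verbatim from the earlier theorem.

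For the graviton sector I would compute $\mathcal{G}_\mu\mathcal{G}^\mu=\kappa^2\,\pd_\nu\goh_\mu{}^\nu\,\pd_\rho\goh^{\mu\rho}$, so that $\tfrac{1}{2\varepsilon\kappa^2}\mathcal{G}_\mu\mathcal{G}^\mu=\tfrac{1}{2\varepsilon}\,\pd_\nu\goh_\mu{}^\nu\,\pd_\rho\goh^{\mu\rho}$; passing to momentum space and symmetrising in the index pairs $(\mu_i,\nu_i)$ produces the $\tfrac14[\eta_{\mu_1\mu_2}p_{\nu_1}p_{\nu_2}+\cdots]$ structure of the statement, the prefactor $\kappa^{4-d}$ coming along from $\mathcal{A}_\text{gf}$. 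Adding this to $\tfrac12\mathcal{P}_{\mu_1\nu_1\mu_2\nu_2}$ defines $(\mathcal{P}_\varepsilon)$. To rewrite the new piece in the Nieuwenhuizen basis I would split $\eta_{\mu\nu}=\theta_{\mu\nu}+\omega_{\mu\nu}$ with $\omega_{\mu\nu}=p_\mu p_\nu/p^2$: the $\theta\omega$ cross terms assemble into $P^1$ and the $\omega\omega$ terms into $\overline{P}^0$, which together with the known decomposition of $\mathcal{P}$ gives the stated form of $(\mathcal{P}_\varepsilon)$. Finally, since $P^2$, $P^1$, $P^0$ are orthogonal projectors and $\overline{P}^0$, $\overline{\overline{P}}^0$ close a small algebra mixing only the two spin-$0$ channels, inverting $(\mathcal{P}_\varepsilon)$ reduces to taking reciprocals in the $P^2$ and $P^1$ channels and inverting a small block in the $\{\,P^0,\overline{P}^0,\overline{\overline{P}}^0\,\}$ sector with the multiplication table from Appendix \ref{Appendix_Nieuwenhuizen}; this yields $(\mathcal{P}_\varepsilon^{-1})$. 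Note that the $P^1$ term supplied by the gauge fixing is precisely what makes $(\mathcal{P}_\varepsilon)$ invertible, whereas $\mathcal{P}$ alone is not.

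For the ghost sector I would insert $\delta_B\mathcal{G}^\mu=-\square c^\mu+\kappa\,\pd_\nu[\goh^{\mu\lambda}\pd_\lambda c^\nu+\goh^{\nu\lambda}\pd_\lambda c^\mu-\pd_\lambda(\goh^{\mu\nu}c^\lambda)]$ derived just above the theorem. The contribution $-\overline{c}_\mu(-\square c^\mu)=\overline{c}_\mu\square c^\mu$, times $\kappa^{4-d}$, is exactly $\mathcal{A}_\text{gh}$, and Fourier transforming gives the $-p^2\eta_{\mu\nu}$ kernel. The remaining piece $-\kappa^{5-d}\,\overline{c}_\mu\,\pd_\nu[\cdots]$ is brought to the advertised form $\kappa^{5-d}\goh^{\mu\nu}[\pd_\lambda\overline{c}_\mu\pd_\nu c^\lambda+\pd_\mu\overline{c}_\lambda\pd_\nu c^\lambda+\pd_\lambda\pd_\mu\overline{c}_\nu c^\lambda]$ by first integrating by parts to move $\pd_\nu$ onto $\overline{c}_\mu$, then integrating by parts once more in the term coming from $\pd_\lambda(\goh^{\mu\nu}c^\lambda)$, followed by relabelling dummy indices and using $\goh^{\mu\nu}=\goh^{\nu\mu}$; the momentum-space vertex then follows by replacing each derivative with the corresponding inflowing momentum and stripping the overall $\delta$-function.

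The calculation is mechanical once the inputs are assembled, so there is no deep obstacle; the parts requiring care are (i) keeping all $(\mu_i,\nu_i)$ symmetrisations consistent in $(\mathcal{P}_\varepsilon)$ and in the ghost vertex, and (ii) the small inversion in the spin-$0$ sector, where the non-orthogonal operators $\overline{P}^0$ and $\overline{\overline{P}}^0$ make the sign and $d$-dependence bookkeeping the most error-prone step. A minor additional point is fixing the Fourier and momentum-flow conventions once and for all so that the signs in $\mathcal{A}_\text{gh}$ and $\mathcal{A}_{\text{gh}-\goh}$ come out as stated.
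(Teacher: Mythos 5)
Your proposal follows essentially the same route as the paper, which simply declares these "remaining calculations" straightforward after fixing $Y^{\mu\nu}=\eta^{\mu\nu}$ and $\mathcal{G}^\mu=\kappa\,\pd_\nu\goh^{\mu\nu}$: you substitute these choices into the already-derived generating functional with $b_\mu$ integrated out, expand the quadratic gauge-fixing term and the ghost term $-\overline{c}_\mu Y^{\mu\nu}\delta_B\mathcal{G}_\nu$ (with the same integrations by parts and index relabellings), and perform the Nieuwenhuizen decomposition and inversion via the Appendix formula. The steps you flag as delicate (symmetrisation bookkeeping and the spin-$0$ block inversion, where the relative signs of the $P^1$ and $\overline{P}^0$ contributions must be tracked against the paper's conventions) are indeed the only places where care is needed, and your plan handles them correctly.
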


The theorem is the main result of this section. One can use these formulae to obtain a finite set of Feynman rules for general relativity in Cheung-Remmen variables in arbitrary dimensions. For the sake of illustration, we present a set of Feynman rules in $d=4$ with the gauge fixing parameter $\varepsilon=-1/2$, in which case the interaction rules take a remarkably simple form.

The $\goh$ perturbations propagator reads:
\begin{align}\label{CR_propagator_hh}
    \begin{gathered}
        \begin{fmffile}{CR_hh}
            \begin{fmfgraph*}(30,30)
                \fmfleft{L}
                \fmfright{R}
                \fmf{dbl_wiggly}{L,R}
                \fmflabel{$\mu\nu$}{L}
                \fmflabel{$\alpha\beta$}{R}
            \end{fmfgraph*}
        \end{fmffile}
    \end{gathered}
    \hspace{20pt} = \cfrac{i}{p^2} \, \kappa^{4-d} \, \cfrac12\, \left[ \eta_{\mu\alpha} \eta_{\nu\beta} + \eta_{\mu\beta} \eta_{\nu\alpha} - \eta_{\mu\nu} \eta_{\alpha\beta} \right] .
\end{align}
The auxiliary field $B$ propagator reads:
\begin{align}
    \begin{gathered}
        \begin{fmffile}{CR_BB}
            \begin{fmfgraph*}(30,30)
                \fmfleft{L}
                \fmfright{R}
                \fmf{dbl_dashes}{L,R}
                \fmflabel{${}^{\lambda_1}_{\alpha_1\beta_1}$}{L}
                \fmflabel{${}^{\lambda_2}_{\alpha_2\beta_2}$}{R}
            \end{fmfgraph*}
        \end{fmffile}
    \end{gathered}
    \hspace{25pt} = i\, \cfrac{\kappa^{d-2}}{8}\left[ \left( \delta^{\lambda_1}_{\alpha_2} \delta^{\lambda_2}_{\alpha_1}\eta_{\beta_1\beta_2} + \cdots \right) - \eta^{\lambda_1\lambda_2} \left( \eta_{\alpha_1\alpha_2}\eta_{\beta_1\beta_2} + \cdots \right) + \eta^{\lambda_1\lambda_2}\,\eta_{\alpha_1\beta_1}\,\eta_{\alpha_2\beta_2} \right] .
\end{align}
The Faddeev-Popov ghost field propagator reads:
\begin{align}
    \begin{gathered}
        \begin{fmffile}{CR_cc}
            \begin{fmfgraph*}(30,30)
                \fmfleft{L}
                \fmfright{R}
                \fmf{dots_arrow}{L,R}
                \fmflabel{$\mu$}{L}
                \fmflabel{$\nu$}{R}
            \end{fmfgraph*}
        \end{fmffile}
    \end{gathered}
    \hspace{20pt} = -\cfrac{i}{p^2} \, \eta_{\mu\nu} \, .
\end{align}
Ghost-perturbations vertex reads:
\begin{align}
    \nonumber \\
    \begin{gathered}
        \begin{fmffile}{CR_cch}
            \begin{fmfgraph*}(40,40)
                \fmfleft{L}
                \fmfright{R1,R2}
                \fmf{dbl_wiggly}{L,V}
                \fmf{dots_arrow}{R1,V,R2}
                \fmfdot{V}
                \fmflabel{$\mu\nu$}{L}
                \fmflabel{$\alpha,p_1$}{R1}
                \fmflabel{$\beta,p_2$}{R2}
            \end{fmfgraph*}
        \end{fmffile}
    \end{gathered}
    \hspace{30pt} = i\, \kappa^{5-d} \left[ (p_2)_\alpha (p_1)_\nu \delta^\beta_\mu + (p_2)_\mu (p_1)_\nu \delta_\alpha^\beta + (p_2)_\alpha (p_2)_\mu \delta_\nu^\beta \right] .
\end{align}
Cubic $\goh^3$ vertex reads:
\begin{align}
    \nonumber \\
    \hspace{20pt}
    \begin{gathered}
        \begin{fmffile}{CR_hhh}
            \begin{fmfgraph*}(40,40)
                \fmfleft{L}
                \fmfright{R1,R2}
                \fmf{dbl_wiggly}{L,V}
                \fmf{dbl_wiggly}{R1,V}
                \fmf{dbl_wiggly}{R2,V}
                \fmfdot{V}
                \fmflabel{$\mu_1\nu_1,p_1$}{L}
                \fmflabel{$\mu_2\nu_2,p_2$}{R1}
                \fmflabel{$\mu_3\nu_3,p_3$}{R2}
            \end{fmfgraph*}
        \end{fmffile}
    \end{gathered}
    \hspace{20pt} =& \cfrac{-i \, \kappa^{5-d} }{2^3 } \Bigg[ (p_1)_{\mu_2} (p_2)_{\mu_1} \eta_{\mu_3\nu_1}\eta_{\nu_3\nu_2} + \cfrac12\, (p_1)_{\mu_2} (p_3)_{\nu_2} \left(  \eta_{\mu_1\mu_3}\eta_{\nu_1\nu_3} - \cfrac{1}{d - 2} \, \eta_{\mu_1\nu_1}\eta_{\mu_3\nu_3} \right) \\
    & - (p_1 \cdot p_2) \left( \eta_{\nu_1\mu_2}\eta_{\nu_2\mu_3}\eta_{\nu_3\mu_1} -\cfrac{1}{d-2} \, \eta_{\mu_1\nu_1}\,\eta_{\mu_2\nu_3}\eta_{\mu_3\nu_2} \right) + \cdots \Bigg] . \nonumber
\end{align}
Cubic $B \goh^2$ vertex reads:
\begin{align}
    \nonumber \\
    \begin{gathered}
        \begin{fmffile}{CR_Bhh}
            \begin{fmfgraph*}(40,40)
                \fmfleft{L}
                \fmfright{R1,R2}
                \fmf{dbl_dashes}{L,V}
                \fmf{dbl_wiggly}{R1,V}
                \fmf{dbl_wiggly}{R2,V}
                \fmfdot{V}
                \fmflabel{${}_{\alpha}^{\rho\sigma}$}{L}
                \fmflabel{$\mu_1\nu_1,p_1$}{R1}
                \fmflabel{$\mu_2\nu_2,p_2$}{R2}
            \end{fmfgraph*}
        \end{fmffile}
    \end{gathered}
    \hspace{20pt}= & \cfrac{1}{2^2} \, \kappa^{4-d}  \Bigg[ (p_1)_\alpha \eta_{\rho\mu_1} \eta_{\sigma\mu_2} \eta_{\nu_1\nu_2}  - (p_1)_\rho \left( \eta_{\alpha\mu_1} \eta_{\sigma\mu_2} \eta_{\nu_1\nu_2} - \cfrac{1}{d-2} \, \eta_{\alpha\mu_2} \eta_{\sigma\nu_2} \eta_{\mu_1\nu_1}  \right)\\
    &  + (p_1)_{\mu_2}  \left( \eta_{\alpha\mu_1} \, \eta_{\sigma\nu_1} \eta_{\rho\nu_2} - \cfrac{1}{d-2} \, \eta_{\alpha\sigma} \eta_{\mu_1\nu_1} \eta_{\nu_2 \rho} \right) + \cdots \Bigg] . \nonumber
\end{align}
Cubic $B^2 \goh$ vertex reads:
\begin{align}
    \nonumber \\
    \begin{gathered}
        \begin{fmffile}{CR_BBh}
            \begin{fmfgraph*}(40,40)
                \fmfleft{L}
                \fmfright{R1,R2}
                \fmf{dbl_wiggly}{L,V}
                \fmf{dbl_dashes}{R1,V}
                \fmf{dbl_dashes}{R2,V}
                \fmfdot{V}
                \fmflabel{${}_{\alpha_1}^{\rho_1\sigma_1}$}{R1}
                \fmflabel{${}_{\alpha_2}^{\rho_2\sigma_2}$}{R2}
                \fmflabel{$\mu\nu$}{L}
            \end{fmfgraph*}
        \end{fmffile}
    \end{gathered}
    \hspace{20pt} =& - \cfrac{1}{2^3} \, \kappa^{3-d} \,  \left[ \left(\delta_{\alpha_1}^{\rho_2}\delta_{\alpha_2}^{\rho_1} - \cfrac{1}{d-1} \, \delta_{\alpha_1}^{\rho_1}\delta_{\alpha_2}^{\rho_2} \right) \delta^{\sigma_1}_{\mu}\delta^{\sigma_2}_{\nu} + \cdots \right].
\end{align}

Concluding this section, let us briefly comment on the limitations of the Cheung-Remmen variables. There are three main disadvantages associated with the Cheung-Remmen variables that are impossible to remove without a significant modification of the formalism. First and foremost, the variables $\sqrt{-\gog} \, \gog^{\mu\nu}$ only bring the Hilbert action to the polynomial form, but fail to do so for higher derivative gravity. Secondly, it is crucial to perform the perturbative expansion of the Cheung-Remmen variables about the flat background, since any other background introduces a linear coupling between an external source (background metric derivatives) and the auxiliary variables $B^{\lambda}_{\mu\nu}$. This condition further limits applicability because the corresponding perturbative expansion applies only to models that admit Minkowski space as a solution.

Last but not least, the coupling between the Cheung-Remmen variables and matter is obscured. If one works with the conventional metric $g_{\mu\nu}$ then typically couples to a factor $\sqrt{-g} \, g^{\mu_1\nu_1} \cdots g^{\mu_n\nu_n}$. Here $n$ depends on the type of matter, for instance, $n=1$ for scalars and $n=2$ for vectors. The same factor calculated in Cheung-Remmen variables receives an additional power of the metric determinant:
\begin{align}
    \sqrt{-g} \, g^{\mu_1\nu_1} \cdots g^{\mu_n\nu_n} = \left( \sqrt{-\gog} \right)^{\frac{2(n-1)}{d-2}} \, \gog^{\mu_1\nu_1} \cdots \gog^{\mu_n\nu_n}.
\end{align}
The previous works \cite{Latosh:2022ydd,Latosh:2023zsi,Latosh:2024lhl} found algorithms to calculate and arbitrary term perturbative expansion of $\sqrt{-g} g^{\mu_1\nu_1} \cdots g^{\mu_n\nu_n}$. One can use the same algorithms to expand $ \sqrt{-\gog} \, \gog^{\mu_1\nu_1} \cdots \gog^{\mu_n\nu_n}$. However, there are no known effective algorithms for working with $\left( \sqrt{-\gog} \right)^{\frac{2(n-1)}{d-2}}$. Certainly, one can use the standard Taylor expansion to calculate the corresponding couplings, but this approach appears to be highly computationally expensive. Further analysis of the existing works which used Cheung-Remmen variables, such as but not limited to \cite{Cheung:2020gyp,Cheung:2020sdj,Abreu:2020lyk,Britto:2021pud}, may provide further leverage in deriving a more efficient way to deal with the corresponding couplings.

For these reasons, the new version of \texttt{FeynGrav}, discussed in the following section, only implements the Cheung-Remmen variables and does not address its coupling to matter. The research for a generalisation of the Cheung-Remmen variables more optimised for the perturbative treatment lies far beyond the scope of this paper.

\section{New features of FeynGrav}\label{Section_FG}

The new version of \texttt{FeynGrav} includes the following changes. Firstly, it implements the new version of Feynman rules with a finite number of interaction vertices in the ghost sector for general relativity and quadratic gravity. Secondly, we implement the Feynman rules for quadratic gravity with the higher derivative gauge fixing term and a finite number of interaction vertices. Thirdly, we implement Feynman rules for Cheung-Remmen variables. Lastly, we introduce some minor changes aimed at improving the package usability. Namely, we introduce commands to operate with the Nieuwenhuizen operators more efficiently, and modify descriptions of some existing commands.

Firstly, let us discuss the implementation of the interaction rules with a finite number of vertices in the ghost sector. The previous version of \texttt{FeynGrav} implemented a set of rules with an infinite number of ghost-graviton vertices. The gauge fixing parameter is also entered both in the graviton propagator and the vertices. The present version introduces the gauge fixing parameter only in the propagators, while the expressions for the vertices are free of it.

We use the same global variable \texttt{GaugeFixingEpsilon} for the gauge fixing parameter. At the initialisation of the package, it is set to $2$, so the user can work with the simple graviton propagator \eqref{GR_propagator_simple}. One can reset the variable with the command
\begin{align}
    \texttt{GaugeFixingEpsilon=.} 
\end{align}
and give it any other value afterwards. The syntax of the command \texttt{GravitonVertex}, which returns graviton vertices, remained unaffected, but the involved expressions no longer contain the gauge fixing parameter.

The new version of \texttt{FeynGrav} provides a command for the Faddeev-Popov vector ghost propagator:
\begin{align}
    \begin{gathered}
        \begin{fmffile}{Ghost_Propagator}
            \begin{fmfgraph*}(30,30)
                \fmfleft{L}
                \fmfright{R}
                \fmf{dots_arrow}{L,R}
                \fmflabel{$\mu$}{L}
                \fmflabel{$\nu$}{R}
            \end{fmfgraph*}
        \end{fmffile}
    \end{gathered}
    \hspace{20pt} = \texttt{GhostVectorPropagator}[\mu,\nu,p].
\end{align}
We implemented this command for usability. One could use a modified command \texttt{GhostPropagator} from \texttt{FeynCalc}, but this appears to be highly inconvenient. Moreover, we introduce a similar command for the higher derivative gauge fixing term.

The set of Feynman rules contains a single graviton-ghost vertex, so there is no need for multiple libraries describing such a coupling. The command is built in the main package and corresponds to the following vertex:
\begin{align}
\nonumber \\
    \begin{gathered}
        \begin{fmffile}{Ghost_Graviton_Vertex_1}
            \begin{fmfgraph*}(40,40)
                \fmfleft{L}
                \fmfright{R1,R2}
                \fmf{dbl_wiggly}{L,V}
                \fmf{dots_arrow}{R1,V}
                \fmf{dots_arrow}{V,R2}
                \fmfdot{V}
                \fmflabel{$\rho\sigma,k$}{L}
                \fmflabel{$\mu,p_1$}{R1}
                \fmflabel{$\nu,p_2$}{R2}
            \end{fmfgraph*}
        \end{fmffile}
    \end{gathered} 
    = \texttt{GravitonGhostVertex}[\rho,\sigma,k,\mu,p_1,\nu,p_2] . \\ \nonumber
\end{align}

For the quadratic gravity, we implemented similar changes. The \texttt{QuadraticGravityPropagator} command inherits its syntax from the previous versions, but does not use the specified gauge fixing term with the same global gauge fixing parameter \texttt{GaugeFixingEpsilon}. The command \texttt{QuadraticGravityPropagator} gives the corresponding vertices described in the previous publications \cite{Latosh:2022ydd,Latosh:2023zsi,Latosh:2024lhl}, but it is not free from the gauge fixing parameter.

The higher derivative gauge fixing term requires the introduction of new commands. First and foremost, we introduced four gauge fixing parameters for the gauge fixing term: $\varepsilon$, $\varepsilon_0$, $\varepsilon_1$, and $\varepsilon_2$. For the case of the flat background, parameters $\varepsilon_1$ and $\varepsilon_2$ enter all the expressions as a sum. Therefore, for the sake of simplicity, we set $\varepsilon_2=0$. Because of this, \texttt{FeynGrav} has three higher derivative gauge fixing parameters: \texttt{GaugeFixingEpsilonHD}, \texttt{GaugeFixingEpsilonHD0}, and \texttt{GaugeFixingEpsilonHD1} corresponding to $\varepsilon$, $\varepsilon_0$, and $\varepsilon_1$. In contrast to the conventional gauge fixing, we do not define these parameters at the initialisation, and they remain unspecified global constants. Commands defining gravity propagator, ghost propagator, and graviton-ghost vertex are: \texttt{QuadraticGravityPropagatorHD}, \texttt{GhostVectorPropagatorHD}, and \texttt{GravitonGhostVertexHD}. They inherit the exact syntax as their conventional counterparts.

Further, we introduce commands to work with the Cheung-Remmen variables. There are two propagators, one for the perturbative variable $\goh^{\mu\nu}$, and one for the auxiliary variable $B^\lambda_{\alpha\beta}$:
\begin{align}
    &
    \begin{gathered}
        \begin{fmffile}{CR_hh}
            \begin{fmfgraph*}(30,30)
                \fmfleft{L}
                \fmfright{R}
                \fmf{dbl_wiggly}{L,R}
                \fmflabel{$\mu\nu$}{L}
                \fmflabel{$\alpha\beta$}{R}
            \end{fmfgraph*}
        \end{fmffile}
    \end{gathered}
    \hspace{20pt} = \texttt{GravitonPropagatorCR} [\mu,\nu,\alpha,\beta,p],\\
    &
    \begin{gathered}
        \begin{fmffile}{CR_BB}
            \begin{fmfgraph*}(30,30)
                \fmfleft{L}
                \fmfright{R}
                \fmf{dbl_dashes}{L,R}
                \fmflabel{${}^{\lambda_1}_{\alpha_1\beta_1}$}{L}
                \fmflabel{${}^{\lambda_2}_{\alpha_2\beta_2}$}{R}
            \end{fmfgraph*}
        \end{fmffile}
    \end{gathered}
    \hspace{20pt} = \texttt{GravitonPropagatorAuxiliaryCR}[\lambda_1,\alpha_1\beta_1,\lambda_2,\alpha_2\beta_2],
    \end{align}
The gauge fixing term for the Cheung-Remmen variables only influences the propagator for $\goh$ and does not enter any other expressions. This parameter, called \texttt{GaugeFixingEpsilonCR}, is a global constant. Similar to the gauge fixing parameter of general relativity, it is set to be $-1/2$ at the initialisation, so the $\goh$ propagator has a simple form \eqref{CR_propagator_hh}. The only difference present in the particular realisation of these rules is the difference in the overall factor in the $\goh$ propagator. The propagator contains a factor $\kappa^{d-4}$ which we omit for the sake of brevity. 

The propagator for $B^\lambda_{\alpha\beta}$ is algebraic (since this is an auxiliary field), so it does not contain any momenta. The model also includes the ghost propagator, which matches the ghost propagator in general relativity and is given by the same command.

The following rules describe interaction vertices:
    \begin{align}
    &
    \begin{gathered}
        \begin{fmffile}{CR_hhh}
            \begin{fmfgraph*}(35,35)
                \fmfleft{L}
                \fmfright{R1,R2}
                \fmf{dbl_wiggly}{L,V}
                \fmf{dbl_wiggly}{R1,V}
                \fmf{dbl_wiggly}{R2,V}
                \fmfdot{V}
                \fmflabel{$\mu_1\nu_1,p_1$}{L}
                \fmflabel{$\mu_2\nu_2,p_2$}{R1}
                \fmflabel{$\mu_3\nu_3,p_3$}{R2}
            \end{fmfgraph*}
        \end{fmffile}
    \end{gathered}
    = \texttt{GravitonVertexCRhhh}[\mu_1,\nu_1,p_1,\mu_2,\nu_2,p_2,\mu_3,\nu_3,p_3] , \\ \nonumber \\ \nonumber \\
    &
    \begin{gathered}
        \begin{fmffile}{CR_Bhh}
            \begin{fmfgraph*}(35,35)
                \fmfleft{L}
                \fmfright{R1,R2}
                \fmf{dbl_dashes}{L,V}
                \fmf{dbl_wiggly}{R1,V}
                \fmf{dbl_wiggly}{R2,V}
                \fmfdot{V}
                \fmflabel{${}_{\alpha}^{\rho\sigma}$}{L}
                \fmflabel{$\mu_1\nu_1,p_1$}{R1}
                \fmflabel{$\mu_2\nu_2,p_2$}{R2}
            \end{fmfgraph*}
        \end{fmffile}
    \end{gathered}
    = \texttt{GravitonVertexCRBhh}[\alpha,\rho,\sigma,\mu_1,\nu_1,p_1,\mu_2,\nu_2,p_2] , \\ \nonumber \\ \nonumber \\
    &
    \begin{gathered}
        \begin{fmffile}{CR_BBh}
            \begin{fmfgraph*}(35,35)
                \fmfleft{L}
                \fmfright{R1,R2}
                \fmf{dbl_wiggly}{L,V}
                \fmf{dbl_dashes}{R1,V}
                \fmf{dbl_dashes}{R2,V}
                \fmfdot{V}
                \fmflabel{${}_{\alpha_1}^{\rho_1\sigma_1}$}{R1}
                \fmflabel{${}_{\alpha_2}^{\rho_2\sigma_2}$}{R2}
                \fmflabel{$\mu\nu$}{L}
            \end{fmfgraph*}
        \end{fmffile}
    \end{gathered}
    = \texttt{GravitonVertexCRBBh}[\alpha_1,\rho_1,\sigma_1,\alpha_2,\rho_2,\sigma_2,\mu,\nu]. \\ \nonumber
\end{align}

The new version of \texttt{FeynGrav} received three more commands that help to operate with the Nieuwenhuizen operators (see Appendix \ref{Appendix_Nieuwenhuizen}). Command \texttt{NieuwenhuizenOperator} returns a linear combination with the given weights:
\begin{align}
    \begin{split}
        & \texttt{NieuwenhuizenOperator}[z_1,z_2,z_0,\overline{z}_0,\overline{\overline{z}}_0,\mu,\nu,\alpha,\beta,p] \\
        & = z_1 P^1_{\mu\nu\alpha\beta}(p) + z_2 P^2_{\mu\nu\alpha\beta}(p) + z_0 P^0_{\mu\nu\alpha\beta}(p) + \overline{z}_0 \overline{P}^0_{\mu\nu\alpha\beta}(p) + \overline{\overline{z}}_0 \overline{\overline{P}}^0_{\mu\nu\alpha\beta}(p).
    \end{split}
\end{align}
Command \texttt{NieuwenhuizenOperatorInverse} returns a linear combination of Nieuwenhuizen operators that is inverse to a given operator
\begin{align}
    \begin{split}
        & \texttt{NieuwenhuizenOperatorInverse}[z_1,z_2,z_0,\overline{z}_0,\overline{\overline{z}}_0,\mu,\nu,\alpha,\beta,p] \\
        & = \left( z_1 P^1_{\mu\nu\alpha\beta}(p) + z_2 P^2_{\mu\nu\alpha\beta}(p) + z_0 P^0_{\mu\nu\alpha\beta}(p) + \overline{z}_0 \overline{P}^0_{\mu\nu\alpha\beta}(p) + \overline{\overline{z}}_0 \overline{\overline{P}}^0_{\mu\nu\alpha\beta}(p) \right)^{-1}.
    \end{split}
\end{align}
Command \texttt{NieuwenhuizenOperatorExpansion} expands a given operator in Nieuwenhuizen operators and returns the corresponding coordinates:
\begin{align}
    \begin{split}
        & \texttt{NieuwenhuizenOperatorExpansion}[ \texttt{NieuwenhuizenOperator}[z_1,z_2,z_0,\overline{z}_0,\overline{\overline{z}}_0,\mu,\nu,\alpha,\beta,p] ,\mu,\nu,\alpha,\beta,p] \\
        & = \{z_1,z_2,z_0,\overline{z}_0,\overline{\overline{z}}_0\} .
    \end{split}
\end{align}

Lastly, all \texttt{FeynGrav} commands received updated descriptions. Within \texttt{Mathematica}, a user can call for information about a command. For instance, ``\texttt{?GravitonPropagator}'' displays brief information about the command \texttt{GravitonPropagator}. All information about all commands present in \texttt{FeynGrav} is updated.

\section{Conclusions}\label{Section_Conclusion}

The new version of FeynGrav implements advanced developments in techniques for calculating Feynman rules. Firstly, we developed a BRST formulation of gravity that results in a finite set of rules for the Faddeev-Popov ghost sector. Such a simplification is possible because one can separate the background geometry from the metric perturbations that propagate on it. In turn, the gauge transformations split into two parts. One corresponds to the freedom to choose the reference frame on the background spacetime. The other one corresponds to the true gauge redundancy of the gravitational field description. Secondly, we implemented the gravity formulation in terms of the Cheung-Remmen variables. These variables admit a non-linear relation with the standard metric and bring the Hilbert action to a polynomial form. In turn, one can obtain a finite set of Feynman rules for these variables, which dramatically simplifies all the computations with the standard Feynman diagrammatic.

Further development of \texttt{FeynGrav} clearly appears in the light of the presented results. First and foremost, we shall look for a way to bring an arbitrary gravity theory to a finite set of rules with the use of a generalisation of Cheung-Remmen variables. If such variables are found, the technical side of the computations in quantum gravity will be significantly simplified. Although the discovery of such variables will not resolve the problem of non-renormalizability, they will strongly advance all the computations.

Secondly, we will search for ways to improve \texttt{FeynGrav} performance. In the present version, the most resource-demanding processes of \texttt{FeynGrav} are related to the generation of symmetric expressions for vertices. On the other hand, there is at least one algorithm \cite{Portugal:1998qi} that can significantly reduce the computational time for all the procedures implemented in \texttt{FeynGrav}.

Last but not least, we shall pursue the development and implementation of the other gravity model. The two main candidates are massive gravity and supergravity. The massive gravity is perspective because the corresponding set of interaction rules can help obtain new exact results, which can be used to establish constraints on the graviton mass. Supergravity is considered a perspective because it is a gravitational model with better renormalisation behaviour than general relativity. The opportunity to work with the corresponding set of Feynman rules will help to compare two theories and study their difference directly.

\section*{Acknowledgment}
The work was supported by the Foundation for the Advancement of Theoretical Physics and Mathematics “BASIS”.

\appendix

\section{Nieuwenhuizen operators}\label{Appendix_Nieuwenhuizen}

Nieuwenhuizen operators \cite{VanNieuwenhuizen:1973fi,Hindawi:1995an,Accioly:2000nm} are generalisations of the standard gauge projectors. There are two gauge projectors for gauge models with spin $1$:
\begin{align}
    \theta_{\mu\nu}(p) &\overset{\text{def}}{=} \eta_{\mu\nu} - \cfrac{p_\mu p_\nu}{p^2} \, ,& \omega_{\mu\nu}(p) &\overset{\text{def}}{=} \cfrac{p_\mu p_\nu}{p^2} \, , & \theta_{\mu\nu}(p) + \omega_{\mu\nu}(p) & = \eta_{\mu\nu} .
\end{align}
We use the following definition of the Nieuwenhuizen operators:
\begin{align}
    \begin{split}
        P^1_{\mu\nu\alpha\beta}(p) &= \cfrac{1}{2} \left[ \theta_{\mu\alpha}(p) \omega_{\nu\beta}(p) + \theta_{\mu\beta}(p) \omega_{\nu\alpha}(p) + \theta_{\nu\alpha}(p) \omega_{\mu\beta}(p) + \theta_{\nu\beta}(p) \omega_{\mu\alpha}(p) \right] , \\ 
        P^2_{\mu\nu\alpha\beta}(p) &= \cfrac{1}{2} \left[ \theta_{\mu\alpha}(p) \theta{\nu\beta}(p) + \theta_{\mu\beta}(p) \theta{\nu\alpha}(p)  \right] - \cfrac13 \, \theta_{\mu\nu}(p) \theta_{\alpha\beta}(p) ,\\
        P^0_{\mu\nu\alpha\beta}(p) &= \cfrac13 \, \theta_{\mu\nu}(p) \theta_{\alpha\beta}(p) , \\
        \overline{P}^0_{\mu\nu\alpha\beta}(p) &= \omega_{\mu\nu}(p) \omega_{\alpha\beta}(p) , \\
        \overline{\overline{P}}^0_{\mu\nu\alpha\beta}(p) &=  \theta_{\mu\nu}(p) \omega_{\alpha\beta}(p) + \omega_{\mu\nu}(p) \theta_{\alpha\beta}(p) .
    \end{split}
\end{align}
Four of these operators form a complete set of projectors:
\begin{align}
    P^1_{\mu\nu\alpha\beta}(p) + P^2_{\mu\nu\alpha\beta}(p) + P^0_{\mu\nu\alpha\beta}(p) + \overline{P}^0_{\mu\nu\alpha\beta}(p) = I_{\mu\nu\alpha\beta} \,.
\end{align}
The operator $\overline{\overline{P}}^0$ is introduced purely for the sake of convenience and can be safely omitted. It is orthogonal to $P^1$ and $P^2$ and admits the following relations:
\begin{align}
    \begin{split}
        \overline{\overline{P}}^0_{\mu\nu\rho\sigma} \overline{\overline{P}}^0{}^{\rho\sigma}{}_{\alpha\beta} & = 3 \left( P^0 + \overline{P}^0 \right) , \\
        P^0_{\mu\nu\rho\sigma} \overline{\overline{P}}^0{}^{\rho\sigma}{}_{\alpha\beta} = \overline{\overline{P}}^0_{\mu\nu\rho\sigma} \overline{P}^0{}^{\rho\sigma}{}_{\alpha\beta} & = \theta_{\mu\nu} \, \omega_{\alpha\beta} , \\
        \overline{P}^0_{\mu\nu\rho\sigma} \overline{\overline{P}}^0{}^{\rho\sigma}{}_{\alpha\beta} = \overline{\overline{P}}^0_{\mu\nu\rho\sigma} P^0{}^{\rho\sigma}{}_{\alpha\beta} & = \omega_{\mu\nu} \, \theta^{\alpha\beta} .
    \end{split}
\end{align}
These relations allow one to obtain a formula for the inversion of an operator. If operator $\mathcal{O}$ reads
\begin{align}
    \mathcal{O}_{\mu\nu\alpha\beta} = z_1 P^1_{\mu\nu\alpha\beta} + z_2 P^2_{\mu\nu\alpha\beta} + z_0 P^0_{\mu\nu\alpha\beta} + \overline{z}_0 \overline{P}^0_{\mu\nu\alpha\beta} + \overline{\overline{z}}_0 \overline{\overline{P}}^0_{\mu\nu\alpha\beta}, 
\end{align}
then the inverse operator reads \cite{Accioly:2000nm}:
\begin{align}
    \begin{split}
        (\mathcal{O}^{-1})_{\mu\nu\alpha\beta} =& \cfrac{1}{z_1} P^1_{\mu\nu\alpha\beta} + \cfrac{1}{z_2} P^2_{\mu\nu\alpha\beta} + \cfrac{1}{z_2} \cfrac{ (d-4) ( z_0 \overline{z}_0 - 3 \, \overline{\overline{z}}_0^2) - (d-7) \, z_2 \overline{z}_0   }{ (d-1) ( z_0 \overline{z}_0 - 3 \, \overline{\overline{z}}_0^2) - (d-4) \, z_2 \overline{z}_0 } P^0_{\mu\nu\alpha\beta} \\
        & + \cfrac{ (d-1) z_0 - (d-4) \, z_2  }{ (d-1) ( z_0 \overline{z}_0 - 3 \, \overline{\overline{z}}_0^2) - (d-4) \, z_2 \overline{z}_0 } \overline{P}^0_{\mu\nu\alpha\beta} - \cfrac{ 3 \, \overline{\overline{z}}_0 }{ (d-1) ( z_0 \overline{z}_0 - 3 \, \overline{\overline{z}}_0^2) - (d-4) \, z_2 \overline{z}_0 } \, \overline{\overline{P}}^0_{\mu\nu\alpha\beta}. 
    \end{split}
\end{align}

\section{Command reference}

This appendix provides tables summarising the commands implemented in \texttt{FeynGrav}. The typical workflow is to import the required library of vertices for a chosen model, use those vertices together with propagators and assemble amplitudes or loop integrands. For convenience, the tables below collect the relevant commands, with short descriptions, and indicate the default values of optional arguments and the gauge-fixing parameters that enter the propagators. We also plan to provide an online reference that presents up-to-date documentation for all \texttt{FeynGrav} commands and options.

\begin{table}[!ht]
  \centering
  \setlength{\tabcolsep}{4pt}
  \renewcommand{\arraystretch}{1.1}
  \begin{tabular}{|l|p{8.1cm}|}
    \hline
    \textbf{Command} & \textbf{Description} \\ \hline

    \texttt{importGravitons[n]} & Imports graviton vertices in general relativity. \\ \hline
    \texttt{importScalars[n]} & Imports graviton--scalar vertices for the minimal coupling. \\ \hline
    \texttt{importFermions[n]} & Imports graviton--Dirac fermion vertices for the minimal coupling. \\ \hline
    \texttt{importVectors[n]} & Imports graviton--vector vertices for the minimal coupling. \\ \hline
    \texttt{importSUNYM[n]} & Imports graviton--$\operatorname{SU}(\operatorname{N})$ Yang--Mills vertices. \\ \hline
    \texttt{importAxionVectorVertex[n]} & Imports graviton--axion--vector interaction vertices. \\ \hline

    \texttt{importHorndeskiG2[]} & Imports vertices for the $G_2$ sector of the Horndeski theory. \\ \hline
    \texttt{importHorndeskiG3[]} & Imports vertices for the $G_3$ sector of the Horndeski theory. \\ \hline
    \texttt{importHorndeskiG4[]} & Imports vertices for the $G_4$ sector of the Horndeski theory. \\ \hline
    \texttt{importHorndeskiG5[]} & Imports vertices for the $G_5$ sector of the Horndeski theory. \\ \hline
    \texttt{importScalarGaussBonnet[n]} & Imports scalar--Gauss--Bonnet interaction vertices. \\ \hline
    \texttt{importQuadraticGravity[n]} & Imports graviton vertices in quadratic gravity. \\ \hline
  \end{tabular}

  \renewcommand{\arraystretch}{1}
  \vspace{0.5ex}
  \parbox{0.92\linewidth}{\textit{Note:} Commands import vertices up to $\okappa{n}$. The parameter defaluts to $n=2$. If libraries for the requested order are not available, the command imports up to the highest existing order.}

  \caption{Commands that import libraries for models with minimal coupling to gravity and for selected modified-gravity models.}
  \label{Table_Imports}
\end{table}

\begin{table}[!ht]
  \centering
  \setlength{\tabcolsep}{4pt}
  \renewcommand{\arraystretch}{1.1}
  \begin{tabular}{|l|p{7cm}|}
    \hline
    \textbf{Command} & \textbf{Description} \\ \hline
    \texttt{GravitonScalarVertex} & Returns graviton--scalar vertices from the scalar kinetic term. \\ \hline
    \texttt{GravitonScalarPotentialVertex} & Returns graviton--scalar vertices from the scalar potential term. \\ \hline
    \texttt{GravitonFermionVertex} & Returns graviton--Dirac fermion vertices. \\ \hline
    \texttt{GravitonMassiveVectorVertex} & Returns graviton--massive-vector vertices (Proca field). \\ \hline
    \texttt{GravitonVectorVertex} & Returns graviton--massless-vector vertices. \\ \hline
    \texttt{GravitonVectorGhostVertex} & Returns graviton--Faddeev--Popov-ghost vertices for a massless vector field. \\ \hline
    \texttt{GravitonGluonVertex} & Returns graviton couplings to two, three, and four gluons. \\ \hline
    \texttt{GravitonQuarkGluonVertex} & Returns graviton contributions to the quark--gluon interaction vertex. \\ \hline
    \texttt{GravitonYMGhostVertex} & Returns graviton--Faddeev--Popov-ghost vertices in Yang--Mills theory (gluons). \\ \hline
    \texttt{GravitonGluonGhostVertex} & Returns graviton vertices involving gluons and Yang--Mills ghosts. \\ \hline
    \texttt{GravitonVertex} & Returns graviton self-interaction vertices in general relativity. \\ \hline
    \texttt{GravitonGhostVertex} & Returns graviton--Faddeev--Popov-ghost vertices in general relativity. \\ \hline
    \texttt{GravitonGhostVertexHD} & Returns graviton--Faddeev--Popov-ghost vertices for the higher-derivative gauge fixing term. \\ \hline
    \texttt{GravitonAxionVectorVertex} & Returns graviton vertices for an axion-like scalar coupling to one vector field. \\ \hline
    \texttt{HorndeskiG2} & Returns vertices for the Horndeski $G_2$ sector. \\ \hline
    \texttt{HorndeskiG3} & Returns vertices for the Horndeski $G_3$ sector. \\ \hline
    \texttt{HorndeskiG4} & Returns vertices for the Horndeski $G_4$ sector. \\ \hline
    \texttt{HorndeskiG5} & Returns vertices for the Horndeski $G_5$ sector. \\ \hline
    \texttt{ScalarGaussBonnet} & Returns vertices for scalar--Gauss--Bonnet interactions (scalars and gravitons). \\ \hline
    \texttt{QuadraticGravityVertex} & Returns graviton self-interaction vertices in quadratic gravity. \\ \hline
  \end{tabular}
  \renewcommand{\arraystretch}{1}
  \caption{List of commands describing interaction vertices (excluding Cheung--Remmen variables).}\label{Table_Vertices}
\end{table}

\begin{table}[!ht]
  \centering
  \setlength{\tabcolsep}{4pt}
  \renewcommand{\arraystretch}{1.1}
  \begin{tabular}{|l|p{7cm}|}
    \hline
    \textbf{Command} & \textbf{Description} \\ \hline
    \texttt{GravitonVertexCRhhh} & Returns Cheung--Remmen vertices for the $\goh\goh\goh$ interaction. \\ \hline
    \texttt{GravitonVertexCRBhh} & Returns Cheung--Remmen vertices for the $B\goh\goh$ interaction. \\ \hline
    \texttt{GravitonVertexCRBBh} & Returns Cheung--Remmen vertices for the $BB\goh$ interaction. \\ \hline
  \end{tabular}
  \renewcommand{\arraystretch}{1}
  \caption{List of commands describing interaction vertices for Cheung--Remmen variables.}\label{Table_Vertices_CR}
\end{table}

\begin{table}[!ht]
  \centering
  \setlength{\tabcolsep}{4pt}
  \renewcommand{\arraystretch}{1.1}
  \begin{tabular}{|l|p{7cm}|}
    \hline
    \textbf{Command} & \textbf{Description} \\ \hline
    \texttt{ScalarPropagator} & Scalar propagator. \\ \hline
    \texttt{ProcaPropagator} & Propagator of a massive vector field. \\ \hline
    \texttt{GravitonPropagator} & Graviton propagator. \\ \hline
    \texttt{GravitonPropagatorMassive} & Massive graviton propagator. \\ \hline
    \texttt{QuadraticGravityPropagator} & Graviton propagator in quadratic gravity. \\ \hline
    \texttt{GhostVectorPropagator} & Faddeev--Popov vector ghost propagator. \\ \hline
    \texttt{QuadraticGravityPropagatorHD} & Quadratic-gravity graviton propagator with higher-derivative gauge fixing. \\ \hline
    \texttt{GhostVectorPropagatorHD} & Vector ghost propagator for the higher-derivative gauge fixing term. \\ \hline
    \texttt{GravitonPropagatorCR} & Cheung--Remmen propagator for the perturbative field $\goh^{\mu\nu}$. \\ \hline
    \texttt{GravitonPropagatorAuxiliaryCR} & Cheung--Remmen propagator for the auxiliary field $B^\lambda_{\alpha\beta}$ (algebraic; momentum-free). \\ \hline
  \end{tabular}

  \renewcommand{\arraystretch}{1}
  \vspace{0.5ex}
  \parbox{0.92\linewidth}{\textit{Note:}
  The gauge-fixing parameter \texttt{GaugeFixingEpsilon} enters only propagators in the GR and quadratic-gravity ghost sector.
  For the higher-derivative gauge fixing term, propagators depend on \texttt{GaugeFixingEpsilonHD}, \texttt{GaugeFixingEpsilonHD0}, and \texttt{GaugeFixingEpsilonHD1}.
  For Cheung--Remmen variables, the gauge-fixing parameter \texttt{GaugeFixingEpsilonCR} affects only \texttt{GravitonPropagatorCR}.}

  \caption{List of commands for propagators.}\label{Table_Propagators}
\end{table}

\clearpage
\bibliographystyle{elsarticle-num}
\bibliography{FG4.bib}

\end{document}